\documentclass[pra,onecolumn,superscriptaddress,nofootinbib]{revtex4}
\usepackage[a4paper, left=2cm, right=2cm, top=2cm, bottom=2cm]{geometry}

\usepackage[english]{babel}
\usepackage[utf8]{inputenc}
\usepackage[T1]{fontenc}
\usepackage{amsthm}
\newtheorem{thm}{Theorem}[section]
\newtheorem{cor}{Corollary}[section]
\newtheorem{lem}{Lemma}[section]

\newtheorem{quest}{Question}[thm]
\newtheorem{fact}{Fact}[section]
\newtheorem{prop}{Proposition}[section]

\newtheorem{defn}{Definition}[section]
\usepackage{comment}

\usepackage{amsmath}
\usepackage{graphicx}
\usepackage{mathtools}
\mathtoolsset{showonlyrefs=true}
\usepackage[colorinlistoftodos]{todonotes}
\usepackage[colorlinks=true, allcolors=blue]{hyperref}
\usepackage{bbm,microtype,mathrsfs,amsmath,amssymb,color,amsthm,graphicx,bm} 
\usepackage{tikz-cd}
\usepackage{xcolor}
\usepackage{braket}
\usepackage{thm-restate}


\newcommand{\CD}{\mathcal{D}}
\newcommand{\vCD}{\bm{\mathcal{D}}}
\newcommand{\tvCD}{\tilde{\bm{\mathcal{D}}}}

\newcommand{\BE }{\mathbb{E}}

\newcommand{\CN}{\mathcal{N}}

\newcommand{\CO}{\mathcal{O}}

\newcommand{\BR}{\mathbb{R}}

\newcommand{\vA}{\bm{A}}
\newcommand{\tvA}{\tilde{\bm{A}}}

\newcommand{\vB}{\bm{B}}
\newcommand{\vC}{\bm{C}}
\newcommand{\vD}{\bm{D}}

\newcommand{\vE}{\bm{E}}

\newcommand{\vG}{\bm{G}}
\newcommand{\vH}{\bm{H}}
\newcommand{\tvH}{\tilde{\bm{H}}}
\newcommand{\vI}{\mathbf{I}}

\newcommand{\vM}{\bm{M}}

\newcommand{\vO}{\bm{O}}
\newcommand{\vP}{\bm{P}}
\newcommand{\tvP}{\tilde{\bm{P}}}
\newcommand{\vQ}{\bm{Q}}
\newcommand{\tvQ}{\tilde{\bm{Q}}}
\newcommand{\vR}{\bm{R}}
\newcommand{\tvR}{\tilde{\bm{R}}}

\newcommand{\vS}{\bm{S}}
\newcommand{\tvS}{\tilde{\bm{S}}}

\newcommand{\vU}{\bm{U}}
\newcommand{\tvU}{\tilde{\bm{U}}}
\newcommand{\vV}{\bm{V}}

\newcommand{\vx}{\bm{x}}
\newcommand{\vX}{\bm{X}}
\newcommand{\vy}{\bm{y}}
\newcommand{\vY}{\bm{Y}}
\newcommand{\vZ}{\bm{Z}}
\newcommand{\vsigma}{\bm{\sigma}}
\newcommand{\vrho}{\bm{\rho}}

\renewcommand{\L}{\left}
\newcommand{\R}{\right}

\newcommand{\bomega}{\bar{\omega}}

\newcommand{\dagg}{\dagger}

\newcommand{\AC}[1]{ {\color{blue} [(Anthony:) #1]}}

\newcommand*{\tr}{\operatorname{Tr}}
\newcommand*{\btr}{\operatorname{\overline{Tr}}}
\newcommand*{\poly}{\mathrm{Poly}}

\newcommand{\vertiiismall}[1]{{\vert\kern-0.25ex\vert\kern-0.25ex\vert #1 
    \vert\kern-0.25ex\vert\kern-0.25ex\vert}}

\newcommand{\vertiii}[1]{{\left\vert\kern-0.25ex\left\vert\kern-0.25ex\left\vert #1 
    \right\vert\kern-0.25ex\right\vert\kern-0.25ex\right\vert}}
\newcommand{\vertiiiNoLR}[1]{{\bigg\vert\kern-0.25ex\bigg\vert\kern-0.25ex\bigg\vert #1 
    \bigg\vert\kern-0.25ex\bigg\vert\kern-0.25ex\bigg\vert}}
\newcommand{\lvertiii}{\bigg\vert\kern-0.25ex\bigg\vert\kern-0.25ex\bigg\vert }
\newcommand{\rvertiii}{\bigg\vert\kern-0.25ex\bigg\vert\kern-0.25ex\bigg\vert }
\newcommand{\norm}[1]{\Vert {#1} \Vert}
\newcommand{\lnorm}[1]{\left\Vert {#1} \right\Vert}

\newcommand{\normp}[2]{\norm{#1}_{#2}}

\newcommand{\labs}[1]{\left\vert {#1} \right\vert}
\newcommand{\abs}[1]{\vert {#1}\vert}
\newcommand{\e}{\mathrm{e}}
\newcommand{\iunit}{\mathrm{i}}
\newcommand{\ri}{\mathrm{i}}

\newcommand{\indicator}{\mathbbm{1}}
\newcommand{\Expect}{\operatorname{\mathbb{E}}}

\newcommand{\diff}{\mathrm{d}}
\newcommand{\idiff}{\,\mathrm{d}}

\begin{document}
\title{ 
Sparse random Hamiltonians are quantumly easy
}

\author{Chi-Fang (Anthony) Chen}
\affiliation{Institute for Quantum Information and Matter,
California Institute of Technology, Pasadena, CA, USA}
\affiliation{AWS Center for Quantum Computing, Pasadena, CA}

\author{Alexander M. Dalzell}
\affiliation{AWS Center for Quantum Computing, Pasadena, CA}
\affiliation{California Institute of Technology, Pasadena, CA, USA}

\author{Mario Berta}
\affiliation{Institute for Quantum Information, RWTH Aachen University, Aachen, Germany}
\affiliation{Department of Computing, Imperial College London, London, UK}

\author{Fernando G.S.L. Brand\~ao}
\affiliation{AWS Center for Quantum Computing, Pasadena, CA}
\affiliation{Institute for Quantum Information and Matter,
California Institute of Technology, Pasadena, CA, USA}

\author{Joel A.~Tropp}
\affiliation{Department of Computing and Mathematical Sciences, Caltech, Pasadena, CA, USA}


\begin{abstract}
A candidate application for quantum computers is to simulate the low-temperature properties of quantum systems. 
For this task, there is a well-studied quantum algorithm that performs quantum phase estimation on an initial trial state that has a nonnegligible overlap with a low-energy state. 
However, it is notoriously hard to give theoretical guarantees that such a trial state can be prepared efficiently. Moreover, the heuristic proposals that are currently available, such as with adiabatic state preparation, appear insufficient in practical cases.



This paper shows that, for most random sparse Hamiltonians, the maximally mixed state is a sufficiently good trial state and phase estimation efficiently prepares states with energy arbitrarily close to the ground energy.
Furthermore, any low-energy state must have nonnegligible quantum circuit complexity, suggesting that low-energy states are classically nontrivial and phase estimation is the optimal method for preparing such states (up to polynomial factors).  These statements hold for two models of random Hamiltonians: (i) a sum of random signed Pauli strings and (ii) a random signed $d$-sparse Hamiltonian.
The main technical argument is based on some new results in nonasymptotic random matrix theory.   In particular, a refined concentration bound for the spectral density is required to obtain complexity guarantees for these random Hamiltonians.

\end{abstract}
\maketitle



\section{Introduction}


What are quantum computers good at? The earliest (and still most compelling) candidates are factoring~\cite{Factoring_Shor} and simulation of quantum systems~\cite{feynman1982simulating,lloyd1996universal,low2017optimal}. 
While Shor's celebrated quantum algorithm for factoring~\cite{Factoring_Shor} settles the quantum complexity of factoring, the complexity of quantum simulation at low energies has not been resolved.  Indeed, we know from complexity theory that the ground energy problem for general local Hamiltonians is QMA-hard in the worst case \cite{kitaev2002classical}, so we anticipate that preparing ground states is generally intractable, even for quantum computers.   This worst-case hardness persists for systems with additional physical constraints, including nearest-neighbor interaction in 1D \cite{aharonov2009power} and translation invariance \cite{gottesman2009quantum}.
Although these results send a pessimistic signal, they merely indicate that any \textit{proof} that the ground-energy problem
is quantumly easy must further constrain the class of Hamiltonians, or else it can apply only for typical instances. Indeed, one may construct random families of Hamiltonians (Section~\ref{sec:related_models}) in the hope that their \textit{average-case} complexity might be more favorable than the worst case.

Aside from complexity theory, the problem of preparing low-energy states arises in efforts to apply quantum computers to computational chemistry (for example, see~\cite{mcardle2020quantum}) and to condensed matter physics. To prepare a state of sufficiently low energy on a quantum computer, which can be used, e.g., for understanding chemical reaction pathways, a proposed quantum algorithm simply runs phase estimation on an initial trial state~\cite{THC_google,2021_Microsoft_catalysis,babbush2018low,chamberland2020building}. This method is efficient if the initial state has a nonnegligible overlap with a low-energy state.  Although the phase estimation part of the algorithm is well understood, we have an incomplete understanding of the time required to prepare a good initial state.  In fact, recent numerical tests~\cite{Isthere_22_lee} suggest that, for some chemical systems, easily preparable initial states may have exponentially small (in system size) overlap. Moreover, preparing states with good overlap using the adiabatic algorithm may take exponential time, significantly impacting the end-to-end performance of the proposed quantum algorithm. 

The search for tasks that are easy for quantum computers, in quantum chemistry or otherwise, is often implicitly a quest for \textit{quantum advantage}: quantum computers can be particularly helpful if the task is also classically hard. Unfortunately, proving classical hardness is challenging, and many once-promising candidates for classically hard problems have now been \textit{dequantized}.  For example, under certain classical access models, recent progress eliminates exponential quantum advantage in low-rank linear algebra tasks~\cite{Tang2018QuantumPC, Gilyn2018QuantuminspiredLS, Chia2019SamplingbasedSL,quantum_inspired_Tang_2019}. Still, hope remains that the Hamiltonian low-energy problem could provide a quantum advantage~\cite{Gharibian_2022}.
%
%
With these thoughts in mind, the guiding question of this work is the following.
\begin{quote}
\textit{``Is there a classically nontrivial Hamiltonian whose low-energy states are provably easy to prepare?''}\label{eq:guiding_question}
\end{quote}

Our work argues in the affirmative.
In particular, we will show that the textbook phase estimation method (discussed above) works well for preparing low-energy states of a typical random sparse Hamiltonians.  Meanwhile, the low-energy states must have a large quantum circuit complexity, so they are plausibly nontrivial for classical computers to simulate.

The paper is organized as follows.  First, we review relevant classes of Hamiltonians (Section~\ref{sec:related_models}) before presenting the main Hamiltonian model and the main results (Section~\ref{sec:main_results}). Our proof strategy (Section~\ref{sec:proof_ideas}) exploits tools from nonasymptotic random matrix theory; Section~\ref{sec:random-matrix} contains further details and context.  Last, we discuss the \textit{classical} complexity of the low-energy problem, and we lay out future research directions in the search for quantum advantage (Section~\ref{sec:q_advantage}). 


\subsection{Related Models}
\label{sec:related_models}
Before we give a statement of our main results, let us discuss how some familiar models fall short of answering our question. We focus on random ensembles where quantitative statements are available.
\begin{itemize}
    \item \textbf{The few-body Pauli models}~\cite{Erd_s_2014}. As a natural generalization of the classical spin glasses (e.g., the Sherrington--Kirkpatrick (SK) model~\cite{SK_model_75}), one replaces classical (commuting) constraints with noncommuting Pauli operators. A representative is the ensemble of Hamiltonians given by 
\begin{align}
    \vH := \sum_{i>j} g^x_{ij} \vsigma^x_i \vsigma^x_j + g^y_{ij}\vsigma^y_i \vsigma^y_j+ g^z_{ij}\vsigma^z_i \vsigma^z_j\,, \quad \text{where} \quad g^x_{ij},g^y_{ij}, g^z_{ij}\sim \text{i.i.d.~Gaussians},\label{eq:2bodyPauli}
\end{align}
and $\vsigma^x_i, \vsigma^y_i, \vsigma^z_i$ denote the Pauli operators on qubit $i$. Heuristically, this model exhibits spin glass behavior at low temperatures \cite{Swingle20_SK_SYK}, which suggests that finding low-energy states could be hard, even for quantum computers.\footnote{ If we regard spin glasses as NP-hard problems, then we do not expect that quantum computers can solve them efficiently.} 
In the high-temperature regime, this model becomes \textit{classically} easy: there exists an efficient algorithm that outputs a \textit{product state} approximating the operator norm of the Hamiltonian $\norm{\vH}$ to a constant ratio~\cite{Harrow2017extremaleigenvalues}. We do not know whether there is a temperature range where the state remains quantumly easy but classically hard, nor do we know how to attack this question.

\item \textbf{The Sachdev--Ye--Kitaev (SYK) models}~\cite{Sachdev_1993,Kitaev15} with fermionic degrees of freedom. For a representative, consider the four-body Hamiltonian
\begin{align}
    \vH_{SYK} := \sum_{i<j<k<\ell} g_{ijk\ell} \chi_i\chi_j \chi_k \chi_{\ell} \quad \text{where} \quad \chi_i\chi_j+\chi_j\chi_i = 2\delta_{ij} \quad \text{and}\quad g_{ijk\ell}\sim \text{i.i.d.~Gaussians}.
\end{align}
Nonrigorous arguments rooted in physics suggest that this model remains \textit{chaotic} (instead of a spin glass) at very low temperatures~\cite{Maldacena_rmkSYK, SYK_glass_2018_Gur_Ari,Swingle20_SK_SYK}.  If true, this is a strong hint that the SYK model answers our question in the affirmative.\footnote{Chaos in the sense of fast thermalization means efficient preparation of Gibbs states via coupling to a bath. See~\cite{chen2021fast} for a quantitative statement connecting the Eigenstate Thermalization Hypothesis and thermalization.}
Unfortunately, it is challenging to sharpen the physics arguments into actual proofs. The only rigorous statement known to us is the recent work of Hastings and O'Donnell~\cite{odonnell_21_optimizing}
, which showed that a low-energy witness with a \textit{constant} ratio approximation of ground energy could be prepared by an efficient quantum algorithm. {An extension of this result to arbitrarily low energies}  would also serve our question. Right now, we do not know any analytic method suitable for the low-temperature regime of the SYK model. 
    \item \textbf{Wigner's Gaussian Unitary Ensemble (GUE)}~\cite{Distribution_roots_58_Wigner}.  If we insist on provable models at low temperatures, we may consider the GUE with dimension $N=2^n$, written in the Pauli string basis:
\begin{align}
    \vH_{GUE} := \sum_{\vsigma \in P } \frac{g_{\vsigma}}{N} \vsigma \quad \text{where} \quad P := \{ \vI, \vsigma^x, \vsigma^y,\vsigma^z\}^{\otimes n} \quad \text{and} \quad g_{\vsigma} \sim \text{i.i.d.~Gaussians}.
\end{align}
This nonlocal and nonsparse Hamiltonian seems unphysical, but it nevertheless served as an mathematical model for heavy nuclei~\cite{nuclear_random}. Together with other random matrix ensembles (see~\cite{anderson_guionnet_zeitouni_2009} for a textbook introduction), the GUE provides a useful model for strongly interacting systems and for quantum information problems thanks to its well-established properties~\cite{RMT_QI_Collins_2016}. 
As the Hamiltonian itself has exponentially many degrees of freedom, by a counting argument, the computational complexity (of low-energy state preparation or Hamiltonian evolution) is exponential $\e^{\Omega(n)}$ with high probability for quantum computers; this is at most polynomially faster than exact diagonalization. 
\end{itemize}

The ensemble we study in this work shares the nice properties of the few-body Pauli and SYK models, as it is sparse and instances can be efficiently specified.  At the same time, like the GUE ensemble, we can accurately approximate the minimal energy and the density of states.   Moreover, we show that polynomial-size quantum circuits are necessary and sufficient to generate low-energy states, up to arbitrarily good approximation ratios of the ground state energy. The main downside of our ensemble is that, like the GUE ensemble, it is nonlocal, so it does not closely resemble the Hamiltonians that readily appear in nature.


\section{Main Results}
\label{sec:main_results}
Now, let us present the model for which we will establish average-case quantum complexity for low-energy states. Consider an independent sum of \textit{a few} random Pauli strings with random sign coefficients:
\begin{align} \label{eqn:Pauli-string-ensemble}
\vH_{PS} &:= \sum_{j=1}^{m}  \frac{r_j}{\sqrt{m}}  \vsigma_j \quad \text{where}\quad \vsigma_j \stackrel{\text{i.i.d.}}{\sim} \{ \vI, \vsigma^x, \vsigma^y,\vsigma^z\}^{\otimes n}\quad \text{and}\quad r_j \stackrel{\text{i.i.d.}}{\sim} \textsc{unif}\{+1,-1\}.
\end{align}
The parameter $m$ will be polynomial in the number of qubits $n$, rather than exponential as in the GUE model.

Our main technical results show that its low-energy states enjoy two-sided bounds on circuit complexity.

\begin{restatable}
[Low-energy states have low complexity]{thm}{main}
\label{thm:low-energy-density}
For any accuracy $\epsilon \ge 2^{-n/c_1}$, let $\vH_{PS}$ be drawn from the Pauli string ensemble~\eqref{eqn:Pauli-string-ensemble} with 
\begin{align}
    m = \left\lfloor  c_2\frac{n^5}{\epsilon^4} \right\rfloor
\end{align}
terms. Then, the following statement holds with probability at least $ 1- \e^{-c_3 n^{1/3}}$ over a random draw $\vH_{PS}$ from the Pauli string ensemble.  We can prepare a low-energy state $\vrho$ such that 
\begin{align}
    \tr[ \vrho \vH_{PS}] \le (1-\epsilon) \cdot \lambda_{\min}(\vH_{PS})\label{eq:low-energy}
\end{align}
using a circuit of size
    $G = \poly(n,\epsilon^{-1}).$
The quantities $c_1, c_2$, and $c_3$ are absolute constants, and $\lambda_{\min}(\vH_{PS})$ denotes the smallest eigenvalue of $\vH_{PS}$ (which is typically negative).
\end{restatable}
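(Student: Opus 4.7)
The natural algorithm, hinted at in the abstract, is quantum phase estimation (QPE) applied to the maximally mixed state $\vrho_0 = \vI/2^n$. Since $\vrho_0$ is a uniform mixture over the eigenbasis of $\vH_{PS}$, a QPE measurement at precision $\delta$ returns an energy $\widetilde{\lambda}$ distributed (up to $\delta$-smearing) as the empirical spectral distribution of $\vH_{PS}$. I would post-select on $\widetilde{\lambda} \le (1-\tfrac{3}{4}\epsilon)\lambda_{\min}(\vH_{PS})$; conditional on acceptance, the post-measurement state $\vrho$ is (close to) a mixture of eigenstates of $\vH_{PS}$ with eigenvalues in the low-energy window, so $\tr[\vrho\vH_{PS}] \le (1-\epsilon)\lambda_{\min}(\vH_{PS})$ once $\delta$ is set to a small fraction of $\epsilon |\lambda_{\min}|$. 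A key point is that the algorithm requires no nontrivial initial state: $\vI/2^n$ is trivially preparable, and the only quantum-circuit cost is QPE itself (plus $\poly(n,\epsilon^{-1})$ rounds of fixed-point amplitude amplification on the acceptance flag).

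\textbf{Cost of QPE.} Since $\vH_{PS}$ is a sum of $m=\poly(n,\epsilon^{-1})$ Pauli strings with coefficients of magnitude $1/\sqrt{m}$ (so $\|\vH_{PS}\|\le\sqrt{m}$ almost surely, and in fact $\|\vH_{PS}\| = O(1)$ with high probability by a standard matrix Bernstein/Khintchine-type bound), standard Hamiltonian-simulation methods (e.g., qubitization via block-encoding the LCU) simulate $\e^{-\ri \vH_{PS} t}$ with gate count $\widetilde{O}(m\,t\,\|\vH_{PS}\|)$, which is $\poly(n,\epsilon^{-1})$ for $t=1/\delta$. Thus, a single QPE round at precision $\delta = \Theta(\epsilon|\lambda_{\min}|/n)$ costs $\poly(n,\epsilon^{-1})$ gates, and the overall circuit remains $\poly(n,\epsilon^{-1})$ provided the acceptance probability is at least $1/\poly(n,\epsilon^{-1})$.

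\textbf{Spectral density concentration (the main obstacle).} The acceptance probability equals the fraction of eigenvalues of $\vH_{PS}$ lying in the window $[\lambda_{\min}(\vH_{PS}),\, (1-\tfrac{3}{4}\epsilon)\lambda_{\min}(\vH_{PS})]$. Heuristically, $\vH_{PS}$ should resemble an element of the GUE (in the limit $m\to\infty$, a moment computation shows convergence of the averaged spectral density to Wigner's semicircle), and for a semicircular density supported on $[-2,2]$, the fraction of mass within $\epsilon\cdot 2$ of the edge is $\Theta(\epsilon^{3/2})$. The crux is to turn this heuristic into a high-probability statement at finite $m$: one must show that, simultaneously, (i) $\lambda_{\min}(\vH_{PS}) \approx -2$ up to additive error $o(\epsilon)$, and (ii) the \emph{empirical} spectral density, smoothed at scale $\epsilon$, is uniformly within $o(\epsilon^{3/2})$ of the semicircle near the edge, so that a genuinely $\poly(\epsilon^{-1})$ fraction of eigenvalues lies in the target window. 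This is much stronger than weak convergence: it is a pointwise, nonasymptotic concentration of the edge spectral density, and the polynomial $m = \Theta(n^{5}/\epsilon^{4})$ is dictated by how large $m$ must be for such a bound to hold with probability $1 - e^{-\Omega(n^{1/3})}$ via a matrix-concentration argument (as foreshadowed in the section on nonasymptotic random matrix theory). I would prove it by bounding the Stieltjes transform of the empirical distribution against that of the semicircle on a mesh of points just above the real axis at imaginary part $\sim\epsilon$, using a matrix concentration inequality for $\tr(z-\vH_{PS})^{-1}$ (tailored to Pauli sums via the self-adjoint dilation/Khintchine machinery), and then converting the Stieltjes-transform bound into a local density bound by a Helffer--Sj\"ostrand-type smoothing. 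Combining this local density estimate with the QPE analysis above yields the stated circuit-size guarantee.
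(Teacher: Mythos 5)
Your algorithmic layer matches the paper exactly: phase estimation on the maximally mixed state, post-selection on the low-energy window, fixed-point amplitude amplification, and a $\poly(m,\epsilon^{-1})$ simulation cost; you also correctly isolate the two facts that must be proven, namely $\lambda_{\min}(\vH_{PS})\approx -2$ to relative accuracy $O(\epsilon)$ and an $\Omega(\epsilon^{3/2})$ lower bound on the empirical edge mass holding with probability $1-\e^{-\Omega(n^{1/3})}$. The gap is that the step carrying all of the difficulty is only asserted. You propose to get the edge density from a ``matrix concentration inequality for $\tr(z-\vH_{PS})^{-1}$ tailored to Pauli sums via self-adjoint dilation/Khintchine machinery'' plus Helffer--Sj{\"o}strand smoothing, but standard Khintchine/Bernstein-type bounds control operator norms (and only up to a $\sqrt{\log N}\sim\sqrt{n}$ factor — your parenthetical claim that they already give $\norm{\vH_{PS}}=O(1)$ is false as stated), and they say nothing about the resolvent trace at spectral resolution $\eta\sim\epsilon$, which here can be polynomially small in $N=2^n$. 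What is actually needed, and what the paper supplies with new tools, is twofold: (i) a comparison of the \emph{expected} resolvent statistics of $\vH_{PS}$ with those of GUE that is accurate at scale $\eta\sim\epsilon$ — done by a Lindeberg exchange exploiting that a random signed Pauli string matches the first \emph{three} moments of a GUE summand (Gaussian-interpolation methods such as Brailovskaya--van Handel only exploit two, and the error budget at $p\sim n$, $\eta\sim\epsilon$ is why this matters); and (ii) a separate concentration theorem showing the random quantity $\btr\labs{\vR}^p$ fluctuates little around its mean, proved via a martingale argument whose variance proxy $\sigma_*^2=1/(mN)$ is dimension-suppressed — without this refinement the fluctuations are of order the $\epsilon^{3/2}$ signal you are trying to detect, and no union bound closes.

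There is also a structural problem with using the first-power Stieltjes transform at $\eta\sim\epsilon$: near the edge the window you care about contributes $\sim\sqrt{\epsilon}$ to $\operatorname{Im} m(\omega+\ri\eta)$, which is the \emph{same order} as the contribution of eigenvalues at distances $\gg\eta$, because the kernel decays only like $1/\labs{E-\omega}$. Extracting the window mass therefore requires two-sided local-law control at all nearby scales before Helffer--Sj{\"o}strand can be applied, i.e.\ essentially a full local semicircle law for an ensemble with only $\poly(n)$ bits of randomness — a statement at least as hard as the theorem itself. The paper sidesteps this by filtering with \emph{high powers} of the resolvent, $\eta^p\labs{\vR_{\omega,\eta}}^p$ with $p=\Theta(n)$, so the filter decays like $\labs{E-\omega}^{-p}$ and a simple sum of such filters along a mesh near the edge isolates the integrated low-energy density directly; combined with the moment-comparison bound on $\norm{\vH_{PS}}$ this yields the $\Omega(\epsilon^{3/2})$ acceptance probability. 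So while your plan names the right target, the route you sketch does not go through with the machinery you cite, and the missing lemmas are precisely the paper's main technical contributions.
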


See Appendix~\ref{sec:abundance} for the proof of Theorem~\ref{thm:low-energy-density}.  Here, we elaborate on the interesting complexity aspects of this problem:
\begin{itemize}
    
    \item \textbf{Arbitrarily good approximation of the ground energy.} For any polynomially small error $\epsilon\sim \poly(n^{-1})$, there is a polynomially large choice $m = \poly(n)$ for which a state that $\epsilon$-approximates the ground energy can be prepared efficiently at gate complexity $G = \poly(n)$. We hypothesize that the order of quantifiers can be exchanged, which would imply for large enough $m=\poly(n)$ that the low-energy states remain easy for any $\epsilon = \poly(n^{-1})$. For further discussions, see point 2 in Section~\ref{sec:q_advantage}.
    
    \item \textbf{Phase estimation works.} As we will show, the quantum algorithm that produces the low-energy state $\vrho$ is very simple.  Performing phase estimation over the maximally mixed state has a decent chance, at least $\Omega(\epsilon^{3/2})$, of returning a low-energy state obeying~\eqref{eq:low-energy}.  A higher success probability is achieved via repeating the phase estimation step. The Hamiltonian simulation costs at most $\poly(m,\epsilon^{-1})$ gates using off-the-shelf quantum simulation algorithms (e.g, Trotter~\cite{lloyd1996universal} or qDRIFT~\cite{campbell2019random}).
    
    \item \textbf{End-to-end complexity.} This Hamiltonian problem is oracle-free and input-state-free, giving a complete picture. Further, the model description is entirely classical, and an instance can be generated using only $m(2n+1)$-bits of randomness.
    
    \item \textbf{Average-case.} The statement holds \textit{with high probability} over the Hamiltonian ensemble. Indeed, this model can produce an arbitrary local Hamiltonian in the worst case, and we have no control over those instances.
       \item \textbf{Nonlocal, noncommuting Hamiltonians.}
    Most Pauli strings $\{ \vI, \vsigma^x, \vsigma^y,\vsigma^z\}^{\otimes n}$ act nontrivially on $\Theta(n)$ sites, and thus the Hamiltonian is nonlocal. The Hamiltonian is highly noncommutative since random Pauli strings anticommute with each other with probability $\frac{1}{2}$. Intuitively, the Pauli string ensemble is closer to a random matrix than to a local Hamiltonian.
    \item \textbf{Sparse matrices.} From a linear algebra perspective, this model is a sparse, high-rank matrix (which has not been dequantized; see~Section~\ref{sec:q_advantage}). In general, a sparse matrix may not admit a simple Pauli decomposition; nevertheless, the same result extends to signed random $d$-sparse matrices (see~Section~\ref{sec:sparse_matrices}). However, the quantum easiness then requires access to a block encoding~\cite{QSVT_Gily_n_2019} of the Hamiltonian. 

\end{itemize}

 On the other hand, we argue this problem is ``very quantum'' by proving a lower bound on the complexity of preparing low-energy states. As a disclaimer, we do not \textit{prove} classical hardness for state preparation (see Section~\ref{sec:q_advantage}), which is an intriguing open problem that we leave for future work. 
\begin{thm}[small circuit gives bad energy]\label{thm:circui_lower_main}
Fix a circuit architecture with $G$ two-qubit gates (e.g., 1D brickwork layout) with the initial state $\ket{0}$ and consider the family of all reachable states $\mathrm{Circ}(G)$. For any $\epsilon_1 \ge 0$, suppose $ m \le {\epsilon}_1^{2} \cdot 2^{n}$. Then, with high probability over the random draw of the instance $\vH_{PS}$ from the Pauli string ensemble~\eqref{eqn:Pauli-string-ensemble}, 
\begin{align}
G = \tilde{o}(\epsilon_1 \sqrt{m}) \quad \text{implies}\quad \inf_{\ket{\phi} \in \mathrm{Circ}(G)}   \bra{\phi}\vH_{PS}\ket{\phi} \ge \epsilon_1 \cdot  \Expect \lambda_{\min}(\vH_{PS}).
\end{align}
Namely, all possible states $\ket{\psi}\in \mathrm{Circ}(G)$ parameterized by the circuit architecture fail to produce any low-energy state.
The notation $\tilde{o}(\cdot)$ suppresses $\log(m)$-prefactors. 
\end{thm}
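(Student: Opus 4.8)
The plan is a dimension-counting / $\epsilon$-net argument against the Pauli string ensemble, exploiting the fact that each Pauli string is an independent uniformly random draw. First I would fix an arbitrary state $\ket{\phi}\in\mathrm{Circ}(G)$ and estimate $\bra{\phi}\vH_{PS}\ket{\phi} = \sum_{j=1}^m \frac{r_j}{\sqrt m}\bra{\phi}\vsigma_j\ket{\phi}$. For a \emph{fixed} state, the quantities $X_j := \bra{\phi}\vsigma_j\ket{\phi}$ are i.i.d.\ mean-zero (the average of $\vsigma$ over the Pauli group is $0$, or more simply $\Expect_{\vsigma}\bra{\phi}\vsigma\ket{\phi}=\tr(\ketbra{\phi}{\phi})/N\cdot(\text{stuff})$ — the diagonal Pauli-basis coefficient of a rank-one projector has mean $N^{-1}$ only for $\vI$, zero otherwise), and crucially $\sum_{\vsigma\in P}|\bra{\phi}\vsigma\ket{\phi}|^2 = N\,\tr(\ketbra{\phi}{\phi}^2)=N$, so $\Expect|X_j|^2 = 1$ and $|X_j|\le 1$. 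Thus $\sqrt m\,\bra{\phi}\vH_{PS}\ket{\phi}$ is a sum of $m$ bounded i.i.d.\ terms with the random signs $r_j$, and Hoeffding/Bernstein gives a subgaussian tail: $\Pr[\,\bra{\phi}\vH_{PS}\ket{\phi} \le -t\,] \le \exp(-\Omega(m t^2))$, using $\sum_j X_j^2 \le m$. (One also wants $\Expect\lambda_{\min}(\vH_{PS}) = -\Theta(1)$ in this normalization; the condition $m\le \epsilon_1^2 2^n$ together with standard random-matrix bounds ensures $\Expect\lambda_{\min}$ is bounded below by an absolute negative constant, so the target energy $\epsilon_1\Expect\lambda_{\min}$ is $-\Theta(\epsilon_1)$.)

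Next I would union-bound over a net. The set $\mathrm{Circ}(G)$ of states reachable by $G$ two-qubit gates on a fixed architecture is parameterized by $O(G)$ real angles (up to the discrete choice of which qubit pairs, which is at most $n^{2G}$ if not fixed by the architecture), so it admits a $\delta$-net of size $\exp(O(G\log(nG/\delta)))$ in operator norm on the states. Since $\norm{\vH_{PS}}\le \sqrt m$ trivially (actually $\norm{\vH_{PS}} = O(1)$ with high probability, but $\sqrt m$ suffices for the net step), moving $\ket{\phi}$ by $\delta$ changes $\bra{\phi}\vH_{PS}\ket{\phi}$ by at most $O(\delta\sqrt m)$; taking $\delta = 1/m$, say, makes this negligible. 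Applying the single-state tail bound at $t = \tfrac12|\epsilon_1\Expect\lambda_{\min}| = \Theta(\epsilon_1)$ and union-bounding, the failure probability is at most $\exp(O(G\log(nm)) - \Omega(\epsilon_1^2 m))$. This is $o(1)$ precisely when $G\log(nm) = o(\epsilon_1^2 m)$, i.e.\ $G = \tilde o(\epsilon_1^2 m)$.

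There is a gap between this and the claimed $G=\tilde o(\epsilon_1\sqrt m)$, and reconciling it is where I expect the real work to be. The discrepancy suggests the intended argument is not a crude net bound on energies but rather goes through the \emph{spectral density / trace moments}: with $G$ gates one can only reach a subspace (or more precisely, low-complexity states live near a variety of bounded "dimension") and the relevant fact is how much mass of $\vH_{PS}$'s spectrum lies below a given threshold. Concretely, I would instead bound the dimension $D_{\le E}$ of the span of eigenvectors with eigenvalue $\le E := \epsilon_1\Expect\lambda_{\min}$ — the refined spectral-density concentration advertised in the abstract should give $D_{\le E}/N \le \exp(-\Omega(\epsilon_1^2 m))$ type statements (the density of states near the edge is exponentially suppressed, matching a semicircle-like tail $\exp(-c\,m\,(E/\|\vH\|)^2)$ rescaled). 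A state of complexity $G$ that achieves energy $\le E$ must have overlap $\ge \tfrac12$ with this low-dimensional subspace; but a net of $\exp(O(G\log))$ low-complexity states cannot all have large overlap with a subspace of fractional dimension $\exp(-\Omega(\epsilon_1^2 m))$ unless $G \gtrsim$ that exponent — again giving $\epsilon_1^2 m$, not $\epsilon_1\sqrt m$. So I would look carefully at whether the theorem's $\epsilon_1\sqrt m$ is the operative scaling through some sharper operator-norm fluctuation bound (note $\epsilon_1\sqrt m$ is exactly $\epsilon_1 \cdot \|\text{sum of signs}\|$-type quantity, and $\Expect\lambda_{\min}\asymp 1$ while an \emph{individual} large deviation of the energy costs $\exp(-\Omega(m t^2))$ — so to beat a target at scale $\epsilon_1$ one needs the net-entropy $G\log$ to lose to $\epsilon_1^2 m$, and writing $m = (\sqrt m)^2$ the bound $G = \tilde o(\epsilon_1\sqrt m)$ would follow if one only needed to rule out \emph{energy below $\epsilon_1\sqrt m\cdot(\text{const})$ in the un-normalized Hamiltonian $\sum r_j\vsigma_j$}, which is the natural scale there). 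I therefore expect the proof to work with the un-normalized $\sqrt m\,\vH_{PS} = \sum_j r_j\vsigma_j$, show its ground energy is $\Theta(\sqrt m)$ while any fixed low-complexity state has energy $\ge -\tilde o(\epsilon_1 \sqrt m)$ off the net with room to spare, and the main obstacle is making the net/entropy accounting tight enough (and handling the architecture's discrete choices) so that the surviving regime is exactly $G = \tilde o(\epsilon_1\sqrt m)$ rather than the weaker $\tilde o(\epsilon_1^2 m)$.
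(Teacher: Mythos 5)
Your approach (Bernstein for a fixed state, then a union bound over a net of $\mathrm{Circ}(G)$) is exactly the one the paper uses in Lemma~\ref{lem:counting_lower_bounds}, but you lose the right scaling because of a variance miscalculation, and then you spend the second half of the proposal chasing a phantom discrepancy.

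The error is in the claim $\Expect|X_j|^2 = 1$. You correctly note that $\sum_{\vsigma\in P}|\bra{\phi}\vsigma\ket{\phi}|^2 = N\,\tr\bigl(\ketbra{\phi}{\phi}^2\bigr) = N$, but the Pauli group $P = \{\vI,\vsigma^x,\vsigma^y,\vsigma^z\}^{\otimes n}$ has $4^n = N^2$ elements, not $N$. Dividing by $|P|=N^2$ gives
\begin{align}
\Expect_{\vsigma}\,\bigl|\bra{\phi}\vsigma\ket{\phi}\bigr|^2 = \frac{N}{N^2} = \frac{1}{N},
\end{align}
which is exponentially small. This is precisely the ``noncommutativity'' the paper highlights: a fixed state is overwhelmingly unlikely to align with a random Pauli string. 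The total variance of $\braket{\vH_{PS}}_\psi = \sum_j a_j$ with $a_j = \frac{r_j}{\sqrt m}\bra{\phi}\vsigma_j\ket{\phi}$ is then $v = \sum_j \Expect[a_j^2] = \frac{1}{N}$, while the uniform bound is $L = 1/\sqrt m$. Since under the hypothesis $m \le \epsilon_1^2 N$ (and more generally $m\le \epsilon^2 N^2$ in the Lemma) one has $v = 1/N \ll L t/3$ for $t = \Theta(\epsilon_1)$, Bernstein's inequality is in the \emph{boundedness-dominated} regime and gives a one-state tail of order $\exp(-\Omega(t/L)) = \exp(-\Omega(\epsilon_1\sqrt m))$ --- not the subgaussian $\exp(-\Omega(\epsilon_1^2 m))$ you obtained by treating the variance as order one. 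Plugging this into your union bound over the $\exp\bigl(O(G\log(G\sqrt m/\epsilon_1))\bigr)$-size net immediately yields the threshold $G = \tilde o(\epsilon_1\sqrt m)$, exactly as claimed. With the variance corrected, none of the subsequent speculation about spectral-density dimension counting or switching to the un-normalized Hamiltonian is needed; the paper's argument is the straightforward Bernstein-plus-net one you started with, and the $\sqrt m$ in the threshold comes purely from the $L t$ term dominating the $v$ term in the exponent.
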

See Appendix~\ref{sec:circuit_lower_bounds} for the proof of Theorem~\ref{thm:circui_lower_main}. In other words, we very often need a large circuit $G =\tilde{\Omega}(\epsilon_1 \sqrt{m})$ to describe the low-energy states; they are very entangled and far from product states.\footnote{This statement is analogous to the No-Low energy-Trivial-State conjecture (NLTS)~\cite{freedman_2013}. 
As a disclaimer, we are far from the original context of NLTS, where the Hamiltonians are $\CO(1)$-local, frustration-free, and {topologically ordered}. The proof we provide fails for constant $\CO(1)$-local random Hamiltonian. Indeed, as we mentioned, the 2-body random Pauli model has efficient product state constant-ratio approximation of the operator norm~\cite{Harrow2017extremaleigenvalues}. Recently, the NLTS conjecture for circuit depth $\log(n)$ was proven for certain $\CO(1)$-local Hamiltonian arising from quantum error-correcting codes~\cite{Anshu2022NLTSHF}.} 
Further, our circuit size lower bound uses a direct counting argument, and it suggests the circuit should change over different random instances. Nevertheless, Theorem~\ref{thm:low-energy-density} states the complementary result: an appropriate instance-dependent state can be prepared efficiently using the \textit{simplest} quantum algorithms (Hamiltonian simulation and phase estimation). 



The \textit{main caveat} for our model is that it is nonlocal, unlike most physical Hamiltonians, and our argument is not immediately applicable to local Hamiltonians. Indeed, the spectral properties of the two types of models are different.  As we will show, the Pauli string ensemble has a (compact) semicircular spectrum, while local Hamiltonians tend to have a tail in the spectrum.\footnote{Asymptotically, a Gaussian distribution was known for lattice Hamiltonians~\cite{Brando2015EquivalenceOS} and random $k$-local Pauli Hamiltonians~\eqref{eq:2bodyPauli} for $k= o(\sqrt{n})$ ~\cite{Erd_s_2014}. Nonasymptotically, a spectral tail appears in the 4-local SYK model~\cite{odonnell_21_optimizing}.}
Performing phase estimation with the maximally mixed state would not be able to access the low-energy states far in the low probability tail.  Of course, we hope our results ultimately inspire a better understanding of preparing the low-energy states of local Hamiltonians. For further discussions, see point 3 in Section~\ref{sec:q_advantage}.

Regardless, from a linear algebra and algorithm perspective, random sparse matrices are natural models to study. We emphasize the main goal is to give a transparent toy model showcasing what quantum computers are good at, especially given recent developments in dequantization.

\section{Proof ideas}
\label{sec:proof_ideas}
\begin{figure}[t]
    \centering
    \includegraphics[width=0.6\textwidth]{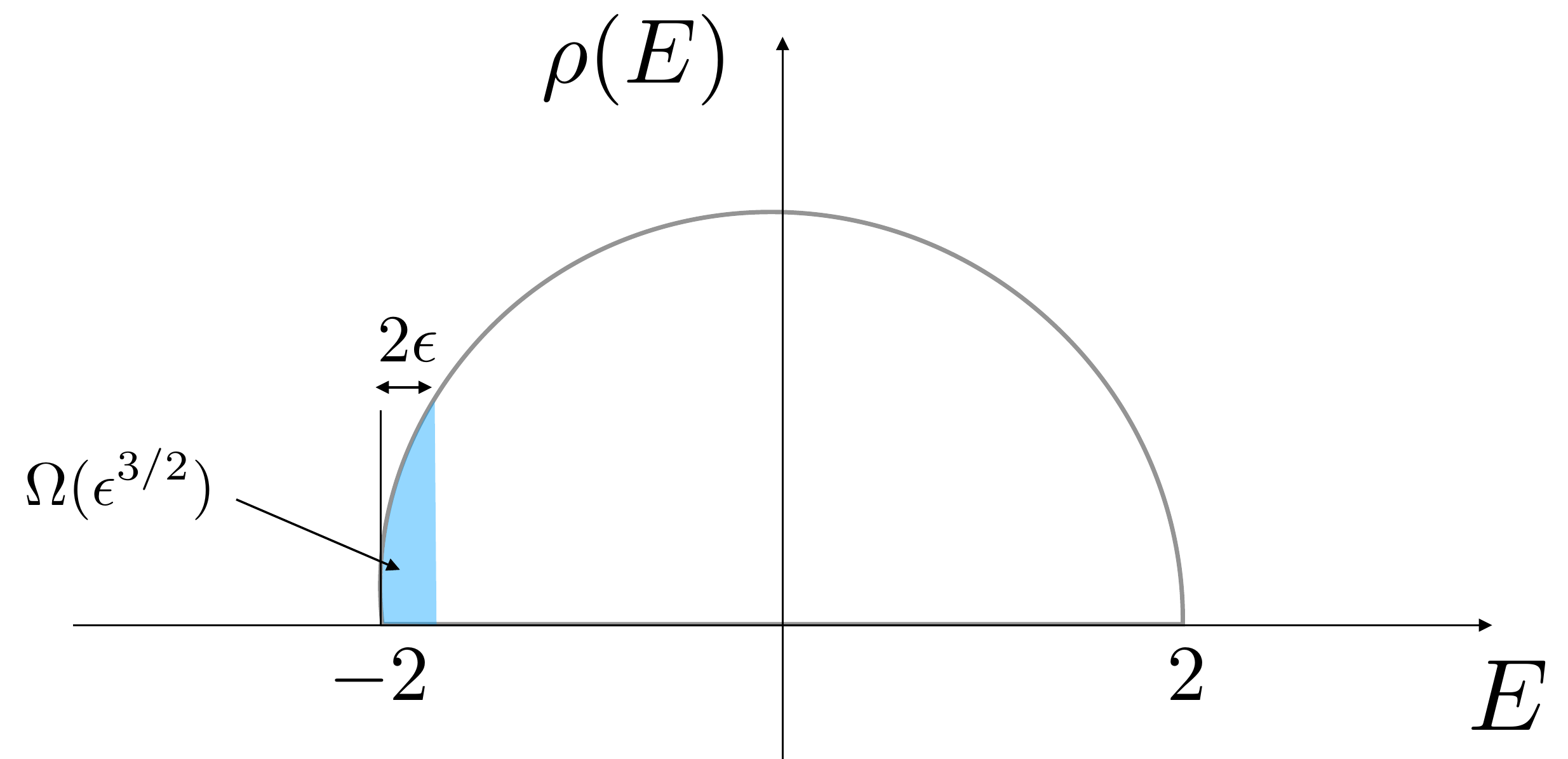}
    \caption{\textbf{Abundance of low energy states.} The contour illustrates the density $\rho$ versus the energy level $E$ for a semicircular distribution, which is (in the large dimension limit) the spectral distribution for the GUE. The semi-circle spectral density implies the abundance of states near the ground energy. Performing phase estimation over the maximally mixed state gives a state with low energy $-2(1-\epsilon)$ with a decent probability $\Omega(\epsilon^{3/2})$.
    }
    \label{fig:semicircle_integral}
\end{figure}

Given a general strongly interacting Hamiltonian, it seems daunting to control its behavior.  However, we can make an exception in the case of certain random matrix ensembles where the matrices have predictable spectral properties.  For example, it is well known that GUE matrices (see Appendix~\ref{sec:GUE}) have a definite maximal eigenvalue and a semicircular spectral density $\rho(x)$:
\begin{align}
\norm{\vH_{GUE}} \approx 2\quad \text{and}\quad \rho(x) \approx \frac{\sqrt{4-x^2}}{2\pi} \quad \text{(up to negligible fluctuation and with high probability)}.
\end{align}
See Figure~\ref{fig:semicircle_integral}.
Indeed, this fact alone hints that the low-energy states have a \textit{nonnegligible} density, \textit{independent} of the system size:
\begin{align}
    \int_{-2}^{-2+2\epsilon} \rho(x) \,\mathrm{d} x =\Theta( \epsilon^{3/2} ).
\end{align}
In terms of complexity, directly running phase estimation on the maximally mixed state returns a low-energy state with decent probability: $\Omega(\epsilon^{3/2})$.
The core of our argument is that the spectrum of the Pauli string ensemble~\eqref{eqn:Pauli-string-ensemble} looks a lot like the
spectrum of a GUE matrix (Figure~\ref{fig:good_bad}). As a consequence, it is also ``easy'' to find the low-energy states of the Pauli string ensemble.


How can we prove that the Pauli string ensemble also has a semicircular spectral density?
The entire argument then boils down to a \textit{universality principle}:
\begin{quote}
\textit{The Pauli string ensemble, at moderately large $m$, mimics ``smooth'' properties of the GUE ensemble, including the maximum eigenvalue and the coarse-grained spectral density}.
\end{quote}
%
The mathematical argument is based on techniques from nonasymptotic random matrix theory (Section \ref{sec:random-matrix}).  Several novel results are required to address some of the particular challenges that arise in the quantum information problem.

\begin{figure}[t]
    \centering
    \includegraphics[width=0.9\textwidth]{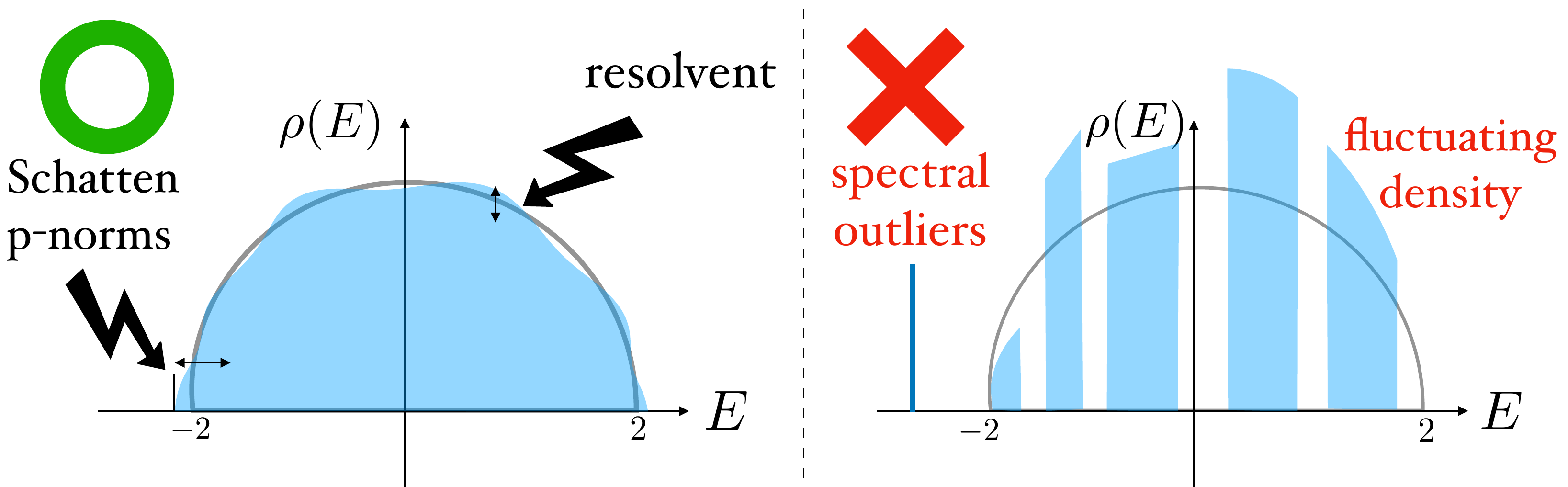}
    \caption{\textbf{(Right) What could have gone wrong.} For a generic matrix, the phase estimation strategy would not find a low-energy state if there are spectral outliers or if the spectral density gets too small near the ground energy. 
    \textbf{(Left) Almost a semi-circle.} For the Puali string ensemble, we control the ground energy by Schatten $p$-norms and control the spectral density by the resolvent. Both values are comparable to the GUE, which has a favorable semicircular spectrum. Therefore, enough states remain near the ground energy, and thus phase estimation efficiently finds them.
    }
    \label{fig:good_bad}
\end{figure}

 Our first main result states that the trace polynomial moments of the Pauli string ensemble almost coincide with the corresponding moments of the GUE.  In particular, by choosing a large enough moment, we can also compare the spectral norms of the two matrices. Throughout this work, we consider the normalized $p$-norms
\begin{equation} 
\norm{\vO}_p:= \left(\btr\labs{\vO}^p\right)^{1/p} \quad \text{and} \quad 
    \vertiii{\vO}_p := \left(\Expect  \btr\labs{\vO}^p \right)^{1/p} \quad \text{where}\quad \btr := \frac{1}{\tr[\vI]}\tr .\label{eq:define_pnorms}
\end{equation}
We often denote $N:=\tr[\vI]=2^n$. We have the understanding that $\vertiii{\vO}_{\infty} \coloneqq \mathrm{ess\,sup}\, \norm{ \vO }$.




    


\begin{restatable}[$p$-norms and operator norm]{thm}{pnormsPauli}
\label{thm:pnorms_pauli}
Let $p \in 2\mathbb{N}$ be an even natural number.
The random Pauli string ensemble~\eqref{eqn:Pauli-string-ensemble} satisfies the norm bound
\begin{align}
    \labs{ \vertiii{ \vH_{PS} }_{p} - \vertiii{\vH_{GUE} }_{p}}\lesssim \left(\frac{p^{3/4}}{m^{1/4}} + \frac{p}{ \sqrt{m}} \right)\left( 1+ \frac{p^{3/4}}{2^{n/2}} \right)\label{eqn:PSE-moment-intro}.
\end{align}
The symbol $\lesssim$ suppresses constant factors. Furthermore, for $0 \le \epsilon \le 1/2$ and $m \le 2^{2n}$, there exist constants $c_1,c_2 > 0$ where
\begin{align}
        m \ge c_1 \frac{n^3}{\epsilon^4}
\quad \text{ensures}\quad \Pr\L( \norm{\vH_{PS}} \ge 2(1+\epsilon) \R) \le \exp( -c_2 n ). 
\end{align}
\end{restatable}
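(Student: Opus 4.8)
The plan is to establish the moment comparison~\eqref{eqn:PSE-moment-intro} first and then convert it into the operator-norm tail bound by choosing $p$ optimally. For the moment comparison, I would use the method of exchangeable pairs / interpolation between the two ensembles, or more directly a moment (trace-polynomial) expansion. Writing $\vertiii{\vH_{PS}}_p^p = \Expect \btr \vH_{PS}^p$ (here $p$ is even so $\abs{\vH_{PS}}^p = \vH_{PS}^p$), one expands $\vH_{PS}^p = m^{-p/2}\sum_{j_1,\dots,j_p} r_{j_1}\cdots r_{j_p}\, \vsigma_{j_1}\cdots\vsigma_{j_p}$ and takes expectations. Since the $r_j$ are Rademacher and the $\vsigma_j$ are uniform over the Pauli group, only certain pairing/coincidence patterns on the multi-index survive, and the leading contribution reproduces exactly the genus-expansion count that governs $\Expect\btr\vH_{GUE}^p$ — that is, the Catalan-number (noncrossing pairing) term, which gives the $p$-th moment of the semicircle times $(1+o(1))$. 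The first step, then, is to identify this leading term as common to both ensembles.

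The second step is to bound the discrepancy. The difference between the two moment generating mechanisms comes from: (i) for the GUE, Gaussian Wick pairing produces only pairings with a $1/N^2$-suppressed genus expansion; (ii) for the Pauli ensemble, in addition to honest pairings one gets ``collision'' terms where three or more indices coincide, and also the Pauli product $\vsigma_{j_1}\cdots\vsigma_{j_p}$ of a pairing is only proportional to $\vI$ up to a sign $\pm 1$ that depends on the commutation pattern — this sign averages favorably but contributes lower-order corrections. Counting the number of such subleading index patterns of a given type, weighting by the $m^{-p/2}$ normalization and the appropriate power of $N^{-1}$ from $\btr$, should produce the claimed bound: the $p^{3/4}/m^{1/4} + p/\sqrt{m}$ factor tracks the combinatorial cost of a single ``defect'' (a collision or a crossing) relative to a noncrossing pairing, while the $(1 + p^{3/4}/2^{n/2})$ factor absorbs the genus/dimension corrections that are also present for the GUE itself. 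I expect this counting — carefully organizing the defect expansion and getting the exponents of $p$ right — to be the main obstacle; it is the technical heart and presumably where the "novel results" advertised in the text are needed.

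The third step is routine: use~\eqref{eqn:PSE-moment-intro} together with known control of $\vertiii{\vH_{GUE}}_p$, namely $\vertiii{\vH_{GUE}}_p \le 2 + O(p/ (\cdot))$-type estimates (the semicircle has $\norm{\cdot}_\infty = 2$, and the even moments converge to $2$ from below with the standard $p$-dependent correction), to conclude $\vertiii{\vH_{PS}}_p \le 2 + O(\epsilon)$ once $m \gtrsim n^3/\epsilon^4$ and $p$ is chosen of order $n$ (so that $p^{3/4}/m^{1/4} \lesssim \epsilon$ forces $m \gtrsim p^3/\epsilon^4 \sim n^3/\epsilon^4$, and the $2^{-n/2}$ term is negligible when $p = \Theta(n) \le 2n$). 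Then Markov's inequality at the $p$-th moment gives
\begin{align}
\Pr\!\left(\norm{\vH_{PS}} \ge 2(1+\epsilon)\right) \le \frac{\Expect \btr\, \vH_{PS}^p \cdot N}{(2(1+\epsilon))^p} = N\left(\frac{\vertiii{\vH_{PS}}_p}{2(1+\epsilon)}\right)^{p},
\end{align}
and with $\vertiii{\vH_{PS}}_p \le 2(1+\epsilon/2)$ the ratio is at most $(1+\epsilon/2)/(1+\epsilon) \le 1 - \epsilon/4$, so the bound is $N(1-\epsilon/4)^p = 2^n e^{-\Omega(\epsilon p)}$; taking $p = \Theta(n/\epsilon)$ — or absorbing the $N$ factor by noting $p = \Theta(n)$ already suffices when $\epsilon$ is bounded below — yields $\exp(-c_2 n)$. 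One must check the constraint $p \le$ (something like $2n$) is compatible, which is where $m \le 2^{2n}$ and $\epsilon \ge$ const enter; the small bookkeeping of constants $c_1, c_2$ is then mechanical.
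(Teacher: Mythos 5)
Your third step (Markov's inequality at moment $p$, with $p$ scaling like $n/\epsilon$ and $m$ chosen so that the RHS of the moment comparison is $O(\epsilon)$) matches the paper's argument in spirit. You do waffle between $p=\Theta(n)$ and $p=\Theta(n/\epsilon)$; the latter is what is actually needed, because the dimension factor $N^{1/p} = e^{\log N/p}$ must itself be $1+O(\epsilon)$, which forces $p\gtrsim n/\epsilon$, and this is the choice the paper makes ($p = c\log(N)/\epsilon$, $m=c_1 p^4/\log N$).

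The real divergence is in how you propose to establish the moment-comparison bound \eqref{eqn:PSE-moment-intro}. You suggest a direct genus expansion: expand $\btr\vH_{PS}^p$ over multi-indices, identify the noncrossing-pairing/Catalan leading term as common to GUE and the Pauli ensemble, and bound the remainder by counting ``defect'' configurations (collisions, crossings, non-pairings). The paper does something structurally different: it never performs such a combinatorial enumeration. Instead it proves a general Lindeberg-exchange universality principle (Theorem~\ref{thm:universality_moments}), which interpolates between $\vH_{PS}$ and $\vH_{GUE}$ one summand at a time, exploits that the Pauli string and GUE summands have matching first, second, and third tensor-power moments ($t=3$), Taylor-expands the $p$th trace power to order $t+1$, bounds the residual via H\"older, and solves a self-bounding difference inequality (Lemmas~\ref{lem:diff_bounds_discrete}--\ref{lem:ansatz}) to telescope. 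The exponent $3/4$ in $p^{3/4}/m^{1/4}$ is $t/(t+1)$ for $t=3$; it falls out of the recursion mechanically, not from defect counting. Theorem~\ref{thm:pnorms_pauli} is then an instantiation with $\vertiii{\vA_i}_p = 1/\sqrt{m}$ and the GUE moment bound $\vertiii{\vH_{GUE}}_p\le 2(1+(p/2)^{3/4}/\sqrt{N})$.

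The gap in your proposal is precisely the part you flag: the defect-counting argument is not supplied, and it is far from a routine ``small bookkeeping'' exercise. Unlike the Wigner/GUE case, the Pauli string ensemble does not have independent matrix entries, so the classical trace-method combinatorics do not apply directly; one would have to track both the index-coincidence pattern on $\{1,\dots,m\}$ and, for each fixed pattern, the commutation signs of the resulting random Pauli words, including contributions where the same index appears more than twice (which survive because $\vsigma_j^2=\vI$, not because of Gaussian Wick cancellation). Getting the \emph{nonasymptotic}, dimension-uniform error to come out as $p^{3/4}/m^{1/4}+p/\sqrt{m}$ from this enumeration is exactly the kind of thing the Lindeberg machinery is designed to avoid. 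You would also need to explain why the exponent ends up being $3/4$ specifically --- in the paper this is a theorem-level consequence of three matching moments, whereas in your sketch it is an unjustified assertion. Without that piece, the proposal doesn't stand on its own.
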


%
%
See Appendix~\ref{sec:pauli_moments} for the proof of Theorem~\ref{thm:pnorms_pauli}.
For a fixed moment $p$ that may depend on the number $n$ of sites, the right-hand side of~\eqref{eqn:PSE-moment-intro} decays with the number $m$ of terms in the Hamiltonian.  Applying Markov's inequality for $p = \Omega(\log(N)) = \Omega(n)$ and choosing $m = \poly(n)$, we obtain a tail bound for the spectral norm.

Comparing the spectral densities of the two ensembles requires a more difficult argument.
Ideally, we are interested in projectors to Hamiltonian eigenstates $ \ket{\phi}\bra{\phi}\label{eq:ideal_projector}$. However, exact eigenstate projectors are tricky to handle. Instead, we consider the \textit{resolvent}, which probes the ``coarse-grained'' energy projector at energies $\omega \pm \CO(\eta)$.  We define
\begin{align} \label{eqn:resolvent-intro}
    \vR_{\omega,\eta}(\vH) : = \frac{1}{\vH - \omega +\ri \eta} = \ri\int_0^{\infty} \e^{\ri(\vH -\omega )t - \eta t} dt \quad \text{as a proxy for}~ \quad \frac{1}{\eta}\sum_{\ket{\phi}} \indicator\{ \labs{E(\phi)-\omega} \le \eta\} \cdot \ket{\phi}\bra{\phi},
\end{align}
where $\indicator$ is the indicator function. We often suppress parameter dependencies by writing $\vR \coloneqq \vR_{\omega,\eta}(\vH)$. For intuition, the resolvent is diagonal in the Hamiltonian basis and spikes at energy $\omega$ with width $\CO(\eta)$. See Figure~\ref{fig:semi_circle_resolvent}. 

 

However, if we are especially interested in the states near certain energy $\omega$, the resolvent is not localized enough because the filter $E \mapsto 1/\labs{E-\omega}$  decays too slowly as a function of the energy.\footnote{This is closely related to the phase estimation amplitude profile where the width $\eta$ is roughly the resolution (inversely proportional to the runtime).}
Instead, we can take the trace of resolvent powers so that the tail decays at the faster rate $\sim \labs{E-\omega}^{-p}$.  That is,
\begin{align}
    \frac{\eta^{p}}{N} \tr \labs{\vR}^p = \frac{1}{N} \sum_{E} \frac{\eta^p}{\labs{E-\omega+\ri \eta}^{p}} \quad \text{as a proxy for}\quad \rho(\omega) \cdot \frac{\eta}{\sqrt{p}} 
\end{align}
where the energy $\frac{\eta}{\sqrt{p}}$ is roughly the window where the weight $\labs{\vR}^p\eta^{p}$ remains large $\Omega(1)$. 

\begin{thm}[Comparing the resolvent moments]\label{thm:resolvent_paulis}
Let $p \in 2\mathbb{N}$ be an even natural number.  The resolvent~\eqref{eqn:resolvent-intro} of the random Pauli string ensemble~\eqref{eqn:Pauli-string-ensemble}, written $\vR_{PS}$, compares with the resolvent $\vR_{GUE}$ of the GUE:
\begin{align} \label{eqn:resolvent-pauli-intro}
        \labs{\vertiii{\vR_{PS}}_{p}- \vertiiismall{\vR_{GUE}}_{p}}
        \lesssim \left(\frac{p^4}{\eta^5m^2} + \frac{p^3}{\eta^5m}\right)\left( 1+ \frac{p^3}{2^{2n}}\right).
\end{align}
The symbol $\lesssim$ suppresses absolute constants.
\end{thm}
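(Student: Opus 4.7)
The plan is to reduce the resolvent comparison to a comparison of expected characteristic functions $\Expect\btr[e^{\iunit t\vH}]$, for which the per-moment estimates underlying Theorem~\ref{thm:pnorms_pauli} can be reused directly. First I would exploit the integral representation~\eqref{eqn:resolvent-intro} together with the fact that $\vR$ and $\vR^*$ commute (both are functions of $\vH$), so $\labs{\vR}^{p}=\vR^{p/2}(\vR^*)^{p/2}$ for even $p$. Iterating the integral representation and changing variables to the time-sum $T=t_1+\cdots+t_{p/2}$ yields
\[
\vR^{p/2} \;=\; \frac{\iunit^{p/2}}{(p/2-1)!}\int_0^\infty T^{p/2-1}\,e^{\iunit(\vH-\omega)T-\eta T}\,dT,
\]
and the conjugate identity for $(\vR^*)^{p/2}$; multiplying, taking $\btr$ and expectation produces
\[
\Expect\btr\labs{\vR}^p \;=\; \frac{1}{((p/2-1)!)^2}\int_0^\infty\!\!\int_0^\infty (TT')^{p/2-1}\,e^{-\eta(T+T')}\,e^{-\iunit\omega(T-T')}\,\Expect\btr\!\bigl[e^{\iunit\vH(T-T')}\bigr]\,dT\,dT',
\]
so subtracting the GUE analogue reduces the task to bounding $\varphi(t):=\Expect\btr[e^{\iunit t\vH_{PS}}]-\Expect\btr[e^{\iunit t\vH_{GUE}}]$ integrated against this polynomial-exponential weight.

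Next I would expand $\varphi(t)=\sum_{k\ge 0}\frac{(\iunit t)^k}{k!}\bigl(\Expect\btr\vH_{PS}^k-\Expect\btr\vH_{GUE}^k\bigr)$. Both ensembles are symmetric so all odd moments vanish, and a direct calculation gives $\Expect\btr\vH_{PS}^2=\Expect\btr\vH_{GUE}^2=1$, so the lowest nonvanishing order is $t^4$. Each surviving moment difference $\Expect\btr\vH_{PS}^{2k}-\Expect\btr\vH_{GUE}^{2k}$ is precisely the quantity controlled inside the proof of Theorem~\ref{thm:pnorms_pauli} via the trace-polynomial / noncrossing-partition expansion; reusing those bounds produces an estimate of the shape $|\varphi(t)|\lesssim t^4(\alpha_m+\beta_m t^2+\cdots)(1+p^3/2^{2n})$ with $\alpha_m=O(1/m)$ and $\beta_m=O(1/m^2)$. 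In the regime where $|t|$ is so large that the Taylor envelope would exceed the trivial bound $|\varphi(t)|\le 2$, the exponential damping $e^{-\eta(T+T')}$ suppresses those contributions to the double integral.

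Inserting this estimate back and using $\int_0^\infty T^{q-1+j}e^{-\eta T}\,dT=\Gamma(q+j)/\eta^{q+j}$, the $t^4$ leading contribution — after expanding $(T-T')^4$ and evaluating the component gamma integrals — yields an extra factor $\Gamma(q+4)/\Gamma(q)\sim p^4$ and $\eta^{-4}$ beyond the baseline scaling $\Expect\btr\labs{\vR_{GUE}}^p\asymp\eta^{-(p-1)}$ coming from the semicircular profile. Converting the resulting estimate on $|\Expect\btr\labs{\vR_{PS}}^p-\Expect\btr\labs{\vR_{GUE}}^p|$ into a bound on $|\vertiii{\vR_{PS}}_p-\vertiii{\vR_{GUE}}_p|$ via the elementary inequality $|a^{1/p}-b^{1/p}|\le p^{-1}\min(a,b)^{1/p-1}|a-b|$, together with $\vertiii{\vR}_p^{p-1}\asymp\eta^{-(p-1)}$, supplies the missing factor $\eta^{-1}$ and produces the claimed $\eta^{-5}$.

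The main obstacle I anticipate is uniformity in $t$ of the bound on $\varphi$: the weight in the double integral extends all the way out to $|T-T'|\sim p/\eta$ before the exponential damping kicks in, while the naive Taylor envelope $\alpha_m t^4+\beta_m t^6+\cdots$ grows rapidly there. The clean resolution is to truncate the Taylor expansion of $\varphi$ at order $\Theta(p)$, bound the remainder by the $(p+2)$-nd moments (both of which are $O(1)$ by Theorem~\ref{thm:pnorms_pauli}), and then verify that the combinatorial ratios $\Gamma(q+j)/\Gamma(q)$ combine with the factorial denominators to yield polynomial — rather than exponential — dependence on $p$, matching the stated form $(p^4/(\eta^5 m^2)+p^3/(\eta^5 m))(1+p^3/2^{2n})$.
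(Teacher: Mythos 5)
Your route is genuinely different from the paper's (which proves Theorem~\ref{thm:resolvent_paulis} by instantiating Theorem~\ref{thm:universality_resolvent}: a Lindeberg swap performed directly on the resolvent powers, with the Neumann/Taylor expansion of Proposition~\ref{prop:expand_resolvent}, cancellation of the first three orders, and a self-bounding recursion), but as written it has a gap that I do not think is repairable by the fix you propose. The reduction to the characteristic-function difference $\varphi(t)=\Expect\btr[\e^{\ri t\vH_{PS}}]-\Expect\btr[\e^{\ri t\vH_{GUE}}]$ via the Gamma-integral representation of $\vR^{p/2}$ is fine, but the weight you must integrate $\varphi$ against is, in the difference variable $t=T-T'$, essentially a Gaussian of width $\sqrt{p}/\eta$ (write $T=s+t/2$, $T'=s-t/2$; the factor $(TT')^{p/2-1}\e^{-2\eta s}$ concentrates at $s\approx p/(2\eta)$ and produces $\e^{-\Theta(\eta^2 t^2/p)}$ in $t$). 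So the bulk of the kernel mass sits at $\labs{t}\sim\sqrt{p}/\eta$, which exceeds $p$ precisely in the regime the theorem is for ($\eta$ small; in the application $\eta\sim\epsilon$ can be polynomially or even exponentially small while $p\sim n$). In that range your proposed control of $\varphi$ fails: truncating the Taylor series of $\e^{\ri t\vH}$ at order $K=\Theta(p)$ and bounding the remainder by the $K$-th moments gives a remainder of size $\sim (2\e t/K)^K$, which is super-exponentially large for $\labs{t}\gtrsim p$, and the trivial bound $\labs{\varphi(t)}\le 2$ is also useless there because the target accuracy is $\ll 1$ and those $t$ carry most of the integral. The exponential damping does \emph{not} rescue you: it only suppresses $\labs{t}\gg\sqrt{p}/\eta$, not the window $p\lesssim\labs{t}\lesssim\sqrt{p}/\eta$ where neither the moment-truncation bound nor the trivial bound is adequate. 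Matching three moments gives smallness of a few Taylor coefficients of $\varphi$, not uniform-in-$t$ smallness, and the latter is what your integral actually requires.

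To salvage the Fourier route you would need a bound such as $\labs{\varphi(t)}\lesssim (t^4/m)\,\mathrm{poly}(t)$ valid for all $t$, with only polynomial growth; the natural way to get that is a Lindeberg/Duhamel swap argument applied to the propagator $\e^{\ri t\vH}$ itself (expanding each swap to fourth order in the exchanged summand, so each of the $m$ swaps costs $\sim t^4\Expect\norm{\vA_j}^4\sim t^4/m^2$), at which point you have effectively reconstructed the paper's exchange argument in a different representation rather than reused the moment bounds of Theorem~\ref{thm:pnorms_pauli}. The final bookkeeping you sketch (Gamma-ratio factors, and converting $\labs{\Expect\btr\labs{\vR_{PS}}^p-\Expect\btr\labs{\vR_{GUE}}^p}$ into a bound on the difference of $p$-norms via $\labs{a^{1/p}-b^{1/p}}\le p^{-1}\min(a,b)^{1/p-1}\labs{a-b}$ and $\vertiii{\vR_{GUE}}_p\gtrsim\eta^{-1}$) is sound and does reproduce the $\eta^{-5}$ scaling, so the missing ingredient is exactly the uniform large-$t$ estimate on $\varphi$.
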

See Appendix~\ref{sec:resolvent} for the proof of Theorem~\ref{thm:resolvent_paulis} in a more general setting.
For moderately large $m$ (depending on the distance $\eta$ from the real line and the power $p$), the formula~\eqref{eqn:resolvent-pauli-intro} controls the expected {spectral density}, filtered by the resolvent:
\begin{align}
    \Expect  \tr \labs{\vR_{PS}}^p \approx \Expect  \tr \labs{\vR_{GUE}}^p.
\end{align}
%
%
Since we want to make a statement that holds with high probability over realizations of the Pauli string ensemble, we also need to prove that the quantity $\tr \labs{\vR}^p$ concentrates near its expectation $\Expect  \tr \labs{\vR}^p$ (i.e., the spectral density does not fluctuate too much); see Theorem~\ref{thm:resolvent_con_good_lambda}.
Lastly, since individual resolvents probe the local density, we may probe the \textit{integrated} spectral density by placing consecutive resolvents. The abundance of low-energy states then implies phase estimation succeeds with a decent chance. 

\begin{figure}[t]
    \centering
    \includegraphics[width=0.6\textwidth]{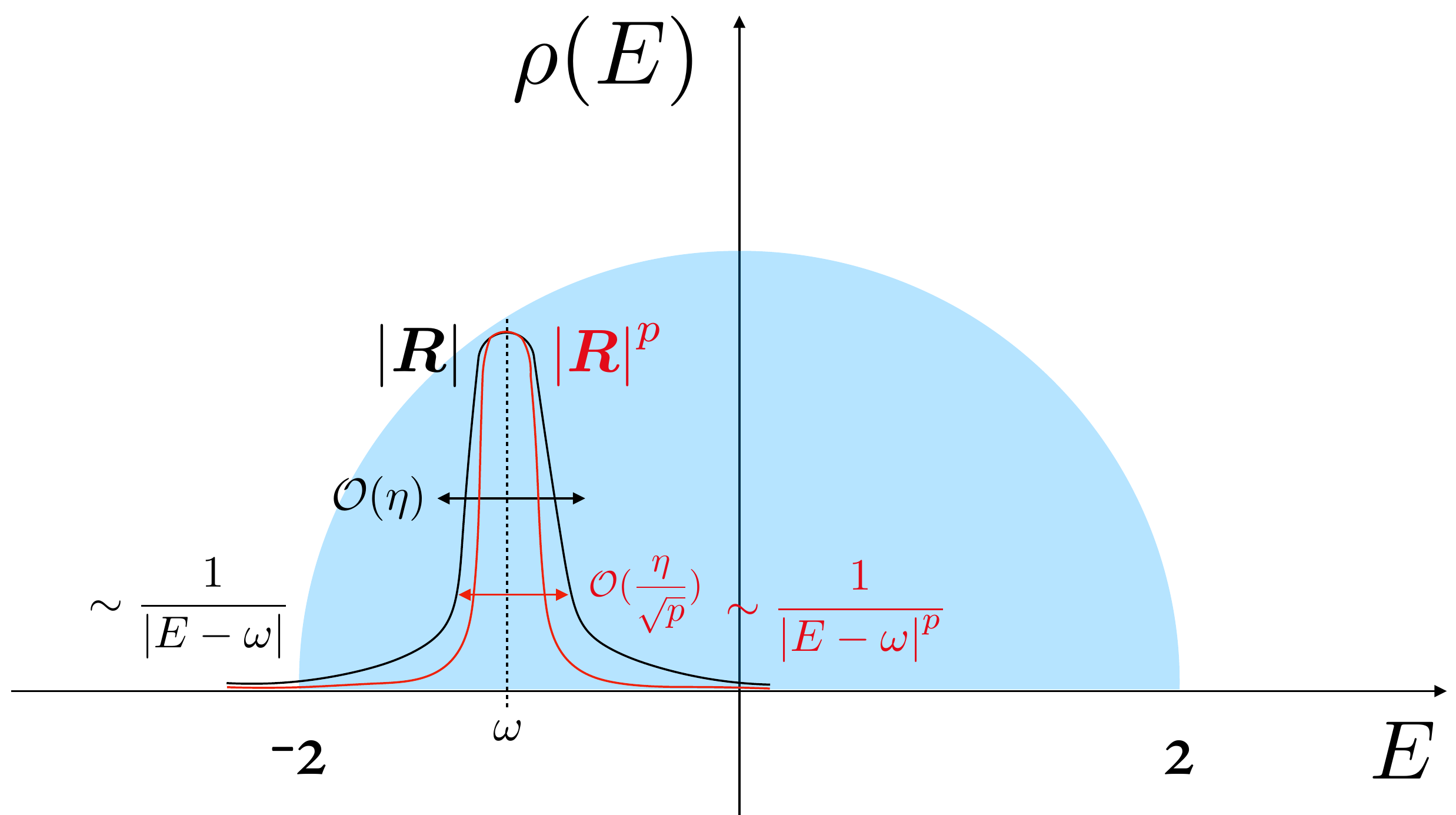}
    \caption{\textbf{Probing the spectrum by resolvents.}  
    The resolvent (black curve) centered at energy $\omega$ with resolution parameter $\eta$ filters out energies distant from $\omega$. Taking powers of the resolvent (red curve) focuses the filter on a narrower region around $\omega$.
    }
    \label{fig:semi_circle_resolvent}
\end{figure}

\section{New results in nonasymptotic random matrix theory}\label{sec:random-matrix}

Our results for the Pauli string ensemble fall into
the category of nonasymptotic universality laws
for random matrices.  This section provides some
context for these results, as well as some details
about the argument.



Asymptotic universality laws are among the celebrated
classical achievements of random matrix theory (RMT).
For example, Wigner showed that the semicircle law
is the limiting spectral distribution of a
(standardized) symmetric matrix with i.i.d.~Rademacher entries above the diagonal.
The universality law for the Wigner matrix states that
the detailed distribution of the entries does not affect
the limiting spectral distribution, provided the first four
moments are bounded.
Subsequently, researchers obtained \emph{nonasymptotic}
comparisons between the spectrum of a Wigner-type matrix and the semicircle distribution.
For surveys, see the monographs~\cite{BS10:Spectral-Analysis,PS10:Eigenvalue-Distribution}.


Our approach depends on a nonasymptotic comparison between
the spectrum of the Pauli string ensemble~\eqref{eqn:Pauli-string-ensemble} and a GUE matrix, whose spectral distribution approximately follows a semicircle law.  This type of result does not fall within the scope of classical universality laws because the Pauli string ensemble barely has any randomness, let alone independent entries.  To implement our program, we first observe that the low-order moments of Pauli string ensemble match the low-order moments of a GUE matrix:
\[
\begin{aligned}
\Expect[ \vH_{PS} ] &= \Expect[ \vH_{GUE} ]; \\
\Expect[ \vH_{PS} \otimes \vH_{PS} ] &= \Expect[ \vH_{GUE} \otimes \vH_{GUE} ]; \\
\Expect[ \vH_{PS} \otimes \vH_{PS} \otimes \vH_{PS} ] &= \Expect[ \vH_{GUE} \otimes \vH_{GUE} \otimes \vH_{GUE} ].
\end{aligned}
\]
For a smooth statistic $f$ of the random matrices, we can take advantage of this coincidence by means of Lindeberg's exchange principle.  Each of the random matrix models can be expressed as a sum of i.i.d.~random matrices, and we can interpolate between the two models by swapping one summand at a time.  At each step, we can control the change between the two models by expanding $f$ as a Taylor series to expose the polynomial moments.  The terms in these expansions cancel through the third order, leaving a fourth-order error.  Our argument is quite different from recent applications~\cite{Cha05:Simple-Invariance,KM11:Applications-Lindeberg} of the Lindeberg
principle in RMT.

In more detail, we consider two random Hermitian matrices $\vH$ and $\tilde{\vH}$ that can be written as sums of independent, centered random matrices (all of the same dimension):
\begin{equation} \label{eqn:rdm-sum-intro}
\vH = \sum_{i=1}^m \vA_i
\quad\text{and}\quad
\tilde{\vH} = \sum_{i=1}^m \tilde{\vA}_i.
\end{equation}
Although less familiar than the classical random matrix ensembles, the independent
sum model is much more flexible and has a wide scope of applicability;
see~\cite{tropp2015introduction} for examples.
Suppose that the low-order polynomial moments of the summands match.  That is, 
\begin{equation} \label{eqn:moment-match-intro}
\Expect \vA_i = \bm{0}
\quad\text{and}\quad
\Expect \vA_i^{\otimes k} = \Expect \tilde{\vA}_i^{\otimes k}
\quad\text{for $k = 1, \dots, t$ and $i = 1, \dots, m$.}
\end{equation}
For example, the first three moments of a random Pauli string match the first three moments of a GUE matrix.  More generally, constructive models in quantum information theory can match an arbitrary number $t$ of moments, similar to the case of a unitary $t$-design.  Our work shows how to compare the spectral properties of models with many matching moments.

Our first universality result compares the trace polynomial moments of the two random matrices.  These results allow us to control the spectral norm of the random matrices.

\begin{restatable}[Universality for moments]{thm}{universalitymoments}
\label{thm:universality_moments}
Consider two families $(\vA_i)$ and $(\tilde{\vA}_i)$ of independent random, Hermitian matrices whose moments match~\eqref{eqn:moment-match-intro} up to order $t \geq 2$, and introduce the sums $\vH$ and $\tvH$ as in~\eqref{eqn:rdm-sum-intro}.  Define the statistics
\begin{align*}
L_{p,k} &\coloneqq \left[ \sum_{i=1}^m \L( \vertiii{\vA_i}_p^k +  \vertiii{\smash{\tvA_i}}_p^k \R) \right]^{1/k}
\quad\text{for $k \geq 1$ and $p \geq 2$;} \\
\sigma^2 &\coloneqq \sum_{i=1}^m \big(\norm{ \Expect \vA_i^2 } + \norm{ \smash{\Expect \tvA_i^2} } \big).
\end{align*}
Then, for each even natural number $p \in 2\mathbb{N}$, we have the bounds
\begin{align} \label{eqn:anthony-moments}
    \labs{ \vertiii{ \vH }_{p} - \vertiii{ \smash{\tvH} }_{p}} 
    \quad\leq\quad \begin{dcases}
        2 p^{t/(t+1)} \cdot L_{p,t+1} + 2p \cdot L_{p, p}; \\
        2 p^{t/(t+1)} \cdot \big(\sigma^2 L_{\infty,\infty}^{t-1}\big)^{1/(t+1)} + 2p \cdot L_{p,p}.
    \end{dcases}
\end{align}
The $p$-norm is defined in~\eqref{eq:define_pnorms}.
\end{restatable}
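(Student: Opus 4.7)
The plan is to implement the Lindeberg exchange principle at the level of trace moments. I introduce the hybrid matrices $\vH^{(k)} := \tvA_1 + \cdots + \tvA_k + \vA_{k+1} + \cdots + \vA_m$, so that $\vH^{(0)} = \vH$ and $\vH^{(m)} = \tvH$. Since $p$ is an even natural number, $\vertiii{\vH}_p^p = \Expect \btr[\vH^p]$, and telescoping reduces the task to bounding each single-swap difference
\[
    \Delta_k \;:=\; \Expect \btr\bigl[(\vW_k + \vA_k)^p - (\vW_k + \tvA_k)^p\bigr],
\]
where $\vW_k := \tvA_1 + \cdots + \tvA_{k-1} + \vA_{k+1} + \cdots + \vA_m$ collects the untouched summands and is independent of both $\vA_k$ and $\tvA_k$.

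Next, I polynomially expand $\btr(\vW + \vS)^p$ by grouping the $2^p$ resulting words according to the number $j$ of positions occupied by $\vS$:
\[
    \btr(\vW + \vS)^p \;=\; \sum_{j=0}^p \sum_{\substack{a_0, \ldots, a_j \geq 0 \\ a_0 + \cdots + a_j = p-j}} \btr\bigl[\vW^{a_0} \vS \vW^{a_1} \vS \cdots \vS \vW^{a_j}\bigr].
\]
Each inner summand is, after conditioning on $\vW_k$, a linear functional of $\vS^{\otimes j}$ contracted against a tensor built from $\vW_k$. The moment-matching hypothesis therefore kills every contribution with $j \leq t$ inside $\Delta_k$, leaving only the tail $j = t+1, \ldots, p$.

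To bound each surviving term I apply the normalized Hölder inequality $\labs{\btr[\vW^{a_0} \vS \vW^{a_1} \cdots \vS \vW^{a_j}]} \leq \norm{\vW}_p^{p-j}\, \norm{\vS}_p^j$, note that there are $\binom{p}{j}$ monomials of degree $j$ in $\vS$, take expectations, apply Hölder a second time to separate $\vW_k$ from $(\vA_k, \tvA_k)$, and sum over $k$. Using a crude pivot $\vertiii{\vW_k}_p \leq \max_{\ell} \vertiii{\vH^{(\ell)}}_p$ yields
\[
    \bigl| \vertiii{\vH}_p^p - \vertiii{\tvH}_p^p \bigr| \;\leq\; \sum_{j=t+1}^{p} \binom{p}{j}\, M^{\,p-j} L_{p,j}^{\,j}
\]
for a pivot $M$ controlling the running $p$-norms. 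Isolating the top-order $j = p$ contribution (which is responsible for the combinatorial blow-up) as the explicit $L_{p,p}$ remainder and balancing the dominant $j = t+1$ term against the pivot yields the first bound of~\eqref{eqn:anthony-moments}, with $p^{t/(t+1)}$ emerging naturally from the optimization. For the second, variance-based bound, I instead peel one copy of $\vS^2$ from each surviving monomial, use $\Expect \vS^2 \preceq \norm{\Expect \vS^2}\, \vI$ to convert it into a scalar contribution, and bound the remaining $j-2$ copies of $\vS$ by $\vertiii{\vS}_\infty$; collecting the scalar factors across $i$ produces the $(\sigma^2 L_{\infty,\infty}^{t-1})^{1/(t+1)}$ factor.

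Finally, the elementary inequality $\labs{a^{1/p} - b^{1/p}} \leq \labs{a-b}^{1/p}$ converts the estimate on $p$-th powers to the claimed bound on the $p$-norms, contributing the outer $1/(t+1)$-type exponent. The main obstacle I anticipate is the combinatorial bookkeeping in the noncommutative expansion: after moment matching annihilates the first $t$ orders, one must show that the $\binom{p}{j}$-many terms of each order $j$, together with the non-uniform pivots $\vertiii{\vW_k}_p$, can be controlled simultaneously to produce the clean exponents $p^{t/(t+1)}$ and $p$ in~\eqref{eqn:anthony-moments}. A secondary subtlety is closing the self-referential bound on the pivot (since $\vertiii{\vW_k}_p$ itself changes with $k$) via a Minkowski-type argument or by tracking a worst-case pivot absorbed into $L_{p,p}$.
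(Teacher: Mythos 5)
Your setup---Lindeberg exchange over hybrid sums, expansion of the trace power by the degree of the swapped summand, cancellation of all orders $j\le t$ by moment matching, and H{\"o}lder on each surviving word---is exactly the paper's route. The genuine gap is in how you pass from the power-level estimate to the norm-level bound. You bound $\labs{\Expect\btr\vH^p-\Expect\btr\tvH^p}$ using a uniform pivot $M=\max_\ell\vertiii{\vH^{(\ell)}}_p$ and then invoke $\labs{a^{1/p}-b^{1/p}}\le\labs{a-b}^{1/p}$. That inequality is true but far too lossy here, and it cannot produce the claimed exponents: the dominant surviving term in your bound has size about $\binom{p}{t+1}M^{p-t-1}L_{p,t+1}^{t+1}$, and its $p$-th root is roughly $M\cdot\big(pL_{p,t+1}/M\big)^{(t+1)/p}$, which stays of order $M$ (hence is vacuous) unless $L_{p,t+1}$ is exponentially small in $p$; in the intended regime ($p\sim n$, $L_{p,t+1}\sim m^{1/(t+1)-1/2}$ with $m=\poly(n)$) it gives nothing, and in no case does a $p$-th root generate a $1/(t+1)$ exponent or the factor $p^{t/(t+1)}$. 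The paper avoids this by never bounding the full difference of $p$-th powers with a fixed pivot: by Jensen, $\vertiii{\vS_-}_p\le\vertiii{\vS_{j-1}}_p$, so the step-$j$ error is at most $a_j\, x_{j-1}^{(p-t-1)/p}+b_j$ with $x_j=\Expect\btr\vS_j^p$, and this self-referential coupled difference inequality is then solved by passing to a differential inequality with an explicit ansatz (Lemmas~\ref{lem:diff_bounds_discrete} and~\ref{lem:ansatz}). The ODE solution yields the additive bound at the level of the $(t+1)/p$ power, $x_m^{(t+1)/p}\le x_0^{(t+1)/p}+\tfrac{t+1}{p}\sum_j a_j+\big(\sum_j b_j\big)^{(t+1)/p}$, and this is precisely where both the $1/(t+1)$ exponent and the improvement from $p^{t+1}$ to $p^{t}$ (hence $p^{t/(t+1)}$ after rooting) originate. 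The ``secondary subtlety'' you defer---closing the self-referential pivot---is in fact the crux of the proof, and absorbing it into $L_{p,p}$ or a worst-case pivot does not recover the stated bound.

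A smaller point: for the variance bound you cannot literally ``peel one copy of $\vS^2$,'' because the copies of the swapped summand in a surviving word are generally separated by powers of $\vW_k$, so $\Expect\vS^2$ never appears as a contiguous factor. The paper instead invokes a nontrivial trace inequality (Prop.~4.1 of Brailovskaya--van Handel) of the form $\Expect\btr[\vA\vY^{p_1}\cdots\vA\vY^{p_k}]\le\norm{\Expect\vA^2}\,\vertiii{\vA}_\infty^{k-2}\,\vertiii{\vY}_p^{p-k}$, used in place of H{\"o}lder term by term, with the rest of the argument unchanged; your outline needs this (or an equivalent decoupling of the non-adjacent factors) to obtain the $\sigma^2$ term.
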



The proof of Theorem~\ref{thm:universality_moments} appears in Section~\ref{sec:moments}.  Theorem~\ref{thm:pnorms_pauli}
follows when we instantiate this result for the Pauli string ensemble~\eqref{eqn:Pauli-string-ensemble} and the GUE.  

We can obtain simpler versions of this result if we pass to the
uniform bound $L_{p, \infty} = \max_i \{ \vertiii{\vA_i}_p, \vertiii{\smash{\tvA_i}}_p \}$
on the summands.  For example,
\begin{align}
\labs{ \vertiii{ \vH }_{p} - \vertiii{ \smash{\tvH} }_{p}} 
    \quad\lesssim\quad \big( 1 + (m/p) \big)^{1/(t+1)} \cdot p L_{p,\infty}. 
\end{align}
Here, the symbol $\lesssim$ suppresses absolute constants only.
Heuristically, we should think about $p \ll m$, 
so there are reductions in the error from matching more moments (i.e., increasing $t$).

Our second universality result provides a comparison for powers of the resolvents of independent sums.  Define
\begin{equation} \label{eqn:rdm-sum-resolvent-intro}
\vR \coloneqq (\vH - \omega + \iunit \eta)^{-1}
\quad\text{and}\quad
\tilde{\vR} \coloneqq (\tilde{\vH} - \omega + \iunit \eta)^{-1}
\quad\text{where $\omega \in \mathbb{R}$ and $\eta > 0$.}
\end{equation}
The random matrices $\vH$ and $\tilde{\vH}$ are defined in~\eqref{eqn:rdm-sum-intro}.


\begin{restatable}[Universality for resolvent moments]{thm}{universalityResolvent}\label{thm:universality_resolvent} 
Instate the assumptions and notation of Theorem~\ref{thm:universality_moments}.  For each even natural number $p \in 2 \mathbb{N}$, the polynomial moments of the resolvent~\eqref{eqn:rdm-sum-resolvent-intro} are related by 
\begin{align}
        \labs{\vertiii{\vR}_{p}- \vertiiismall{\tvR}_{p}} \quad\lesssim\quad  \frac{1+ (m/p)}{\eta} \cdot \left( \frac{pL_{3p,\infty}}{\eta} \right)^{t+1}.
\end{align}
The symbol $\lesssim$ suppresses constants depending only on $t$.
\end{restatable}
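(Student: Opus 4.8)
\textbf{Proof proposal for Theorem~\ref{thm:universality_resolvent}.}

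The plan is to mimic the Lindeberg exchange argument that underlies Theorem~\ref{thm:universality_moments}, but applied to the function $f(\vH) = \btr\labs{(\vH-\omega+\iunit\eta)^{-1}}^p = \btr\L[(\vH-\omega+\iunit\eta)^{-1}(\vH-\omega-\iunit\eta)^{-1}\R]^{p/2}$, which is a smooth (indeed real-analytic) matrix functional on the set of Hermitian matrices, with the crucial feature that $\vR$ and its powers are \emph{globally bounded}: $\norm{\vR}\le 1/\eta$, so $\norm{\vR^k}\le\eta^{-k}$ for every $k$. First I would write $\vH = \sum_i \vA_i$ and $\tvH=\sum_i\tvA_i$ and interpolate one summand at a time, forming the hybrid sums $\vH^{(j)} = \tvA_1+\dots+\tvA_{j-1}+\vA_j+\dots+\vA_m$; telescoping gives
\begin{align}
\labs{\Expect f(\vH) - \Expect f(\tvH)} \le \sum_{j=1}^m \labs{\Expect f(\vS_j + \vA_j) - \Expect f(\vS_j + \tvA_j)},
\end{align}
where $\vS_j$ is the sum of all the non-$j$ summands and is independent of both $\vA_j$ and $\tvA_j$. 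Then I would expand $f(\vS_j + \vX)$ in a Taylor series in $\vX$ around $\vS_j$ to order $t$ (with integral remainder of order $t+1$). Because the first $t$ moments of $\vA_j$ and $\tvA_j$ agree (assumption~\eqref{eqn:moment-match-intro}) and the Taylor coefficients are multilinear forms in $\vX$ evaluated at the \emph{deterministic-given-$\vS_j$} point $\vS_j$, conditioning on $\vS_j$ shows all terms of order $0$ through $t$ cancel, leaving only the two remainder terms. This is the same cancellation mechanism as in Theorem~\ref{thm:universality_moments}; the only difference is bookkeeping the derivatives of the resolvent functional.

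The heart of the estimate is then to bound the $(t+1)$-st order Taylor remainder for $f$. The derivatives of $\vR = (\vH-\omega+\iunit\eta)^{-1}$ with respect to additive perturbations are given by the standard resolvent expansion $\frac{\diff}{\diff s}(\vH+s\vX-\omega+\iunit\eta)^{-1} = -\vR\vX\vR$, so the $k$-th directional derivative of $\vR$ is an alternating product $(-1)^k\, k!\cdot \vR\vX\vR\vX\cdots\vX\vR$ (symmetrized), with $k+1$ resolvent factors and $k$ copies of $\vX$. Differentiating $f = \btr(\vR\vR^\dagg)^{p/2}$ and using $\norm{\vR},\norm{\vR^\dagg}\le 1/\eta$ throughout, each $(t+1)$-st derivative term is a sum of $O(p^{t+1})$ words in $\vR,\vR^\dagg,\vX$ with exactly $t+1$ factors of $\vX$ and total resolvent degree $p+t+1$, so under the trace norm each word is bounded by $\eta^{-(p+t+1)}\cdot\btr(\text{product of }t+1\text{ copies of }\labs{\vX})$; a Hölder inequality in the normalized $p$-norm (as used in the proof of Theorem~\ref{thm:universality_moments}) bounds $\Expect\btr\prod\labs{\vA_j}$ by $\vertiii{\vA_j}_{(t+1)}^{\,t+1}$, and since I intend to pass to the uniform bound, by $L_{(t+1)p,\infty}^{\,t+1}$ or, absorbing a change of exponent, $L_{3p,\infty}^{\,t+1}$ (the $3p$ appears because $p$ itself can be as large as one wants and one needs a moment exponent comfortably above $p$ to run Hölder against the $p/2$ trace-power factors). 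Combining: each summand in the telescope is $\lesssim p^{t+1}\,\eta^{-(p+t+1)}\,L_{3p,\infty}^{\,t+1}$ times combinatorial factors absorbed into the $t$-dependent constant, and summing over $j$ and also accounting for the two cases of the Taylor remainder contributes a factor $(1+m/p)$ exactly as in Theorem~\ref{thm:universality_moments} (from balancing the $m$ summands against the $p$-th-root normalization). Finally I would take $p$-th roots: $\vertiii{\vR}_p - \vertiii{\tvR}_p$ is controlled via the elementary inequality $\labs{a^{1/p}-b^{1/p}} \le \labs{a-b}/(p\min(a,b)^{(p-1)/p})$ together with the a priori bound $\vertiii{\vR}_p \le 1/\eta$, which converts the bound on $\labs{\Expect\btr\labs{\vR}^p - \Expect\btr\labs{\tvR}^p}$ into the claimed bound on the $p$-norms, turning $\eta^{-(p+t+1)}$ into $\eta^{-1}\cdot(1/\eta)^{t+1}$ and leaving $(pL_{3p,\infty}/\eta)^{t+1}$ as asserted.

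The main obstacle I anticipate is the careful combinatorics of the $(t+1)$-st Taylor remainder of $f = \btr\labs{\vR}^p$: one must differentiate a $p/2$-fold product, so Leibniz generates many terms, and one must verify that \emph{every} resulting monomial in $\vR,\vR^\dagg,\vX$ — after using cyclicity of the trace and $\norm{\vR}\le 1/\eta$ — reduces cleanly to $\eta^{-(p+t+1)}$ times a product of $t+1$ perturbation factors and no more, so that the Hölder step and the subsequent $p$-th root extraction produce precisely the advertised powers of $p$, $\eta$, and $L_{3p,\infty}$ rather than something worse. A secondary technical point is justifying the Taylor expansion and the interchange of expectation and derivative rigorously for the operator-valued remainder (dominated convergence using the uniform resolvent bound handles this, but it must be stated). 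Everything else — the telescoping, the moment cancellation through order $t$, and the $(1+m/p)$ factor — is structurally identical to the proof of Theorem~\ref{thm:universality_moments} and should be quoted from there with the resolvent-specific derivative bounds substituted in.
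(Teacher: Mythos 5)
Your overall strategy — Lindeberg exchange, telescoping over one summand at a time, moment cancellation through order $t$, then bounding a $(t+1)$-st order remainder — is the same as the paper's, and the choice of Taylor-with-integral-remainder on the derivatives of $\vR$ is a legitimate alternative to the paper's Neumann-series expansion (Proposition~\ref{prop:expand_resolvent}). But there is a genuine gap in the final $p$-th-root extraction, and it traces back to an earlier decision that looks innocuous.

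You bound \emph{every} resolvent factor in the remainder words by $\norm{\vR}\le\eta^{-1}$, producing a per-step error of the form $p^{t+1}\eta^{-(p+t+1)}L_{3p,\infty}^{t+1}$ that is a fixed number, not a function of the running resolvent moment. You then try to pass from $\labs{\Expect\btr\labs{\vR}^p - \Expect\btr\labs{\tvR}^p}$ to $\labs{\vertiii{\vR}_p - \vertiii{\tvR}_p}$ via the mean-value inequality $\labs{a^{1/p}-b^{1/p}}\le\labs{a-b}/\bigl(p\min(a,b)^{(p-1)/p}\bigr)$ ``together with the a priori bound $\vertiii{\vR}_p\le 1/\eta$.'' But $\vertiii{\vR}_p\le 1/\eta$ is an \emph{upper} bound on $a$ and $b$, whereas the mean-value inequality needs a \emph{lower} bound on $\min(a,b)$ to control the denominator. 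No such lower bound exists in general (take $\omega$ far outside the spectrum and $\btr\labs{\vR}^p$ is tiny), so this step fails. The alternative elementary bound $\labs{a^{1/p}-b^{1/p}}\le\labs{a-b}^{1/p}$ is always valid but produces $m^{1/p}p^{(t+1)/p}\eta^{-1-(t+1)/p}L_{3p,\infty}^{(t+1)/p}$, which is nowhere near the claimed exponents.

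The paper avoids this by never throwing away all the resolvent mass at once: in Proposition~\ref{prop:expand_resolvent} the term $\vM_k$ is bounded by $\frac{(4p)^k}{\eta^{4k/3}}\vertiii{\vZ^{-1}}_p^{\,p-k/3}\vertiii{\vA}_{3p}^k$, i.e.\ only $\eta^{-4k/3}$ powers of the crude bound are spent, while a factor $\vertiii{\vR_-}_p^{\,p-k/3}$ is retained via Hölder. This makes the per-step error \emph{self-referential} — of the form $\labs{x_j-x_{j-1}}\le a_j\,x_{j-1}^{(p-1)/p}+b_j$ with $x_j=\Expect\btr\labs{\vR_j}^p$ and $a_j,b_j$ involving only $p,\eta,L$ — and the discrete-to-ODE comparison (Lemma~\ref{lem:diff_bounds_discrete}) together with the explicit ansatz (Lemma~\ref{lem:ansatz}) then yields directly
\begin{align}
    \labs{x_m^{1/p}-x_0^{1/p}} \le \frac{1}{p}\sum_{j=1}^m a_j + \Bigl(\sum_{j=1}^m b_j\Bigr)^{1/p},
\end{align}
with no lower bound on $x_j$ required. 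That identity is also where the $(1+m/p)$ prefactor comes from ($\frac{1}{p}\sum a_j \sim m/p$, and $m^{1/p}\le 1+m/p$ by Young). So the combinatorics you flagged as the ``main obstacle'' is fine; the actual missing idea is to keep the $\vertiii{\vR_-}_p$ dependence alive in the Hölder step so that the $p$-th root can be extracted through the self-bounding recursion rather than by a post-hoc root-difference inequality. A secondary point: the self-referential bound also needs the small lemma that $\Expect\btr\labs{\vR_-}^p$ is controlled by $\Expect\btr\labs{\vR_{j-1}}^p$ plus a small correction (the paper's extra geometric series after \eqref{eq:resolvent_difference}), which your sketch does not mention.
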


See Appendix~\ref{sec:resolvent} for the proof of Theorem~\ref{thm:universality_resolvent}.
We obtain Theorem~\ref{thm:resolvent_paulis}
by instantiating the result for the Pauli
string ensemble and the GUE.

The resolvent moment comparison (Theorem~\ref{thm:universality_resolvent}) is not sufficient
to guarantee that a random realization $\vH_{PS}$
of the Pauli string ensemble places significant
density on the low-energy states.  To achieve this
goal, we must also show that $\vertiii{ \vR_{PS} }_p$ concentrates near its expected value.  This claim requires a separate
argument (Theorem~\ref{thm:resolvent_con_good_lambda}).
The results on concentration of the trace moments of
the resolvent are new.


\subsection{Related work}

The field of RMT has historically
focused on asymptotic limit laws for the spectral
density of matrices from the classical ensembles
(Wigner, Wishart, Jacobi, etc.).  In this setting,
there has also been a significant amount of research
on rates of convergence, and some of these results
can be interpreted as nonasymptotic universality
laws.  For example, see Bai \& Silverstein~\cite[Chap.~8]{BS10:Spectral-Analysis}.

In the last few years, researchers have recognized
that the scope of the universality phenomenon extends
well beyond the classical matrix ensembles.  In particular,
we have started to develop a deeper understanding of the
independent sum model.  Tropp obtained the first general
result of this type~\cite{Tro18:Second-Order-Matrix}.
His theory covers a sum of independent Gaussian random
matrices, and it provides conditions under which
the polynomial moments approximate the moments of
the semicircle distribution.  Building on Tropp's work,
Bandeira et al.~\cite{Bandeira2021MatrixCI} developed a method for
comparing a sum of independent Gaussian random matrices
with a free probability model, which can capture a
wider range of spectral distributions.  With some effort,
the techniques from these two papers can likely be applied
to the Gaussian variant of the Pauli string
ensemble~\eqref{eqn:Pauli-string-ensemble}
to obtain results similar to our main theorems.

The most immediate precedent for our work is a recent
preprint by Brailovskaya \& van Handel~\cite{Tatiana_22_universality}.
Their paper compares an independent sum $\vH$ of random matrices
with an independent sum $\vG$ of Gaussian random matrices, where
corresponding summands share the same mean and covariance:
\begin{equation} \label{eqn:gauss-compare-intro}
\vH = \sum_{i=1}^m \vA_i
\quad\text{and}\quad
\vG = \sum_{i=1}^m \tilde{\vA}_i
\quad\text{where}\quad
\Expect[ \vA_i ] = \Expect[ \tilde{\vA}_i ]\quad\text{and}\quad
\Expect[ \vA_i \otimes \vA_i ]= \Expect[ \tilde{\vA}_i \otimes \tilde{\vA}_i].
\end{equation}
The main result of the paper~\cite{Tatiana_22_universality}
provides conditions to guarantee that the two random matrix
models have similar polynomial moments and polynomial
resolvent moments.

\begin{thm}[Universality of moments and resolvent moments~\cite{Tatiana_22_universality}] \label{thm:tatiana}
Consider two random matrix models as in~\eqref{eqn:gauss-compare-intro}.  Define the statistics
\[
v :=  \lnorm{\sum_{i=1}^m\Expect \vA^2_i}
\quad\text{and}\quad L_{\infty} := \max\nolimits_i \norm{\vA_i}.
\]
Then, for every even natural number $p \in 2\mathbb{N}$, the polynomial moments and resolvents satisfy the bounds
\begin{align}
    \labs{ \vertiii{ \vH }_{p} - \vertiii{ \vG }_{p}} &\quad\lesssim\quad  \left(p^2 v L_{\infty}\right)^{1/3} +   pL_{\infty}; \label{eqn:tatiana-moments} \\
    \labs{ \vertiii{ \vR_{\vH} }_{p}- \vertiii{ \vR_{\vG} }_{p}}
        &\quad\lesssim\quad \frac{ p^2 v L_{\infty} + p^3 L_{\infty}^3 }{ \eta^4 }.
\end{align}
The $p$-norm is defined in~\eqref{eq:define_pnorms},
and the symbol $\lesssim$ suppresses absolute constants.
\end{thm}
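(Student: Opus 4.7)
The plan is to establish both bounds by Lindeberg's exchange principle---precisely the technique the authors use in their more general Theorems~\ref{thm:universality_moments} and~\ref{thm:universality_resolvent}, specialized here to $t=2$ matched moments. I would introduce the interpolating family
\[
\vH_k := \sum_{i\le k} \vA_i + \sum_{i>k} \tvA_i \quad \text{for } k=0,1,\ldots,m,
\]
so that $\vH_0 = \vG$ and $\vH_m = \vH$, together with the ``rest'' $\vH^{(k)} := \vH_k - \vA_k = \vH_{k-1} - \tvA_k$, which is independent of both $\vA_k$ and $\tvA_k$. The differences $\btr \vH^p - \btr \vG^p$ and $\btr \vR_{\vH}^p - \btr \vR_{\vG}^p$ then telescope into $m$ single-summand swaps, so I need only control one swap.

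For each swap I would Taylor expand $f(\vH^{(k)} + \vX)$ in $\vX$ around $\vX = \vec 0$ to third order with integral remainder, where $f(\vM) = \btr \vM^p$ for the moment bound and $f(\vM) = \btr(\vM - \omega + \iunit\eta)^{-p}$ for the resolvent bound. The first-moment match $\Expect \vA_k = \Expect \tvA_k = \vec 0$ and second-moment match $\Expect \vA_k \otimes \vA_k = \Expect \tvA_k \otimes \tvA_k$ force the zeroth, first, and second order terms to cancel after expectation, so the per-swap contribution collapses to an integral over $s \in [0,1]$ of $\Expect\, D^3 f(\vH^{(k)} + s \vA_k)[\vA_k^{\otimes 3}]$ minus its counterpart with $\tvA_k$.

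For the moment case, the cubic coefficient of $\btr (\vH^{(k)}+t\vA_k)^p$ unfolds into $\binom{p}{3}$ cyclic traces of the form $\btr[\vM^{i_0}\vA_k \vM^{i_1}\vA_k \vM^{i_2}\vA_k \vM^{i_3}]$ with $i_0+i_1+i_2+i_3 = p-3$, which I would bound via noncommutative H\"older by $\norm{\vA_k}^3$ times a product of Schatten norms. For the resolvent, the identity $D\vR[\vV] = -\vR\vV\vR$ with $\norm{\vR}\le 1/\eta$ reveals that every third-order term in the expansion of $\btr \vR^p$ contains exactly $p+3$ resolvents interspersed with three copies of the perturbation, yielding a uniform per-swap bound of order $p^3 \norm{\vA_k}^3 / \eta^{p+3}$. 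Summing over $k$ and converting from the $p$-th power difference to the $p$-norm via $\labs{a^{1/p} - b^{1/p}} \le (1/p)\max(a,b)^{1/p-1}\labs{a-b}$ produces a bound of the announced shape.

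The main obstacle will be extracting the \emph{right} scaling in $p$, $v$, and $L_\infty$ rather than just some cubic estimate. A blunt use of $\norm{\vA_k} \le L_\infty$ throughout yields an aggregate $m p^3 L_\infty^3/\eta^4$, which wastefully replaces the variance proxy $v$ by $m L_\infty^2$. The sharpening to $v = \norm{\sum_i \Expect \vA_i^2}$ requires pairing two of the three $\vA_k$ factors in the third-derivative expression and exploiting that partial expectations of the form $\Expect[\vA_k(\cdot)\vA_k]$ are dominated in operator norm by $\Expect \vA_k^2$, so that summing over $k$ produces $v$ rather than $m L_\infty^2$. A H\"older-style interpolation against the blunt first-order estimate (which contributes only $p L_\infty$) then generates the cube-root exponent $(p^2 v L_\infty)^{1/3}$ in~\eqref{eqn:tatiana-moments}. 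The subtlest bookkeeping is ensuring that Schatten norms of the intermediate operators $\vH^{(k)} + s \vA_k$ remain controlled uniformly in $k$ and $s$, which either demands an \emph{a priori} norm bound on the partial interpolants or a second inductive pass of the same Lindeberg scheme.
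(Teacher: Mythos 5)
First, note that the paper does not actually prove Theorem~\ref{thm:tatiana}: it is imported from Brailovskaya--van Handel, and the paper only sketches that their proof runs through a continuous (Stein-type) interpolation that preserves second moments along the whole path, together with a cumulant expansion. Your proposal instead reruns the paper's own Lindeberg-exchange machinery (Theorems~\ref{thm:universality_moments} and~\ref{thm:universality_resolvent} at $t=2$), and this is exactly where the gap lies: the swap-by-swap argument does not produce the variance parameter $v = \lnorm{\sum_i \Expect \vA_i^2}$ appearing in the statement. In each swap your ``pairing'' step bounds the partial expectation over $\vA_k$ by a quantity involving $\norm{\Expect \vA_k^2}$ (this is precisely the refined trace inequality, quoted in the paper from \cite{Tatiana_22_universality}, Prop.~4.1, that the authors use in the ``refined statistics'' pass of Theorem~\ref{thm:universality_moments}); summing these per-swap bounds over $k=1,\dots,m$ unavoidably yields $\sigma^2 = \sum_i \norm{\Expect \vA_i^2}$, i.e.\ the sum of norms, not the norm of the sum. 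The paper states explicitly that $\sigma^2$ and $v$ can differ by a factor as large as the ambient dimension $N$ (they coincide only for i.i.d.\ sums), and that the interpolation/cumulant method is what buys the sharper $v$-dependence. You cannot recombine the swaps to move the sum inside the norm, because the ``rest'' matrix $\vH^{(k)}$ changes with $k$, so the third-order terms at different steps are expectations against different random environments. Hence the proposal proves (at best) the paper's own Theorem~\ref{thm:universality_moments}/\ref{thm:universality_resolvent} with $\sigma^2$ and $L_{p,p}$-type tails, which is a genuinely weaker statement than the theorem as written.

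Two smaller points. For the resolvent bound, the Lindeberg route as in Theorem~\ref{thm:universality_resolvent} yields an error of order $(1+m/p)\,\eta^{-1}(pL/\eta)^{3}$, i.e.\ roughly $m p^2 L_\infty^3/\eta^4$; even with the refined trace inequality per swap this becomes $\sim p^2\sigma^2 L_\infty/\eta^4$, again $\sigma^2$ rather than $v$. And the ``subtlest bookkeeping'' you flag at the end---controlling the Schatten norms of the intermediate hybrids uniformly---is not optional: the paper resolves it with a self-bounding recursion (the difference-to-differential-inequality lemmas), and without that ingredient, or an a priori bound on the interpolants, the per-swap estimates do not close. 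So the proposal is a reasonable reconstruction of the paper's \emph{new} universality theorems, but it does not establish the cited result with the parameter $v$, which is the whole point of Theorem~\ref{thm:tatiana} as contrasted with the Lindeberg approach in the surrounding discussion.
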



The proof of this result uses a version of Stein's method, inspired by~\cite{LP09:Central-Limit}.
The basic technique is to interpolate smoothly between the two random matrix models,
preserving the second moments along the interpolation path.
To control the derivative of a spectral function along the path,
the authors use a cumulant expansion along with bounds on the
higher derivatives of the function.

It is fruitful to compare the bounds~\eqref{eqn:anthony-moments} and \eqref{eqn:tatiana-moments}.
The variance parameter $v$ in Theorem~\ref{thm:tatiana} is never larger than
the variance parameter $\sigma^2$ in Theorem~\ref{thm:universality_moments}
because the norm is inside the sum in $v$.  The two quantities $\sigma^2$
and $v$ coincide for i.i.d.~sums, but they can differ by a factor as large
as the ambient dimension $N$ in general.  The differences between the tail
parameters ($L_{p,p}$ and $L_{p,\infty}$ and $L_{\infty}$) are not an essential
feature of the analysis; we have stated the simplest versions of the results,
rather than the optimal versions.


On the other hand, the approach in Theorem~\ref{thm:tatiana} cannot provide
more refined comparisons for random matrix models that match beyond the second
moment (except perhaps when the third moments are identically zero).  There
are intrinsic reasons that continuous interpolation does not seem to
extend beyond second moments (Appendix~\ref{sec:difficult_higher}).
In contrast, the method based on Lindeberg exchange gracefully handles matching
moments of any order.

As we will see (Section~\ref{sec:examples}), there are some
natural settings where higher-order moments coincide.  The resulting
higher-order error bounds improve over the second-order bounds.
In the setting of quantum information, we often need to take the moment parameter 
$p \sim \log N \sim n$, so this improvement is significant.

In addition, our argument is conceptually and technically simpler than the approach
based on Stein's method and cumulant expansions.  As a consequence,
it may be easier to extend to other settings,
and it may have a different scope of application.
Altogether, our work contributes to the emerging toolkit for
nonasymptotic RMT.

\subsection{Further examples} \label{sec:examples}


Our universality results apply to many different families
of random matrix models, including examples that may not
resemble the Gaussian models that are central
to the comparison in~\cite{Tatiana_22_universality}. For quantum computing applications, these families could potentially capture realistic sparse matrices better than random Pauli string sums. However, they generally require access to an additional block-encoding, which we do not discuss in this work.

\subsubsection{Comparing sparse matrices with GUE}\label{sec:sparse_matrices}

In addition to the random Pauli string ensemble, we can describe another family of sparse random matrices that also matches the low moments of GUE. Therefore, the universality results (Theorem~\ref{thm:universality_moments}, Theorem~\ref{thm:universality_resolvent}) show
that these models nearly follow a semicircular distribution.

\begin{defn}[Permutations with complex signs]
A random complex signed permutation matrix is the product of a uniformly random permutation matrix $\vP$ and a diagonal matrix $\vD$ with complex signs:
\begin{align}
    \vQ:= \vD \vP  \quad \text{where} \quad  \vD_{ab} = \delta_{ab} \frac{r_a + \ri r'_a}{\sqrt{2}}\quad \text{and} \quad r_a ,r'_a \stackrel{i.i.d.}{\sim} \{1,-1\}.
\end{align}
\end{defn}


\begin{prop}[Complex signed permutations] \label{fact:complex_sign_match_GUE} Consider random matrices $\vA$ and $\tvA$
that take the form
\begin{align}
        \vA &:= \frac{\vQ + \vQ^{\dagger}}{\sqrt{2}} \quad \text{where $\vQ$ is a complex signed permutation} \\
        \tvA &\sim \vH_{GUE}.
\end{align}
For these models, the first three moments match:
\begin{align}
    \Expect [\vA^{\otimes k}] = \Expect [\tvA^{\otimes k}]\quad  \text{for each}\quad k = 1,2,3.
\end{align}
\end{prop}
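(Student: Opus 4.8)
The plan is to verify the moment-matching condition $\Expect[\vA^{\otimes k}] = \Expect[\tvA^{\otimes k}]$ for $k = 1, 2, 3$ by direct computation on both sides, exploiting the fact that the GUE moments are determined by Gaussian Wick calculus while the signed-permutation moments reduce to combinatorics of permutations and independent signs. First I would normalize conventions: recall that $\vH_{GUE}$ has entries that are (complex, Hermitian) mean-zero Gaussians with $\Expect[(\vH_{GUE})_{ab}(\vH_{GUE})_{cd}] = \frac{1}{N}\delta_{ad}\delta_{bc}$ (one over dimension from the scaling in the excerpt), so in particular $\Expect \vH_{GUE} = \bm 0$, $\Expect[\vH_{GUE}\otimes\vH_{GUE}] = \frac{1}{N}\mathsf{SWAP}$, and all odd moments vanish, so the third moment is zero. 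Then I would compute the corresponding quantities for $\vA = (\vQ + \vQ^\dagger)/\sqrt 2$.

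The key steps in order: (1) First moment. Since $\vQ = \vD\vP$ with $\vD$ a diagonal matrix of independent mean-zero signs $(r_a + \ri r_a')/\sqrt2$ independent of $\vP$, we have $\Expect[\vQ] = \Expect[\vD]\Expect[\vP] = \bm 0$ because $\Expect r_a = \Expect r_a' = 0$; hence $\Expect[\vA] = \bm 0 = \Expect[\tvA]$. (2) Third moment / odd moments. For odd $k$, any monomial in the entries of $\vA$ involves an odd total number of $\vD$-factors at certain sites; more carefully, expanding $\vA^{\otimes 3}$ produces terms each of which is a product of three factors drawn from $\{\vQ, \vQ^\dagger\}$, and each such term, when we take the expectation over $\vD$ (conditionally on $\vP$), is a product of expectations of signs $r_a^{p}(r_a')^{q}$ which vanishes unless every site index appears an even number of times — impossible with three factors each contributing one site-index to its row and one to its column unless the permutation structure forces coincidences, and a parity count shows the total degree in the sign variables is odd at some site. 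So $\Expect[\vA^{\otimes 3}] = \bm 0 = \Expect[\tvA^{\otimes 3}]$. The cleanest way to see this is: $\vD$ and $-\vD$ are equidistributed, so $\vQ \mapsto -\vQ$, hence $\vA \mapsto -\vA$, is a distributional symmetry, forcing all odd tensor moments to vanish; this handles $k=1$ and $k=3$ simultaneously. (3) Second moment. This is the one genuine computation: I would write $\vA \otimes \vA = \frac12(\vQ\otimes\vQ + \vQ\otimes\vQ^\dagger + \vQ^\dagger\otimes\vQ + \vQ^\dagger\otimes\vQ^\dagger)$ and take expectations. By the $\vQ\mapsto-\vQ$ symmetry the terms $\vQ\otimes\vQ$ and $\vQ^\dagger\otimes\vQ^\dagger$ vanish (odd in $\vD$), leaving $\frac12(\Expect[\vQ\otimes\vQ^\dagger] + \Expect[\vQ^\dagger\otimes\vQ])$. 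Now $\Expect[\vQ\otimes\vQ^\dagger]_{(ab),(cd)} = \Expect[\vD_{aa}\overline{\vD_{dd}}]\,\Expect[\vP_{ab}(\vP^\top)_{cd}] = \delta_{ad}\cdot\Expect[\vP_{ab}\vP_{dc}]$ using that $\Expect[\vD_{aa}\overline{\vD_{dd}}] = \delta_{ad}$ (since $\Expect|\vD_{aa}|^2 = \frac12(1+1) = 1$ and distinct diagonal entries are independent mean-zero); and $\Expect[\vP_{ab}\vP_{dc}]$ over a uniform random permutation equals $\frac1N\delta_{bc}$ when $a=d$ and $b=c$ (the event that $\vP$ sends $b\mapsto a$, which has probability $1/N$), i.e. the whole thing collapses to $\frac1N\delta_{ad}\delta_{bc}$. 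Similarly $\Expect[\vQ^\dagger\otimes\vQ]$ gives $\frac1N\delta_{ad}\delta_{bc}$ too, so $\Expect[\vA\otimes\vA] = \frac1N\,\delta_{ad}\delta_{bc} = \frac1N\mathsf{SWAP} = \Expect[\vH_{GUE}\otimes\vH_{GUE}]$.

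The main obstacle is bookkeeping in step (3): getting the index contractions right for $\Expect[\vP_{ab}\vP_{dc}]$ under a uniform random permutation matrix $\vP$ (there are two cases — the two matrix entries refer to the same permutation pair or to disjoint pairs — and only the former survives when combined with the Kronecker deltas from $\vD$), and making sure the $\sqrt2$'s in $\vA$, in $\vD$, and the $1/N$ Gaussian normalization all line up. I would double-check the normalization by an independent sanity computation of $\Expect\btr[\vA^2]$: it should equal $1$, matching $\Expect\btr[\vH_{GUE}^2] = 1$ and consistent with the semicircle of radius $2$. A remark I would add: the argument shows more — the $\vQ\mapsto -\vQ$ symmetry kills all odd moments of $\vA$, so the only content is matching the even moments, and one can check directly that the fourth moment does \emph{not} match (free vs. Gaussian fluctuations differ at order $1/N^2$), which is why the hypotheses of Theorem~\ref{thm:universality_moments} are invoked with $t=3$ and not higher.
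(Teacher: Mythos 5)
Your overall line of argument agrees with the paper's (both proceed by direct computation: odd moments vanish by a symmetry, and the second moment is computed term by term using the independence of $\vD$ and $\vP$ and the single-$1$-per-row structure of a random permutation matrix), and your index bookkeeping in the final step is correct. However, there is one genuine error in your justification of a key cancellation.

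You claim that the cross-free terms $\Expect[\vQ\otimes\vQ]$ and $\Expect[\vQ^\dagger\otimes\vQ^\dagger]$ vanish ``by the $\vQ\mapsto-\vQ$ symmetry'' because they are ``odd in $\vD$.'' This is wrong: $\vQ\otimes\vQ = \vD\vP\otimes\vD\vP$ contains \emph{two} factors of $\vD$, so it is \emph{even} under $\vD\mapsto-\vD$, and the sign-flip symmetry yields $\Expect[\vQ\otimes\vQ]=\Expect[(-\vQ)\otimes(-\vQ)]=\Expect[\vQ\otimes\vQ]$, a tautology rather than a vanishing. The actual mechanism is the \emph{complex} structure of $\vD$: since $\vD_{aa}=(r_a+\ri r_a')/\sqrt2$, one has $\Expect[\vD_{aa}^2]=\tfrac12(\Expect r_a^2-\Expect r_a'^2+2\ri\,\Expect[r_ar_a'])=0$, so $\Expect[\vD\otimes\vD]=\bm 0$, which kills $\Expect[\vQ\otimes\vQ]$ after conditioning on $\vP$. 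If you want a symmetry-based explanation, the correct one is $\vQ\mapsto\ri\vQ$ (a distributional symmetry because $\ri\vD_{aa}$ has the same law as $\vD_{aa}$), which gives $\Expect[\vQ\otimes\vQ]=\ri^2\,\Expect[\vQ\otimes\vQ]=-\Expect[\vQ\otimes\vQ]$. This distinction is not cosmetic: with a \emph{real} signed permutation ($\vD$ real $\pm1$ diagonal) the terms $\Expect[\vQ\otimes\vQ]$ do \emph{not} vanish, and the resulting model matches GOE rather than GUE, exactly as the paper remarks immediately after the proposition. Your argument as stated would have ``proved'' the GUE match for the real model as well, which is false.

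The rest is fine: the odd-moment vanishing via $\vQ\mapsto-\vQ$ is correct for $k=1,3$, the computation of $\Expect[\vQ\otimes\vQ^\dagger]$ via $\Expect[\vD_{aa}\overline{\vD_{dd}}]=\delta_{ad}$ and $\Expect[\vP_{ab}\vP_{ac}]=\delta_{bc}/N$ is correct, and the sanity check $\Expect\btr[\vA^2]=1$ is a sensible one.
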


See Appendix~\ref{sec:proof_complex_match} for the calculation. One may also consider random real signed permutations,\footnote{ The random signed permutation is defined by $\vQ':= \vD' \vP$ where $\vD'_{ab} = \delta_{ab} r_a$ and $r_a \stackrel{i.i.d.}{\sim} \{1,-1\}$.}
which match the first three moments of the Gaussian Orthogonal Ensemble (GOE).



\subsubsection{Higher moments matching}




Even though higher-moment matching examples are less common in the wild, we can describe several pairs of models that match up to arbitrarily high moments. The first example considers conjugating a fixed matrix by random unitaries:
\begin{align}
        \tvA_i&:= \frac{1}{\sqrt{m}} \tvU_{i}\vsigma\tvU_{i}^{\dagger}  \quad \text{where} \quad \tvU_{i} \sim \text{haar};\\
        \vA_i&:= \frac{1}{\sqrt{m}}\vU_{i}\vsigma\vU_{i}^{\dagger}  \quad \text{where} \quad \vU_{i} \sim \text{unitary $t$-design.}
\end{align}
Indeed, if we take the unitaries $\vU_i$ to be the Clifford circuits (exact 3-design) and $\vsigma$ to be a fixed Pauli string, we nearly obtain the Pauli string ensemble (up to the identity element that cannot be produced by conjugation). However, beyond Clifford circuits, we do not know other examples where the matrices $\vA_i$ remain sparse. 

If we insist on sparse matrices, here is another example. 
\begin{align}
        \tvA_i&:= \frac{\tvQ_i + \tvQ^{\dagger}_i}{\sqrt{2m}} \quad \text{where}\quad \tvQ_i \sim \quad \text{i.i.d.~complex signed permutations;}\\
        \vA_i&:= \frac{\vQ_i + \vQ^{\dagger}_i}{\sqrt{2m}} \quad \text{where}\quad \vQ_i \sim \quad \text{i.i.d.~$t$-wise independent complex signed permutations.}
\end{align}
In this context, $t$-wise independent permutation is exactly the $t$-th moment matching condition
$\Expect \vQ_i^{\otimes t} = \Expect \tvQ_i^{\otimes t}$. 
Exact and \textit{approximate} constructions for both $t$-designs~\cite{Brand_o_2016,exactdesign_Nakata_2021} and $t$-wise independent permutations~\cite{almost_kwise_13_Alon} are available in the literature.
We leave for future work for a careful analysis of the approximate case where very few random bits are needed.

\section{Comments on dequantization and quantum advantage} \label{sec:q_advantage}


In this section, we comment on the \textit{classical} complexity for the low-energy problem. The flavor differs from local Hamiltonian problems because our model is highly nonlocal and has a semi-circular spectrum. 

\textbf{1. How far does dequantization go?}
As we mentioned, recent developments in dequantization show that many linear algebra tasks can be efficiently solved assuming certain classical access to a quantum state. In particular, existing results consider low-rank matrices for various tasks~\cite{Tang2018QuantumPC, Chia2018QuantuminspiredSC, Chia2019SamplingbasedSL,quantum_inspired_Tang_2019} or high-rank matrices but with constant accuracy~\cite{Gharibian_2022}. 

In the setting of Theorem~\ref{thm:low-energy-density}, we provide an efficient classical witness for the optimum if the accuracy $\epsilon >0$ is an arbitrarily fixed \textit{constant} (with polynomially large $m$). The idea is a simple polynomial approximation. However, the cost of manipulating the witness is $\exp(\tilde{\Omega}(1/\sqrt{\epsilon}))$, which scales poorly with the constant $\epsilon$.
\begin{prop}[Efficient classical witness at arbitrary constant accuracy]\label{prop:efficient_classical_witness}
For any $\epsilon$ and large enough $m = \Omega(\poly(n,\epsilon^{-1}))$, there is a degree $d=\CO(1/\sqrt{\epsilon})$ polynomial $p_d(x)$
such that the associated ansatz state has low energy
\begin{align}
    \vrho \propto p_d(\vH)^2 \quad \text{such that} \quad \tr[\vrho\vH] \le (1-\epsilon) \lambda_{\min}(\vH).
\end{align}
Further, this can be efficiently verified classically in runtime 
\begin{align}
    \CO((dm)^{2d} n d).
\end{align}
\end{prop}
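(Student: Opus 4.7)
The plan is to construct the witness polynomial by Chebyshev amplification of the lower spectral edge. I would set
\[
p_d(x) \coloneqq T_d\L(-x/(2-2\delta)\R),
\]
where $T_d$ is the first-kind Chebyshev polynomial and $\delta$ is a small parameter proportional to $\epsilon$. This choice gives $\labs{p_d(x)}\le 1$ for $x\in[-(2-2\delta),\,2-2\delta]$ and an edge amplification $p_d(-2)^2\asymp\exp(2d\sqrt{2\delta})$, with $p_d^2$ transitioning smoothly into the bulk. The ansatz $\vrho\propto p_d(\vH_{PS})^2$ then acts as a spectral filter concentrating mass in a window of width $\Theta(\delta)$ at the bottom of the spectrum, which is exactly where the semicircular density accumulates.

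Second, I would invoke Theorems~\ref{thm:pnorms_pauli} and~\ref{thm:resolvent_paulis}, together with the companion resolvent concentration estimate Theorem~\ref{thm:resolvent_con_good_lambda}, to certify (with probability at least $1-e^{-\Omega(n^{1/3})}$, provided $m=\poly(n,\epsilon^{-1})$ is sufficiently large) both (i) $\lambda_{\min}(\vH_{PS})\in[-2-\epsilon/10,\,-2+\epsilon/10]$, and (ii) a semicircular density lower bound $N^{-1}\labs{\{i:\lambda_i(\vH_{PS})\in[-2,\,-2+2\delta]\}}\ge c\,\delta^{3/2}$ for an absolute constant $c>0$. Part (ii) is obtained by integrating the trace resolvent $\btr\vR_{\omega,\eta}$ across $\omega\in[-2+\eta,\,-2+2\delta]$ at scale $\eta\sim\delta$, using Theorem~\ref{thm:resolvent_paulis} to match the GUE resolvent (whose edge Stieltjes transform follows the semicircle law explicitly) and then promoting from expectation to high probability via the concentration estimate. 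Requiring the Theorem~\ref{thm:resolvent_paulis} error to be $o(\delta^{3/2})$ at $\eta\sim\epsilon$ is what forces $m$ to be a sufficiently large polynomial in $n$ and $\epsilon^{-1}$.

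With (i)--(ii) in hand, I would split the trace over the amplified window $W\coloneqq[-2,\,-2+2\delta]$ versus its complement:
\begin{align*}
\tr\L[p_d(\vH_{PS})^2\,\indicator_{W}(\vH_{PS})\R] &\gtrsim N\,\delta^{3/2}\,\exp\L(c_1 d\sqrt{\delta}\R),\\
\tr\L[p_d(\vH_{PS})^2\,\indicator_{\BR\setminus W}(\vH_{PS})\R] &\le N.
\end{align*}
Setting $\delta=c_2\epsilon$ and $d$ of order $1/\sqrt{\epsilon}$ (with any $\log(1/\epsilon)$ factor absorbed into the $\CO$) makes the amplified contribution dominate by a super-polynomial factor, so that $\tr[\vrho\,\vH_{PS}]$ lies within $\CO(\delta)$ of $\max_{x\in W}x=-2+2\delta$; combining with (i) then gives $\tr[\vrho\,\vH_{PS}]\le(1-\epsilon)\lambda_{\min}(\vH_{PS})$, as required.

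Finally, classical verification reduces to the two scalars $\tr[p_d(\vH_{PS})^2]$ and $\tr[\vH_{PS}\,p_d(\vH_{PS})^2]$. Expanding $p_d(x)^2=\sum_{k=0}^{2d}\tilde c_k x^k$ reduces both to the raw moments $\tr[\vH_{PS}^k]$ for $k\le 2d+1$, and each such moment unfolds as a sum of $m^k$ signed products of $k$ Pauli strings from the ensemble. A product of $k$ Paulis on $n$ qubits is computable in $\CO(nk)$ time, and the trace survives only for tuples whose product equals $\pm\vI$; summing over $k$ and accounting for the size of the Chebyshev coefficients yields the stated $\CO((dm)^{2d}nd)$ runtime. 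The main obstacle in the whole plan is step (ii): upgrading the Theorem~\ref{thm:resolvent_paulis} expectation-level comparison into a \emph{pointwise} high-probability semicircular density lower bound at scale $\delta\sim\epsilon$ with error $o(\delta^{3/2})$. The Chebyshev amplification, the weight-ratio argument, and the Pauli-moment enumeration are routine once that density lower bound is in hand.
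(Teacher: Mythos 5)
There is a genuine gap in your choice of filter. The polynomial $p_d(x) = T_d\L(-x/(2-2\delta)\R)$ satisfies $p_d(x)^2 = T_d\L(x/(2-2\delta)\R)^2$, an \emph{even} function of $x$, so it amplifies both spectral edges: every eigenvalue in $[2-2\delta,\ \norm{\vH_{PS}}]$ receives the same $\exp\L(\Theta(d\sqrt{\delta})\R)$ boost as those in $[-2,\,-2+2\delta]$, and since the spectral density is (approximately) the symmetric semicircle there are $\Theta(N\delta^{3/2})$ such eigenvalues at the top edge as well. Consequently your bound $\tr\L[p_d(\vH_{PS})^2\,\indicator_{\BR\setminus W}(\vH_{PS})\R]\le N$ is false: the complement of $W$ contains the upper edge, where $\labs{p_d}$ is exponentially large, and its contribution is comparable to the window term. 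The normalized ansatz $\vrho \propto p_d(\vH_{PS})^2$ then puts roughly half its weight near energy $+2$ and half near $-2$, giving $\tr[\vrho\,\vH_{PS}]\approx 0$ rather than $\le (1-\epsilon)\lambda_{\min}(\vH_{PS})$. The repair is to make the filter one-sided: either compose $T_d$ with the affine map sending $[-2+2\delta,\ 2+\CO(\epsilon)]$ onto $[-1,1]$ (using Theorem~\ref{thm:pnorms_pauli} to confine the spectrum from above with high probability), so that only energies below $-2+2\delta$ fall in the growth region of $T_d$, or take $p_d$ to be the degree-$\CO(1/\sqrt{\epsilon})$ Chebyshev approximation of the Gibbs filter $\e^{-\beta x}$ at $\beta=\Theta(1/\epsilon)$, which is the route the paper's (very terse) proof indicates. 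Either choice retains the edge gain $\exp(\Theta(d\sqrt{\delta}))$ without the spurious top-edge amplification, and then your weight-ratio argument goes through.

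Apart from this, your scaffolding is sound and fills in what the paper leaves implicit: the paper's proof only states the Chebyshev/Gibbs approximation and the $(dm)^{2d}$-term evaluation with cost $\CO(nd)$ per Pauli product, whereas any complete argument must, as you do, combine the norm bound of Theorem~\ref{thm:pnorms_pauli} with a high-probability lower bound on the edge density obtained from the resolvent comparison plus concentration; that is exactly the quantity already controlled in the proof of Theorem~\ref{thm:low-energy-density} (the claim~\eqref{eq:WTS}), so you may cite it rather than rederive it. Two quantitative cautions: with $\delta\asymp\epsilon$ and $d\asymp 1/\sqrt{\epsilon}$ the gain $\exp(c_1 d\sqrt{\delta})$ is only a constant, so the $\log(1/\epsilon)$ factor in $d$ is genuinely needed (harmless since $\epsilon$ is a constant, but the domination is then polynomial in $1/\epsilon$, not super-polynomial); and because the bulk sits at energies up to $+2$, the bulk-to-window weight ratio must be driven below $\CO(\epsilon)$, not merely $o(1)$, to conclude $\tr[\vrho\,\vH_{PS}]\le(1-\epsilon)\lambda_{\min}(\vH_{PS})$.
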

\begin{proof}[Proof of Proposition~\ref{prop:efficient_classical_witness}]
Using power series approximation (i.e., Taylor expansion) for the Gibbs state gives a suboptimal degree $d = \CO(1/\epsilon)$. A better degree $d = \CO(1/\sqrt{\epsilon})$ can be achieved using Chebychev's polynomial approximation of the Gibbs state. The verification algorithm simply evaluates all the $(dm)^{2d}$ terms in the ansatz state $\vrho$. Each of the $(dm)^{2d}$ terms require $2d$-multiplications of Pauli strings, each with cost $\CO(n)$. 
\end{proof}

This indicates that eigenstates ``far from the ground state'' have polynomial classical complexity; this is reminiscent of the cost of dequantization methods~\cite{Gharibian_2022} in the context of ground energy estimation given good trial states. Still, the above classical polynomial witness gets stuck at a constant approximation ratio, while the quantum algorithm has no problem going to better and better accuracy\footnote{More carefully, our current results requires the number of terms $m$ to grow with the desired accuracy.}.

\textbf{2. Give me a decision problem!}
To really talk about quantum advantage, ideally one wants a problem with classical inputs and outputs, 
 especially a decision problem. A candidate problem is to compute an approximation to the ground energy of our model.
However, since our problem has randomness, we expect the spectrum to be concentrated around the semicircle. {If the spectrum were exactly the semicircle, a classical algorithm could simply output the deterministic value.}  Therefore, the classical hardness, if it exists, must originate from the instance-to-instance \textit{fluctuation} of the spectrum away from the semicircle density, and that is why we need the accuracy $\epsilon=1/\poly(n)$ to be small while the number of terms $m = \poly(n)$ is not too large (otherwise the fluctuation becomes too small and predictable). 

Acknowledging the above, a candidate problem for quantum advantage is deciding the density of states to high precision. It also converts to a binary decision problem by setting a threshold.

\begin{quest}[Task: Deciding the density of states]
Given a Hamiltonian sampled from the Pauli string ensemble {and a small parameter $\epsilon$}, output the number of states at a small energy interval 
\begin{align}
 [-\delta, \delta] \subset [-2, 2] \quad \text{ up to multiplicative error} \quad \epsilon.
\end{align}
Is it classically hard for some $m = \poly(n)$, $\epsilon = \poly(n)^{-1}$, $\delta = \poly(n)^{-1}$?
\end{quest}

There is a quantum algorithm that succeeds with gate complexity $\poly(\epsilon^{-1},\delta^{-1},m)$: our concentration argument for low-energy density of states (\eqref{eq:WTS} in the proof of Theorem~\ref{thm:low-energy-density}) also implies that for each $\delta$, there is a polynomially large $m = \poly(\delta^{-1},n)$ such that the local density $[-\delta,\delta]$ is at least half of that of the semicircle. Therefore, phase estimation samples from this interval with $\Omega( \delta^{-1})$ success probability. Repeated trials\footnote{ Taking $O( \delta^{-1}\epsilon^{-2})$ samples, one can estimate number of states to multiplicative error $\epsilon$. 
} 
give a high-confidence estimate of the density of states to error $\epsilon$ with $\poly(1/\epsilon)$ algorithmic cost.
Why consider the problem of approximating the density of states and not approximating the ground state energy? Right now, we do not have control over the spectrum very close to the extreme eigenvalues; for fixed $m,n$, our current results do not rule out the possibility of a small spectral gap $\Omega(m^{-1/4}n^{5/4})$;
while we believe the spectral gap is exponentially small, the proof will require further developments in nonasymptotic random matrix theory.

Proving classical hardness for the density of states problem, e.g.~by reduction from a problem already known to be hard, is more elusive. A general proof might be too much to ask for as it would give a computational quantum advantage for an oracle-free average-case decision problem, something for which no other examples are known. Still, it would be interesting to provide arguments for it. A concrete step is proving that the spectrum has a large enough instance-to-instance fluctuation away from the semicircle distribution such that the classical algorithm cannot succeed simply by always outputting the average value. We believe this to be true and it would be interesting to test it numerically. However, a proof of it would require further developments in nonasymptotic random matrix theory. 

\textbf{3. Quantum chaos and quantum advantage.}
Our work fits into the broader question of whether quantum chaos could be a source of quantumly easy problems and perhaps a quantum computational advantage. As we mentioned, Hastings and O’Donnell~\cite{odonnell_21_optimizing} made concrete progress on the SYK model, a prominent toy model of quantum chaos, by providing a low-energy witness where Gaussian states are known to fail. Their results would serve {the question at hand} even better if the classical hardness argument can be improved or if the Hamiltonian remains provably easy near the ground state. The latter seems plausible on physical grounds as it remains ``chaotic'' near the ground energy. Indeed, if one were to formally assume quantum chaos in terms of the Eigenstate Thermalization Hypothesis (ETH), one may prove that {preparing low-energy states is quantumly easy because Gibbs sampling at low temperatures is efficient on a quantum computer~\cite{chen2021fast}.}

Our work made progress in capturing quantum chaos and its consequences by studying random matrix models where nonasymptotic treatment of spectral properties is possible even near the ground energy. Still, we acknowledge that our model is nonlocal and perhaps deviates from local Hamiltonian problems in some aspects: the quantum easiness stems from the semicircular spectrum and does not directly explain why the low-energy problem of chaotic local Hamiltonians (whose spectral density has a tail instead) should also be easy. Nevertheless, we expect the following findings to extrapolate to local chaotic Hamiltonians: random matrix behavior can emerge from very few bits of randomness, and the spectrum is smooth and free of outliers (Figure~\ref{fig:good_bad}). 

Still, there is a wealth of quantum chaos phenomenology that requires formal treatment for quantum advantage implications. One direction is to show ETH (e.g., for the SYK models), which roughly means that nearby energy eigenstates are well connected to each other. We believe this can be formalized for the GUE, which should also extend to our Pauli string ensemble by the universality principle. Another direction is to reduce the locality of the Pauli string ensemble. In fact, our circuit complexity lower-bound argument remains nontrivial {even when the locality $k$ of each Hamiltonian term is reduced from $k = \Theta(n)$ to $k = \log(n)$,} which at least gives a hint of classical hardness. 

\acknowledgments
We thank Thomas Vidick, John Preskill, Anand Natarajan, John Wright, Gil Refael, Mehdi Soleimanifar, Andr\'as Gily\'en, Sam McArdle, William Kretschmer, Michael Kastoryano, Hsin-Yuan (Robert) Huang, Leo Zhou, and Nicola Pancotti for helpful discussions. CFC is supported by the Eddlemen Fellowship and AWS Center for Quantum Computing summer intern program. During this project MB was previously additionally affiliated with the AWS Center for Quantum Computing, Pasadena, USA.



\newpage

 \appendix

The remaining part of the work begins with proofs for the comparison principle (Section~\ref{sec:calculation_comparison}), including the moments and the resolvent. We instantiate the nonasymptotic properties of GUE in Section~\ref{sec:GUE}. The comparison results and GUE properties altogether allow us to calculate the properties of the Pauli string ensemble (Section~\ref{sec:apply_to_pauli}). In section~\ref{sec:circuit_lower_bounds}, we prove the circuit size lower bounds for the Pauli string ensemble, whose argument is independent of the comparison principle. 
Section~\ref{sec:missing_proofs} contains brief missing proofs. Section~\ref{sec:difficult_higher} contains an argument for why interpolation methods do not immediately exploit higher matching moments.



\section{Calculations for the Lindeberg principle}\label{sec:calculation_comparison}

In this section, we apply a version of the Lindeberg exchange principle for the $p$th moments and the resolvent moments. The main assumption we use is that two sums of independent matrices share the same lower-order moments. The main technical argument is readily illustrated in the moment calculation. The resolvent calculation is more involved because the resolvent is nonconvex.  We also have to establish concentration for a random realization of the resolvent moment around its expected value.

\subsection{Moments}\label{sec:moments}
We recapitulate the statement for moments.

\universalitymoments*

Our proof of Theorem~\ref{thm:universality_moments} is based on the Lindeberg exchange principle.  Roughly, we interpolate between the two sums by replacing one argument at each step.  Since the low moments of the summands match, each replacement only changes the $p$-norm slightly, with error on the order $(t+1)$. The calculation is straightforward, but it implicitly exploits noncommutativity properties of the random matrices in the moment matching.  The error is a noncommutative polynomial of matrices, and we treat them by a brutal application of H{\"o}lder's inequality, entirely ignoring noncommutativity.
Once we have replaced all the summands, we tie the estimates together using a self-bounding argument.  To execute this step, we must solve a difference equation by passing to a continuous differential equation.

Lindeberg's method has recently been applied to RMT in the papers~\cite{Cha05:Simple-Invariance,KM11:Applications-Lindeberg,OT18:Universality-Laws}.  Some of the other ideas and methods in this argument have roots in the papers~\cite{Tro16:Expected-Norm,Tro18:Second-Order-Matrix,Bandeira2021MatrixCI,Tatiana_22_universality}.

The basic argument relies on a general form of H{\"o}lder's inequality for Schatten norms.
As we will see, we can introduce more refined moment inequalities to obtain some improvements.

\begin{fact}[Multivariate H{\"o}lder for random matrices]\label{fact:products_holder}
For any family $(\vX_1,\dots,  \vX_{k})$ of square random matrices, possibly statistically dependent,
the product satisfies the trace inequality
\begin{align}
  \Expect  \btr \labs{\prod_{i=1}^{k} \vX_i} 
  = \vertiii{\prod_{i=1}^{k} \vX_i }_1 
  \le \prod_{i=1}^{k} \vertiii{\vX_i}_{p_i} \quad \text{whenever} \quad \sum_{i=1}^{k} \frac{1}{p_i} = 1
  \quad\text{and}\quad p_i \geq 0.
\end{align}

\end{fact}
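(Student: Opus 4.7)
The plan is a two-layer application of H\"older's inequality: first apply a deterministic Schatten-norm H\"older bound pointwise on the probability space, and then apply the classical scalar H\"older bound to the resulting nonnegative random variables.

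The first step is to establish the purely deterministic multivariate Schatten H\"older inequality
\[
\lnorm{X_1 X_2 \cdots X_k}_1 \;\le\; \prod_{i=1}^k \lnorm{X_i}_{p_i}, \qquad \sum_{i=1}^k \tfrac{1}{p_i} = 1,
\]
for any fixed (nonrandom) square matrices $X_1,\dots,X_k$ of the same dimension, where $\lnorm{\cdot}_{p}$ denotes the unnormalized Schatten $p$-norm. The base case $k=2$ is the textbook noncommutative H\"older inequality $\lnorm{AB}_r \le \lnorm{A}_p \lnorm{B}_q$ valid whenever $1/r = 1/p + 1/q$, which can be found in Bhatia's \emph{Matrix Analysis} (proven via complex interpolation or singular-value majorization). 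For the inductive step, set $q_j^{-1} \coloneqq \sum_{i=j}^k p_i^{-1}$, so $q_1 = 1$ and $q_k = p_k$. Peel off one factor at a time using the two-factor inequality,
\[
\lnorm{X_j X_{j+1}\cdots X_k}_{q_j} \;\le\; \lnorm{X_j}_{p_j} \cdot \lnorm{X_{j+1}\cdots X_k}_{q_{j+1}},
\]
and iterate from $j=1$ up to $j=k-1$ to obtain the claim. Dividing both sides by $N = \tr[\vI]$ (note $\sum 1/p_i = 1$ absorbs all $N$ factors consistently) converts the inequality to the normalized form
\[
\btr\labs{X_1 X_2 \cdots X_k} \;\le\; \prod_{i=1}^k \bigl(\btr\labs{X_i}^{p_i}\bigr)^{1/p_i}.
\]

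The second step promotes this pointwise inequality to the random setting. The previous display holds for every realization of the random matrices $\vX_1,\dots,\vX_k$, so define the nonnegative random variables $Y_i \coloneqq (\btr\labs{\vX_i}^{p_i})^{1/p_i}$. Taking expectations and applying the scalar H\"older inequality to $Y_1,\dots,Y_k$ with the same exponents $p_i$ (which is valid for arbitrarily dependent random variables provided $\sum 1/p_i = 1$) yields
\[
\Expect \btr\labs{\prod_{i=1}^k \vX_i} \;\le\; \Expect\!\prod_{i=1}^k Y_i \;\le\; \prod_{i=1}^k \bigl(\Expect Y_i^{p_i}\bigr)^{1/p_i} \;=\; \prod_{i=1}^k \bigl(\Expect \btr\labs{\vX_i}^{p_i}\bigr)^{1/p_i} \;=\; \prod_{i=1}^k \vertiii{\vX_i}_{p_i}.
\]
The equality $\Expect \btr\labs{\prod \vX_i} = \vertiii{\prod \vX_i}_1$ is immediate from the definition.

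Neither step presents a real obstacle; the content is essentially bookkeeping. The only subtlety worth flagging is the constraint $p_i \ge 0$ as stated, which together with $\sum 1/p_i = 1$ forces $p_i \ge 1$ for every $i$, so the classical hypotheses of both H\"older inequalities are automatically satisfied (with the standard conventions $1/\infty = 0$ and $\vertiii{\vX}_\infty = \mathrm{ess\,sup}\,\lnorm{\vX}$ handling the endpoint case).
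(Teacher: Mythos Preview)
Your proof is correct and follows essentially the same route as the paper's sketch: the deterministic Schatten H\"older bound via induction from the two-matrix case in Bhatia, homogeneity to pass to the normalized trace, and then the scalar H\"older inequality (equivalently, concavity of the weighted geometric mean plus Jensen) to absorb the expectation.
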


\begin{proof}[Proof sketch]
In the deterministic setting, with two matrices, the result appears in~\cite[Corollary 4.2.6]{bhatia1997matrix}.  Use induction to extend the bound to more than two deterministic matrices.
To incorporate the normalized trace, note that the weighted geometric mean $(p_1, \dots, p_\ell) \mapsto \prod_{i=1}^\ell a_i^{p_i}$ is homogeneous for fixed $a_j \geq 0$.  To incorporate the expectation, recall that the weighted geometric mean is concave, and invoke Jensen's inequality.
\end{proof}


\begin{proof}[Proof of Theorem~\ref{thm:universality_moments}: Two matching moments]
To illustrate the concept behind the argument, we carefully establish the first bound for the $t = 2$ case.  Afterward, we describe the modifications required to extend
the bound to $t > 2$ and to introduce the variance parameter $\sigma^2$.

Fix an even natural number $p \in 2 \mathbb{N}$.  Our goal is to compare the $p$th moment $\vertiii{\cdot}_p^p$ of the two independent sums $\vS = \sum_{i=1}^m \vA_i$ and $\tvS = \sum_{i=1}^m \tvA_i$.
The main idea is to update one summand at a time from $\vA_j$ to $\tvA_j$,
controlling the change in the $p$-norm at each step.


In detail, for each index $j = 0, 1, 2, \dots, m$,
we can define the hybrid matrix
\[
\vS_j := \sum^j_{i=1} \tvA_i +  \sum^m_{i=j+1} \vA_i
\quad\text{where}\quad
\vS_0 = \vS
\quad\text{and}\quad
\vS_m = \tvS.
\]
Express the difference between the $p$th moments of $\vS$ and $\tvS$
as a telescoping sum:
\begin{align} \label{eqn:moment-recursion}
    \Expect \btr\left (\sum_{i=1}^m \tvA_i \right) ^p -\Expect \btr \left(\sum_{i=1}^m \vA_i \right)^p 
    &= \sum_{j=1}^{m} \left(\Expect \btr\vS_j^p - \Expect \btr\vS_{j-1}^p \right).
\end{align}
For even $p$, we can express the $p$-norm%
\footnote{The odd $p$-norms contain an absolute value $\vertiii{\vS}_p^p = \Expect \btr \labs{\vS}^p$, so they are less suitable for Taylor expansion.}
in terms of a trace power: $\vertiii{\vS}_p^p = \Expect \btr \vS^p$.
To bound the telescoping sum, we first analyze the single update error and then solve a recursion.

\textbf{Step I: Single update error.}
Fix an index $j = 1, \dots, m$.  Let us give a bound for the change in the $p$-norm when we update $\vA_j$ to $\tvA_j$.  Define the unchanged part of the sum $\vS_j$ by
\begin{align}
    \vS_-:= \vS_{j}-\tvA_{j} = \vS_{j-1}-\vA_{j}. 
\end{align}
We can control the change in the polynomial moment by performing a Taylor expansion of the polynomial moment at the unchanged part $\vS_-$.

When expanding powers of a sum of matrices, keep in mind the scalar binomial expansion:
\begin{align}
    (x+y)^p &= \sum_{k = 0}^{p} \binom{p}{k} x^{p-k}y^k. 
\end{align}
For matrices, the expansion takes the form
\begin{align}
    (\vS_- + \vA_{j})^p &= \sum_{words} \vS_- \cdots \vA_j \vS_-\cdots \vA_j\vS_- \cdots \vA_j \vS_-\cdots\\
    &= \sum_{k=0}^p \vM_k \quad \text{where}\quad  \labs{\Expect \btr \labs{\vM_k} } \le \binom{p}{k} \vertiii{\vS_{-}}_{p}^{p-k} \vertiii{\vA_j}_{p}^{k}. \label{eqn:holder-mk}
\end{align}
Likewise, $(\vS_- + \tvA_{j})^p = \sum_{i=1}^m \tilde{\vM}_i$ with analogous bounds for the summands. 

The bound in~\eqref{eqn:holder-mk} follows from Holder's inequality (Fact~\ref{fact:products_holder}), applied for all possible relative positions of $\vS_-$ and $\vA_j$ with all parameters $p_i = p$.  Note that this general bound ignores the noncommutativity of the matrices.  The binomial coefficients $\binom{p}{k}$ are exactly the number of relative positions of the $k$ appearances of $\vA_j$ among the $p-k$ appearances of $\vS_-$.

To proceed, since the order parameter $t = 2$, the first and second moments of the random matrices $\vA_i$ and $\tvA_i$ match for each index $i$.  As a consequence,
\begin{align}
    \Expect \btr [\vA_i \vB] &= \Expect \btr [\tvA_i \vB]; \\ 
    \Expect \btr [\vA_i \vB \vA_i \vC] &=\Expect \btr[\tvA_i \vB \tvA_i \vC] \quad \text{for arbitrary $\vB, \vC$ independent from $\vA_i, \tvA_i$}.
\end{align}
Crucially, this implies that subtracting the expected moments completely \textit{cancels} the first-order terms $\vM_1$ and $\tilde{\vM}_1$ and the second-order terms $\vM_2$ and $\tilde{\vM}_2$.  Thus,
\begin{align}
    \labs{ \Expect \btr\vS_j^p - \Expect \btr\vS_{j-1}^p} & =  \labs{\Expect \btr(\vS_- + \tvA_{j})^p - \Expect \btr(\vS_- + \vA_{j})^p }
    = \labs{ \sum_{k=3}^p (\vM_k - \tilde{\vM}_k) } \\
    %
    &\le \sum_{k=3}^p p^k \vertiii{\vS_{j-1}}_{p}^{p-k} \vertiii{\vA_{j}}_{p}^{k} \quad + \quad (\vA_j\mapsto \tvA_{j})\\
    & \le \frac{1}{4}\vertiii{\vS_{j-1}}_{p}^{p-3} \big(2p\vertiii{\vA_{j}}_{p}\big)^{3} +  \frac{1}{4}\big(2p\vertiii{\vA_{j}}_{p}\big)^{p} \quad + \quad (\vA_{j}\mapsto \tvA_{j}). 
    \label{eq:update}
\end{align}
The notation $(\vA_{j}\mapsto \tvA_{j})$ denotes a replica of the first term with $\vA_j$ replaced by $\tvA_j$.  To reach the second line, we collect the higher order terms $\vM_3,\cdots, \tilde{\vM}_p$ and $\tilde{\vM}_3,\cdots, \tilde{\vM}_p$, we bound the binomial coefficient $\binom{p}{k}\le p^k$, and we use the convexity of the $p$-norm $\vertiii{\vS_-}_{p} = \vertiii{\vS_-+\Expect [\vA_j]}_{p}\le \vertiii{\vS_-+\vA_j}_{p} = \vertiii{\vS_{j-1}}_{p}$. The last inequality bounds the geometric series using the elementary numerical inequality 
\begin{align}
    \sum_{k=3}^p {x^k} = \sum_{k=3}^p {2^{-k}(2x)^k} \le \frac{1}{4}( (2x)^3+(2x)^p)
    \quad\text{for $x \geq 0$.}
\end{align}
We have successfully established a comparison between the quantities $\vertiii{\vS_{j}}_{p}^p$ and $\vertiii{\vS_{j-1}}_{p}^p$.

\textbf{Step II: Solving the recursion.}
We have expressed the difference between the moments of $\vS$ and $\tvS$ as a telescoping sum~\eqref{eqn:moment-recursion} of moments of hybrid matrices $\vS_j$.  The first part of the argument yields a bound on the change in moments at each step in terms of a smaller moment of the hybrid matrices.  We can use these results to develop coupled difference inequalities, which we must solve.

Define the scalar quantities
\begin{align}
    x_j:=\Expect \btr \vS_j^p
    \quad\text{for $j = 0, 1, 2, \dots, m$.}
\end{align}
The boundary values of the sequence $(x_j)$
are the moments of the original independent sums that we seek to compare:
$x_{m} = \Expect \btr \tvS^p$ and $x_0=\Expect \btr \vS^p$.
To bound the differences of this sequence, we introduce the notation
\begin{align}
a_j :=\frac{(2p)^3}{4} \L( \vertiii{\vA_j}_{p}^{3} +\vertiiismall{\tvA_j}_{p}^{3}\R) \quad \text{and}\quad b_j := \frac{(2p)^p}{4} \L(\vertiii{\vA_j}_{p}^{p}+\vertiiismall{\tvA_j}_{p}^{p}\R)
\quad\text{and}\quad
a_0 := b_0 := 0.
\end{align}
Using the inequality~\eqref{eq:update} for the $j$th step of the exchange,
we arrive at the coupled difference inequalities:
\begin{align}
\labs{ x_{j}-x_{j-1}} \le a_j \cdot x_{j-1}^{(p - 3)/p} + b_j \quad \text{for each} \quad j = 1 ,\dots,  m. \label{eq:disc_recursion}
\end{align}
Our task is to produce bounds for the terminal value $x_m$ in terms of the initial value $x_0$ and the coefficients $a_j$ and $b_j$.

To do so, we pass to a differential equation.
The proof appears at the end of this section.


\begin{lem}[From differences to derivatives] \label{lem:diff_bounds_discrete}
Define coefficient functions
\[
a(s) := a_{\lceil s\rceil} \quad \text{and} \quad  b(s) := b_{\lceil s\rceil}
\quad \text{for} \quad s \in [0, m].
\]
For a fixed integer $1 \leq k \leq p$, consider the differential inequality
\begin{align} \label{eqn:diff-ineq}
\begin{cases}
    x'(s) \geq a(s) \cdot x (s)^{(p-k)/p} + b(s), & s \in [0,m]; \\
    x(0) = x_0.
    \end{cases}
\end{align}
Then each solution $x(s)$ to the differential inequality overestimates
the solution $(x_j)$ to the coupled difference inequalities~\eqref{eq:disc_recursion}
in the sense that
\begin{align}
    x (j) \ge x_{j} \ge 0 \quad \text{for each} \quad j = 0,\dots, m.
\end{align}
\end{lem}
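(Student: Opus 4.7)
The plan is to proceed by induction on $j \in \{0, 1, \ldots, m\}$, showing $x(j) \ge x_j$ at each integer node. The base case $j = 0$ holds by hypothesis, since $x(0) = x_0$. Two structural observations drive the inductive step: first, by construction $a_j, b_j \ge 0$, so $a(s), b(s) \ge 0$ and the differential inequality forces $x'(s) \ge 0$, making $x$ nondecreasing on $[0,m]$; second, the coefficient functions are piecewise constant, namely $a(s) \equiv a_j$ and $b(s) \equiv b_j$ on the interval $(j-1, j]$, so the differential inequality integrates cleanly over $[j-1, j]$.

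Assuming $x(j-1) \ge x_{j-1}$, monotonicity of $x$ yields $x(s) \ge x_{j-1}$ for all $s \in [j-1, j]$. Since $1 \le k \le p$, the exponent $(p-k)/p \in [0,1]$, so $y \mapsto y^{(p-k)/p}$ is nondecreasing on $[0,\infty)$. Integrating the differential inequality,
\begin{align}
x(j) - x(j-1) \;=\; \int_{j-1}^{j} x'(s)\, ds \;\ge\; a_j \int_{j-1}^{j} x(s)^{(p-k)/p}\, ds + b_j \;\ge\; a_j \cdot x_{j-1}^{(p-k)/p} + b_j.
\end{align}
The discrete recursion~\eqref{eq:disc_recursion} gives $x_j - x_{j-1} \le |x_j - x_{j-1}| \le a_j \cdot x_{j-1}^{(p-k)/p} + b_j$, so combining with the inductive hypothesis,
\begin{align}
x(j) \;\ge\; x(j-1) + a_j \cdot x_{j-1}^{(p-k)/p} + b_j \;\ge\; x_{j-1} + (x_j - x_{j-1}) \;=\; x_j.
\end{align}

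The side claim that $x_j \ge 0$ does not follow from this comparison; rather, it is immediate from the definition $x_j = \Expect \btr \vS_j^p$, because $p$ is even and $\vS_j$ is Hermitian, so $\vS_j^p$ is positive semidefinite. There is no serious obstacle in this lemma: it is essentially a Gronwall-type envelope adapted to piecewise-constant data, with the point being that the discrete-to-continuous passage makes the resulting bound solvable in closed form by separation of variables in later applications. The only place requiring a sanity check is that the exponent $(p-k)/p$ stays in $[0,1]$, which guarantees both that the nonlinearity $y \mapsto y^{(p-k)/p}$ is monotone and that the ODE does not blow up in finite time for the regimes of interest.
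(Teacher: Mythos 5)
Your proof is correct and follows essentially the same route as the paper's: induction over the integer nodes, using that the nonnegative right-hand side makes $x$ nondecreasing, that the coefficients are constant on each interval $(j-1,j]$, and that $y\mapsto y^{(p-k)/p}$ is monotone, then closing the step with the discrete inequality~\eqref{eq:disc_recursion}. Your explicit remark that $x_j=\Expect\btr\vS_j^p\ge 0$ (since $p$ is even) is a small clarification the paper leaves implicit, but the argument is the same.
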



The following ansatz provides
a solution to the differential inequality~\eqref{eqn:diff-ineq}.
The proof of this lemma appears at the end of the section.

\begin{lem}[Ansatz for differential inequality]\label{lem:ansatz}
Fix an integer $1 \le k \le p$.  Consider the function
\begin{align}
    y(s): = \L[ x(0)^{k/p} + \frac{k}{p}\int_0^s a(u) \idiff{u} + \L(\int_0^s b(u) \idiff{u} \R)^{k/p}\R]^{p/k}
    \quad\text{for $s \in [0,m]$.}
\end{align}
Then $y$ solves the differential inequality~\eqref{eqn:diff-ineq}.
\end{lem}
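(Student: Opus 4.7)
The plan is a direct verification: check the initial condition and then differentiate the ansatz, using its algebraic structure to read off both required terms on the right-hand side of~\eqref{eqn:diff-ineq}.

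First, I would check $y(0)=x_0$. When $s=0$, both integrals vanish, leaving $y(0)=\bigl[x(0)^{k/p}\bigr]^{p/k}=x_0$. This handles the initial condition.

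Next, introduce the abbreviation
\begin{align}
    u(s) := x(0)^{k/p} + \frac{k}{p}\int_0^s a(r)\idiff r + \Bigl(\int_0^s b(r)\idiff r\Bigr)^{k/p},
\end{align}
so that $y = u^{p/k}$ and, equivalently, $u = y^{k/p}$. The strategy is to differentiate the relation $u = y^{k/p}$ and exploit the telescoping of powers. Applying the chain rule,
\begin{align}
    \frac{k}{p}\, y(s)^{k/p-1}\, y'(s) \;=\; u'(s) \;=\; \frac{k}{p}\, a(s) \;+\; \Bigl(\int_0^s b(r)\idiff r\Bigr)^{k/p-1} b(s).
\end{align}
Solving for $y'(s)$ and using $y^{(p-k)/p} = 1/y^{k/p-1}$ gives
\begin{align}
    y'(s) \;=\; a(s)\, y(s)^{(p-k)/p} \;+\; \frac{p}{k}\,\Bigl(\int_0^s b(r)\idiff r\Bigr)^{k/p-1} b(s)\, y(s)^{(p-k)/p}.
\end{align}
The first summand already matches the $a(s)\, y(s)^{(p-k)/p}$ term demanded by~\eqref{eqn:diff-ineq}.

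The remaining task is to show the second summand dominates $b(s)$. Since $a\ge 0$, $b\ge 0$ and $x(0)\ge 0$, every piece of $u(s)$ is nonnegative, so
\begin{align}
    y(s) \;=\; u(s)^{p/k} \;\ge\; \Bigl(\int_0^s b(r)\idiff r\Bigr),
\end{align}
which gives $y(s)^{(p-k)/p} \ge \bigl(\int_0^s b\bigr)^{(p-k)/p}$. Multiplying by the prefactor $\bigl(\int_0^s b\bigr)^{k/p-1} b(s) = \bigl(\int_0^s b\bigr)^{-(p-k)/p} b(s)$ kills the integral factors and leaves at least $b(s)$ (and a prefactor $p/k\ge 1$, which only helps). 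Combining the two contributions gives $y'(s)\ge a(s)\, y(s)^{(p-k)/p}+b(s)$, as required.

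No step looks particularly delicate; the only place where care is needed is verifying that the integral factors cancel correctly so that the $b(s)$ contribution survives with the right sign. The key conceptual point is that the ansatz is engineered precisely so that raising $u$ to the power $p/k$ converts the $a$-integral into a term with the right $(p-k)/p$-power of $y$, while the $b$-integral is inserted in the ``$k/p$-power'' form so that its derivative naturally contains a power of $\int b$ that is cancelled by the estimate $y\ge \int b$.
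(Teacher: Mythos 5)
Your proposal is correct and follows essentially the same route as the paper: differentiate the ansatz, observe that the $a$-term comes out exactly as $a(s)\,y(s)^{(p-k)/p}$, and use nonnegativity of all terms (so that $y(s)\ge \int_0^s b$, equivalently $(\int_0^s b)^{k/p-1}\,y(s)^{(p-k)/p}\ge 1$) to see that the remaining term dominates $b(s)$. One small slip: the chain rule gives $u'(s)=\tfrac{k}{p}a(s)+\tfrac{k}{p}\bigl(\int_0^s b\bigr)^{k/p-1}b(s)$, so your displayed formula for $y'(s)$ should not carry the extra $p/k$ on the second term; since that spurious factor is $\ge 1$ and you only need a lower bound, the conclusion is unaffected.
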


We are now prepared to solve the coupled difference inequalities.  Instantiate Lemma~\ref{lem:diff_bounds_discrete} and Lemma~\ref{lem:ansatz} with parameters $s=m$ and $k=3$ to arrive at the one-sided inequality
\begin{align}
x_{m}^{3/p} \le y(m)^{3/p} =  x_0^{3/p} +\frac{3}{p} \sum_{j=1}^m a_j+ \left(\sum_{j=1}^m b_j\right)^{3/p}. 
\end{align}
This inequality provides an upper bound for the difference $x_m^{3/p} - x_0^{3/p}$,
where we recall that $x_m$ and $x_0$ are the $p$th moments of the two independent sums.
Taking the third root and bounding the $\ell_3$ norm by the $\ell_1$ norm, we also
have the estimate
\begin{align}
x_{m}^{1/p} - x_0^{1/p} \leq \left( \frac{3}{p} \sum_{j=1}^m a_j \right)^{1/3} + \left(\sum_{j=1}^m b_j\right)^{1/p}.
\end{align}
This statement is slightly weaker, but it may be easier to interpret and apply.

%
We may repeat the same argument switching the roles of $\vA_j$ and $\tvA_j$, noting that coefficients $a_j$ and $b_j$ remain the same.  This yields the desired two-sided estimate:
\begin{align}
    \labs{\vertiii{\sum_{i=1}^m \vA_i}_p^{3}-\vertiii{\sum_{i=1}^m \tvA_i}_p^{3}}=\labs{x_m^{3/p} - x_0^{3/p}}\le \frac{3}{p} \sum_{j=1}^m a_j+ \left(\sum_{j=1}^m b_j \right)^{3/p}. 
\end{align}
Similarly,
\begin{align}
    \labs{\vertiii{\sum_{i=1}^m \vA_i}_p -\vertiii{\sum_{i=1}^m \tvA_i}_p}=\labs{x_m^{1/p} - x_0^{1/p}}\le \left( \frac{3}{p} \sum_{j=1}^m a_j \right)^{1/3} + \left(\sum_{j=1}^m b_j \right)^{1/p}. 
\end{align}
Introduce the values of $a_j$ and $b_j$ and evaluate the numerical constants to
complete the proof of the theorem for random matrix models with
matching second moments ($t=2$).
\end{proof}

\begin{proof}[Proof of Theorem~\ref{thm:universality_moments}: More matching moments]
Using an analogous argument, we can obtain related results comparing random matrix models where the moments match.  Fix $t \geq 2$.   Suppose that each pair $\vA_i$ and $\tvA_i$ of summands has matching moments up to order $t$.  In this case, the terms $\vM_{1}, \dots, \vM_{t}$ cancel with $\tilde{\vM}_1, \dots, \tilde{\vM}_t$, so the error depends only on the higher order terms $\vM_k$ and $\tilde{\vM}_k$ for $k \geq t+ 1$.

Pursuing this observation, we arrive at the bound
\begin{align}
      E \coloneqq \labs{\vertiii{\sum_{i=1}^m \vA_i}_p - \vertiii{\sum_{i=1}^m \tvA_i}_p} & \le \left(\frac{t+1}{p} \sum_{j=1}^m a_j \right)^{1/(t+1)} + \left(\sum_{j=1}^m b_j \right)^{1/p}.
\end{align}
where
\begin{align}
    \quad a_j :=\frac{(2p)^{t+1}}{2^{t}} \L( \vertiii{\vA_j}_{p}^{t+1} +\vertiiismall{\tvA_j}_{p}^{t+1}\R) \quad \text{and}\quad b_j := \frac{(2p)^p}{2^t} \L(\vertiii{\vA_j}_{p}^{p}+\vertiiismall{\tvA_j}_{p}^{p}\R).
\end{align}
Using the notation $L_{p, k}$ from the statement of the theorem, we reach
the estimate
\[
E \leq (2 (t+1))^{1/(t+1)} \cdot p^{t/(t+1)} \cdot L_{p, t+1}
    + 2^{1 - t/p} \cdot p \cdot L_{p,p}.
\]
For $2 \leq t \leq p$, each of the leading constants is bounded above by $2$.
This completes the argument.
\end{proof}

\begin{proof}[Proof of Theorem~\ref{thm:universality_moments}: Refined statistics]
Last, we establish the result with more precise statistics of the
random summands.  To do so, we simply replace H{\"o}lder's inequality
(Fact~\ref{fact:products_holder}) by a more refined moment inequality.
Here is the statement we require,
which specializes~\cite[Prop.~4.1]{Tatiana_22_universality}.

\begin{fact}[Trace inequality for random matrices]
Let $\vA$ and $\vY$ be random Hermitian matrices that are statistically independent.
Consider a product with $k$ copies of $\vA$ and $(p-k)$ copies of $\vY$ in any order,
where $p \geq k \geq 2$.  Then
\[
\Expect \btr [ \vA \vY^{p_1} \cdots \vA \vY^{p_k}]
    \leq \left[ \norm{ \Expect \vA^2 } \cdot \vertiii{ \vA }_{\infty}^{k-2} \right]
    \cdot \vertiii{ \vY }_{p}^{p-k}.
\]
In this expression, $p_1 + \dots + p_k = p - k$ is an integer partition.
\end{fact}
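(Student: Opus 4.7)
The plan is to refine the naive Hölder bound $\Expect\btr[\vA\vY^{p_1}\cdots\vA\vY^{p_k}] \le \vertiii{\vA}_\infty^{k}\vertiii{\vY}_p^{p-k}$ by upgrading one pair of $\vA$-factors from $\vertiii{\vA}_\infty^{2}$ to $\norm{\Expect\vA^2}$. Concretely, I would single out a distinguished pair of the $k$ copies of $\vA$ --- say the first and the second --- and treat the other $k-2$ copies uniformly by their essential supremum. The base case is the following Hilbert--Schmidt Cauchy--Schwarz bound, valid for Hermitian $\vA$ independent of fixed $\vM,\vN$:
\[
|\Expect_\vA\btr[\vA\vM\vA\vN]| \le \norm{\Expect\vA^2}\cdot\vertiii{\vM}_2\vertiii{\vN}_2,
\]
obtained by writing $\btr[\vA\vM\vA\vN]=\btr[(\vA\vM)(\vA\vN)]$, applying the Cauchy--Schwarz inequality for the trace inner product, using Hermiticity to reach $\btr[\vA^2\vM\vM^*]^{1/2}\btr[\vA^2\vN\vN^*]^{1/2}$, invoking Jensen/Cauchy--Schwarz on the outer expectation, and finally bounding $\btr[(\Expect\vA^2)\vX]\le \norm{\Expect\vA^2}\vertiii{\vX}_1$ for $\vX\ge 0$.

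Step 1 applies the base bound conditionally on $\vY$ and on the other $k-2$ copies of $\vA$, with $\vM$ and $\vN$ taken to be the two subwords of $\vY$-powers and auxiliary $\vA$'s that separate the distinguished pair. This extracts the factor $\norm{\Expect\vA^2}$ and leaves the auxiliary subwords inside the $2$-norms $\vertiii{\vM}_2,\vertiii{\vN}_2$. Step 2 bounds these $2$-norms by means of the generalized Hölder inequality (Fact~\ref{fact:products_holder}): assign exponent $\infty$ to each of the $k-2$ auxiliary $\vA$-factors and exponent $p/p_i$ to each factor $\vY^{p_i}$ appearing inside $\vM$ (and similarly for $\vN$), so that the reciprocal exponents sum to one. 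Multiplicativity of the $p$-norm, $\vertiii{\vY^{p_i}}_{p/p_i}=\vertiii{\vY}_p^{p_i}$, then collects the $\vY$-contributions into $\vertiii{\vY}_p^{p-k}$ and the auxiliary $\vA$-contributions into $\vertiii{\vA}_\infty^{k-2}$, with the exponents summing correctly because $\sum_i p_i = p-k$ and the two distinguished $\vA$-factors have already been absorbed into $\norm{\Expect\vA^2}$.

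The main obstacle is the bookkeeping in Step 2: the distinguished pair of $\vA$'s is separated by two subwords, each of which is itself a noncommutative monomial mixing $\vY$-powers and auxiliary $\vA$'s, and after squaring via $\vM\vM^*$ the auxiliary $\vA$-factors get doubled. Care is therefore needed to ensure the multivariate Hölder exponents assigned to $\vM\vM^*$ and $\vN\vN^*$ produce, after taking square roots, exactly $\vertiii{\vA}_\infty^{k-2}\vertiii{\vY}_p^{p-k}$ without any slack. One way to handle this cleanly is to proceed by induction on $k$: the $k=2$ case follows directly from the base Cauchy--Schwarz bound together with $\vertiii{\vY^{p_i}}_{2}\le\vertiii{\vY}_{p}^{p_i}$ (valid for $2p_i\le p$, which holds when $p_1+p_2=p-2$ and the partition is balanced, and can be ensured in general by cyclically permuting factors before splitting); and the inductive step peels off one auxiliary $\vA$ at a cost of $\vertiii{\vA}_\infty$ via the trace Hölder bound $|\btr[\vA\vX]|\le\vertiii{\vA}_\infty\vertiii{\vX}_1$, reducing $k$ by one while redistributing $\vY$-powers.
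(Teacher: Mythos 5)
There is a genuine gap, and it is the central one. In the paper's application (and in the statement of the Fact), the $k$ copies of $\vA$ in the word $\vA\vY^{p_1}\vA\vY^{p_2}\cdots\vA\vY^{p_k}$ are $k$ occurrences of the \emph{same} random matrix, not independent copies. Your Step 1 applies the base bound ``conditionally on $\vY$ and on the other $k-2$ copies of $\vA$,'' but those auxiliary factors \emph{are} $\vA$ itself: conditioning on them fixes $\vA$, so the inner average $\Expect_{\vA}\btr[\vA\vM\vA\vN]$ degenerates and no factor $\norm{\Expect\vA^2}$ can be extracted. Equivalently, your base inequality requires $\vM,\vN$ independent of $\vA$, and this hypothesis is violated as soon as $k\ge 3$, because the subwords separating the distinguished pair contain $\vA$; the step $\Expect_{\vA}\btr[\vA^2\vM\vM^{\dagger}]=\btr[(\Expect\vA^2)\,\Expect(\vM\vM^{\dagger})]$ is then simply false. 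Your fallback induction does not repair this: peeling an auxiliary $\vA$ via $\labs{\btr[\vA\vX]}\le\vertiii{\vA}_{\infty}\vertiii{\vX}_1$ replaces the quantity $\Expect\btr[\text{word}]$ by the trace norm $\Expect\btr\labs{\text{word}}$, which is not what the Fact (your induction hypothesis) controls, so the induction does not close.

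There is also a quantitative problem already at $k=2$. Splitting at the distinguished pair with plain Cauchy--Schwarz yields $\vertiii{\vY^{p_1}}_2\vertiii{\vY^{p_2}}_2=\vertiii{\vY}_{2p_1}^{p_1}\vertiii{\vY}_{2p_2}^{p_2}$, and since normalized Schatten norms are nondecreasing in the order, this exceeds $\vertiii{\vY}_p^{p-2}$ whenever some $2p_i>p$ (e.g.\ the word $\vA\vA\vY^{p-2}$ with $p>4$). Cyclic permutation cannot fix this: the lengths of the two $\vY$-blocks between the two occurrences of $\vA$ are a cyclic invariant of the word. A correct treatment needs an unbalanced version of the base inequality, for instance via the positive map $\Phi(\cdot)=\Expect[\vA\,\cdot\,\vA]$ together with $\norm{\Phi}_{q\to q}\le\norm{\Expect\vA^2}$ for all $q$, rather than the symmetric $2$-norm split. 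Note finally that the paper does not prove this Fact internally; it is quoted as a specialization of Prop.~4.1 of Brailovskaya--van Handel, whose proof handles the repeated occurrences of $\vA$ by a genuinely different mechanism than conditioning. As written, your argument establishes the claim only in the special cases where the auxiliary copies are absent ($k=2$) and the partition is balanced.
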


We invoke this result with $\vY = \vS_{-}$ and with $\vA = \vA_j$
or $\vA = \tvA_j$.  We can also take the minimum of this bound with
the bound via H{\"o}lder's inequality to see that the tail parameter
$L_{p,p}$ does not get worse.  The rest of the proof is the same.
\end{proof}

Finally, we complete the proofs of the two lemmas that
were required in the argument.


\begin{proof}[Proof of Lemma~\ref{lem:diff_bounds_discrete}]
By assumption $x(0) \geq x_0 = 0$.
For an induction, assume that $x (j-1) \ge x_{j-1}$ for an index $j \geq 1$.
Then
\begin{align}
    x(j)-x(j-1) = \int_{j-1}^j x'(s) \idiff{s}
    &\geq \int_{j-1}^{j} \L(a(s) \cdot x (s)^{\frac{p-k}{p}} + b(s) \R) \idiff{s} \\
    &\ge \int_{j-1}^{j} \L(a(j) \cdot x (j-1)^{\frac{p-k}{p}} + b(j)\R)\idiff{s} \ge  x_j- x_{j-1}.
\end{align}
To reach the second line, note that the function $x(s)$ is increasing because the right-hand side of the differential inequality is positive.  Then observe that the coefficients $a(s) = a(\lceil s\rceil)$ and $b(s) = b(\lceil s\rceil)$ are constant on the domain of integration.   By induction, we obtain
\begin{align}
    x (j) \ge x_{j} \quad \text{for each} \quad j = 0,\dots, m.
\end{align}
This is the stated result.
\end{proof}

\begin{proof}[Proof of Lemma~\ref{lem:ansatz}]
To verify that the ansatz satisfies the differential inequality,
first note the initial condition $y(0) = x(0)$.
Using this fact, we take the derivative:
\begin{align}
    \frac{\diff{y}(s)}{\diff{s}} &= \frac{p}{k}\L[ y(0)^{k/p} + \frac{k}{p}\int_0^s a(u) \idiff{u} + \L(\int_0^s b(u) \idiff{u} \R)^\frac{k}{p}\R]^{(p/k)-1}  \L( \frac{k}{p}a(s) + \L( \int_0^s b(u) \idiff{u} \R)^{(k/p)-1}\cdot \frac{k}{p} b(s) \R)\\
    &\ge a(s) \L[ y(0)^{k/p} + \frac{k}{p}\int_0^s a(u)\idiff{u} + \L(\int_0^s b(u) \idiff{u} \R)^{k/p} \R]^{(p-k)/k} + b(s)\\
    &= a(s) y(s)^{(p-k)/p} + b(s).
\end{align}
The inequality depends on the fact that $(k/p) - 1 \le 0$ and 
\begin{align}
   \L( \int_0^s b(u) du \R)^{(k/p) -1} \ge \L[ y(0)^{k/p} + \frac{k}{p}\int_0^s a(u) \diff{u} + \L(\int_0^s b(u) \diff{u} \R)^{k/p}\R]^{1-(p/k)}. 
\end{align}
This is a direct calculation using the fact that all the terms are positive.
Therefore, the ansatz solves the differential inequality.
%
\end{proof}

\subsection{The resolvent}\label{sec:resolvent}

The comparison principle extends to other functions besides polynomial moments. In this section, we study moments of the resolvent:
\begin{align}
    \vR: = \frac{1}{\vH - \omega +\ri \eta}\quad \text{and}\quad \tvR: = \frac{1}{\tvH - \omega +\ri \eta}.
\end{align}
As usual, $\vH$ and $\tvH$ are defined in~\eqref{eqn:rdm-sum-intro}.  The parameters $\omega \in \BR$ and $\eta > 0$.

\universalityResolvent*

Whenever the right-hand side is small ($\ll (p\eta)^{-1}$), we may take $p$th power to obtain the expected density of states (filtered by the resolvent) up to a \textit{multiplicative} error. For our Pauli string ensemble, we can achieve this outcome because $L_{3p,\infty} = 1/\sqrt{m}$, and the number $m$ of summands is chosen sufficiently large.

As compared with the polynomial moments, universality for the resolvent involves some additional technical challenges.  They stem from the fact that the resolvent has an infinite Taylor series, and it is a nonconvex function of the random matrix.
To address the first concern, we follow~\cite{Tatiana_22_universality} and truncate the Taylor series at a carefully chosen order.  To that end, let us recall the statement of Taylor's theorem with an integral remainder.

\begin{fact}[Taylor with integral remainder]
\label{fact:Taylor}
If the function $f : [0,1] \to \BR$ is $K$ times continuously differentiable, then
\begin{align}
     f(1) = \sum_{k=0}^{K - 1} \frac{f^{(k)}(0)}{k!} + \frac{1}{(K-1)!} \left[ \int_{0}^{1}f^{(K)}(s) (1-s)^{K-1} \idiff{s} \right].
\end{align}
\end{fact}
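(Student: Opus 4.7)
The plan is to reduce the identity to the fundamental theorem of calculus applied to a suitable auxiliary function that packages the Taylor polynomial with the ``expansion point'' sliding along $[0,1]$. Specifically, I would introduce
$$g(t) := \sum_{k=0}^{K-1} \frac{f^{(k)}(t)\,(1-t)^k}{k!}, \qquad t \in [0,1].$$
By construction, $g(0) = \sum_{k=0}^{K-1} f^{(k)}(0)/k!$ is precisely the partial Taylor sum appearing in the claim, while $g(1) = f(1)$, since every term with $k \geq 1$ carries a factor $(1-t)^k$ that vanishes at $t = 1$. The claimed identity will therefore follow from the fundamental theorem of calculus as soon as I show that
$$g'(t) = \frac{f^{(K)}(t)\,(1-t)^{K-1}}{(K-1)!}.$$
The assumption that $f$ is $K$ times continuously differentiable guarantees that $g$ is continuously differentiable on $[0,1]$, so this step is legitimate and the remainder integral makes sense.

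To establish the formula for $g'$, I would differentiate term by term using the product rule. The $k=0$ summand contributes $f'(t)$. For each $1 \le k \le K-1$, the summand contributes
$$\frac{f^{(k+1)}(t)\,(1-t)^k}{k!} \;-\; \frac{f^{(k)}(t)\,(1-t)^{k-1}}{(k-1)!}.$$
Reindexing the positive part via $j = k+1$ and the negative part via $j = k$, the two resulting sums telescope: all intermediate terms cancel, leaving only the upper boundary $f^{(K)}(t)(1-t)^{K-1}/(K-1)!$ and the lower boundary $-f'(t)$. The lower boundary then cancels the $f'(t)$ coming from the $k = 0$ summand, producing exactly the desired expression. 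Integrating $g'$ over $[0,1]$ and substituting the boundary values of $g$ yields the claimed equality.

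The only part requiring any care is the index bookkeeping in the telescoping cancellation, but this is entirely mechanical. An equally elementary alternative would be induction on $K$: the base case $K=1$ is the fundamental theorem of calculus, and the step from $K$ to $K+1$ follows from a single integration by parts on the remainder integral, using $u = f^{(K)}(s)$ and $dv = (1-s)^{K-1}\,\idiff s$, which precisely converts the $K$th remainder into the $(K+1)$st Taylor term plus the $(K+1)$st remainder. I expect no genuine obstacle; the proof is a short exercise in calculus, included only to fix the form of the remainder that will be used in the resolvent comparison argument.
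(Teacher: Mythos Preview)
Your proof is correct. The paper itself does not supply a proof of this statement: it is recorded as a standard fact (Fact~\ref{fact:Taylor}) and invoked without argument, since Taylor's theorem with integral remainder is textbook calculus. Your auxiliary-function approach with the telescoping derivative is one of the classical derivations, and the alternative you sketch via induction and integration by parts is equally standard; either would be perfectly acceptable here.
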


The Taylor expansion of the resolvent has a rather involved expression. Fortunately, we merely need bounds for the higher-order terms. 

\begin{prop}[Expanding the resolvent] \label{prop:expand_resolvent}
For Hermitian matrices $\vS$ and $\vA$ of the same order, consider the matrix $\vZ = \vS + \iunit\eta \vI$ where $\eta \in \BR$. Then, for each even natural number $p \in 2 \mathbb{N}$,
\begin{align}
\frac{1}{\labs{\vZ+\vA}^p} 
&= \sum_{k=0}^{3p} \vM_k \quad \text{where}\quad \Expect\btr \labs{\vM_k} 
    \le \frac{(4p)^{k}}{\eta^{4k/3}} \cdot \vertiii{\vZ^{-1}}_{p}^{p-k/3} \vertiiismall{\vA}^k_{3p}\quad \text{for $k = 0,\dots, 3p$.}
    \label{eq:resolvent_expansion}
\end{align}
The term $\vM_k$ is a noncommutative polynomial of degree $k$ in the variable $\vA$
and degree $p + k$ in the variables $\vZ^{-1}$ and $\vZ^{-\dagger}$, where ${}^{-\dagger}$
refers to the conjugate transpose of the inverse.
\end{prop}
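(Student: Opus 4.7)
The plan is to represent $|\vZ+\vA|^{-p}$ as a $p$-fold alternating product of resolvents and then apply Taylor's theorem in a scalar interpolation parameter, truncating at order $3p$. Since $p$ is even and $\vZ+\vA=(\vS+\vA)+\iunit\eta\vI$ is invertible (normal, with eigenvalues $\lambda+\iunit\eta$, $\lambda\in\mathbb{R}$), one can write $|\vZ+\vA|^{-p}=\bigl((\vZ+\vA)^{\dagger}(\vZ+\vA)\bigr)^{-p/2}$, which expands into the $p$-fold alternating product $(\vZ+\vA)^{-1}(\vZ+\vA)^{-\dagger}\cdots(\vZ+\vA)^{-1}(\vZ+\vA)^{-\dagger}$. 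Introduce $f(s):=|\vZ+s\vA|^{-p}$, real-analytic in $s\in\mathbb{R}$, and invoke Fact~\ref{fact:Taylor} at order $K=3p$: define $\vM_k:=f^{(k)}(0)/k!$ for $0\le k<3p$ and $\vM_{3p}:=\tfrac{1}{(3p-1)!}\int_0^1 f^{(3p)}(s)(1-s)^{3p-1}\,\mathrm{d}s$.

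The standard resolvent identity together with induction gives $\tfrac{\mathrm{d}^j}{\mathrm{d}s^j}(\vZ+s\vA)^{-1}\big|_{s=0}=(-1)^j j!\,(\vZ^{-1}\vA)^j\vZ^{-1}$. Applying Leibniz to the $p$-fold product---after the multinomial coefficient $\binom{k}{k_1,\dots,k_p}$ cancels against $\prod_i k_i!/k!$---yields
\[
\vM_k=(-1)^k\!\!\!\sum_{k_1+\cdots+k_p=k}\prod_{i=1}^{p}(\vZ^{-\ast_i}\vA)^{k_i}\vZ^{-\ast_i},
\]
with $\ast_i$ alternating between $1$ and $\dagger$. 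Each monomial has exactly $k$ copies of $\vA$ and $p+k$ copies of $\vZ^{-1}$ or $\vZ^{-\dagger}$, matching the stated degree. The number of monomials is $\binom{k+p-1}{p-1}\le(4p)^k$ for $k\le 3p$ (using $\binom{k+p-1}{k}\le(p+k)^k/k!\le(4p)^k$).

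To bound each monomial in expected trace, apply Fact~\ref{fact:products_holder} with H\"older exponent $3p$ on each copy of $\vA$ and exponent $q:=3p(p+k)/(3p-k)$ on each resolvent factor, so that $k/(3p)+(p+k)/q=1$; this yields the per-monomial bound $\vertiii{\vA}_{3p}^k\,\vertiii{\vZ^{-1}}_q^{p+k}$. Because $\vZ$ is normal with smallest singular value at least $\eta$, one has $\vertiii{\vZ^{-1}}_\infty\le 1/\eta$, and Lyapunov log-convexity of the normalized Schatten norms gives $\vertiii{\vZ^{-1}}_q\le\vertiii{\vZ^{-1}}_p^{p/q}\eta^{-(1-p/q)}$ for $q\ge p$. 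A short computation from the definition of $q$ yields $(p/q)(p+k)=p-k/3$ and $(1-p/q)(p+k)=4k/3$, producing the per-monomial factor $\eta^{-4k/3}\vertiii{\vZ^{-1}}_p^{p-k/3}\vertiii{\vA}_{3p}^k$. Multiplying by the monomial count $(4p)^k$ completes the bound for $k<3p$. The remainder $\vM_{3p}$ has the same structure with $\vZ^{-1}$ replaced by $(\vZ+s\vA)^{-1}$, whose operator norm is still bounded by $1/\eta$ uniformly in $s$; bounding every resolvent factor by $1/\eta$ (so the entire H\"older budget rests on the $\vA$'s at exponent $3p$), integrating the $(1-s)^{3p-1}$ weight against the $(3p)!$ arising from Leibniz, and using $\vertiii{\vZ^{-1}}_p^{p-3p/3}=1$ reproduces the stated $k=3p$ bound.

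The only moderately delicate step is matching the interpolation parameters. The truncation order $K=3p$ is chosen precisely so that the $p$-norm of $\vZ^{-1}$ cancels entirely out of the remainder bound, leaving only operator-norm factors $\eta^{-1}$ paired with the $3p$-norm of $\vA$. Likewise, the H\"older exponent $q=3p(p+k)/(3p-k)$ is the unique choice for which Lyapunov interpolation reproduces both advertised exponents $\eta^{-4k/3}$ and $\vertiii{\vZ^{-1}}_p^{p-k/3}$ simultaneously; verifying this arithmetic matching is where all the moving pieces must fit together, and everything else is routine application of H\"older, log-convexity, and Taylor.
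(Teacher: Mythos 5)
Your proof is correct and follows essentially the same route as the paper: write $\labs{\vZ+\vA}^{-p}$ as the $p$-fold alternating product of resolvents (using normality of $\vZ+\vA$), expand to collect the terms of degree $k$ in $\vA$ with the count $\binom{p+k-1}{k}\le(4p)^k$, bound each monomial by multivariate H\"older, convert the resolvent exponent to the $p$-norm plus operator-norm factors $\eta^{-1}$, and treat $k=3p$ via the Taylor integral remainder. Your bookkeeping differs only cosmetically (Taylor in the interpolation parameter $s$ instead of a direct Neumann expansion, and Lyapunov interpolation instead of the paper's split $p+k=(p-k/3)+4k/3$ with monotonicity plus the uniform bound), and both yield the identical estimate.
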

\begin{proof}[Proof of Proposition~\ref{prop:expand_resolvent}]
First, we expand 
\begin{align}
   \frac{1}{\labs{\vZ+\vA}^p}&=\frac{1}{(\vZ+\vA)^{p/2}} \times \frac{1}{((\vZ+\vA)^{\dagger})^{p/2}}\\
   &= \sum_{\ell_1=0}^{\infty} (-\vZ^{-1}\vA)^{\ell_1}\vZ^{-1}\cdots \sum_{\ell_{p/2}=0}^{\infty} (-\vZ^{-1}\vA)^{\ell_{p/2}}\vZ^{-1}\\ &\qquad\times \sum_{\ell'_1=0}^{\infty} (-\vZ^{-\dagger}\vA)^{\ell'_1}\vZ^{-\dagger}\cdots \sum_{\ell'_{p/2}=0}^{\infty} (-\vZ^{-\dagger}\vA)^{\ell'_{p/2}}\vZ^{-\dagger}.\label{eq:ZplusAp}
\end{align}
The first line depends on the fact that $\vZ + \vA$ and $\vZ^{\dagger} + \vA$ commute. The second line uses the expansion $(\vZ + \vA)^{-1} = \sum_{\ell=0}^{\infty} (-\vZ^{-1}\vA)^{\ell}\vZ^{-1}$.

Next, we collect into the matrix $\vM_k$ all terms with total power $k$ on the matrix $\vA$ and total power $p + k$ on the matrix $\vZ^{-1}$ or $\vZ^{-\dagger}$.
For $ 0 \le k < 3p $, there are $\binom{p+k-1}{k}\le (4p)^k$  such terms.  Then we apply H{\"o}lder's inequality (Fact~\ref{fact:products_holder}) to each term contributing to $\vM_k$.  This step yields  
\begin{align}
    \Expect\tr \labs{\vM_k} \le (4p)^k \cdot \vertiii{\vZ^{-1}}^{p+k}_{q_1} \vertiiismall{\vA}_{q_2}^{k} \quad \text{for}\quad  q_1= \frac{3p}{3p-k} \quad \text{and} \quad q_2=\frac{3p}{k}.
\end{align}
To bring the bounds into the same form, note that $\vertiii{\smash{\vZ^{-1}}}_{q_1} \le \vertiii{\smash{\vZ^{-1}}}_{p}$ for $q_1 \leq p$ and $\vertiii{\vA}_{q} \le \vertiii{\vA}_{3p}$ for $q_2 \le 3p$.   To treat additional powers of $\vZ^{-1}$, use a uniform bound for the resolvent: $\vertiii{\smash{\vZ^{-1}}}_{q_1}^{4k/3} \leq \norm{\vZ^{-1}}^{4k/3} \le \eta^{-4k/3}$.  We reach the advertised bound for terms of order $0\le k < 3p$.

For the remainder term in the Taylor expansion ($K = 3p$), we may compute the $K$th derivative  using~\eqref{eq:ZplusAp} and invoke the same method to obtain a bound.
\begin{align}
     \labs{\frac{\diff{}^{3p}}{\diff{s}^{3p}} \BE \btr\frac{1}{\labs{\vZ+s\vA}^p}} \le
     (3p)! \cdot (4p)^{3p} \cdot \frac{\vertiiismall{\vA}^{3p}_{3p}}{\eta^{3p}}
    \quad\text{for $s \in [0,1]$.}
\end{align}
We have applied the uniform bound $\norm{(\vZ +s\vA)^{-1}} \le \eta^{-1}$.  Introduce the last display into the integral remainder term in the Taylor expansion (\ref{fact:Taylor}).  We reach the required estimate for $K=3p$.
\end{proof}

\begin{proof}[Proof of Theorem~\ref{thm:universality_resolvent}]
To obtain a comparison of the resolvents, we apply Lindeberg's method again.
For clarity of argument, we will assume that there are $t = 2$ matching moments;
the general case is similar.  Define hybrid matrices and their resolvents:
\begin{align}
\vH_j := \sum^j_{i=1} \tvA_i +  \sum^m_{j+1} \vA_i 
\quad\text{and}\quad
\vR_j := \frac{1}{\vH_j - \omega +\ri \eta}
\quad\text{for $j = 0, \dots, m$.}
\end{align}
Consider the telescoping sum
\begin{align}
    \Expect \btr\abs{\tvR}^p - \Expect \btr\labs{\vR}^p&= \sum_{j = 1}^m \big(\Expect \btr \labs{\vR_j}^p - \Expect \btr\labs{\vR_{j-1}}^p \big). 
\end{align}
We must bound each of the terms in the telescope.

\textbf{Step I: Single update error.}
Fix an index $j = 1, \dots, m.$  The $j$th update replaces the summand $\vA_j$ with $\tvA_j$.
Define the unchanged part of the matrix and its resolvent:
\begin{align}
    \vH_-:= \vH_j-\tvA_j= \vH_{j-1}-\vA_j 
    \quad\text{and}\quad
    \vR_- := (\vH_- - \omega +\ri \eta)^{-1}.
\end{align}
Since the moments of $\vA_j$ and $\tvA_j$ match up to second order and these matrices are independent from $\vH_{-}$, the terms $\vM_0, \vM_1, \vM_2$ cancel the terms $\tilde{\vM}_0, \tilde{\vM}_1, \tilde{\vM_2}$
in the Taylor expansion of the resolvent powers (Proposition~\ref{prop:expand_resolvent}).
Thus,
\begin{align}
    \labs{  \Expect \btr \labs{\vR_j}^p - \Expect \btr \labs{\vR_{j-1}}^p } & =  \labs{\Expect \btr \labs{(\vH_--\omega +\ri \eta) +\vA_j }^{-p} - \Expect\btr\labs{\smash{(\vH_- -\omega +\ri \eta) + \tvA_j}}^{-p} }\\
    &\le \frac{1}{4} \vertiii{\vR_{-}}_{p}^{p-1} \frac{(8p)^{3} \vertiiismall{\tvA_j}_{3p}^{3}}{\eta^{4}} + \frac{1}{4}   \frac{(8p)^{3p}\vertiiismall{\tvA_j}_{3p}^{3p}}{\eta^{4p}} \quad+\quad (\tvA_j \mapsto \vA_j)\\
    &\le \frac{1}{2} \vertiii{\vR_{-}}_{p}^{p-1} \frac{(8p c_j)^{3}}{\eta^{4}} + \frac{1}{2}  \frac{(8pc_j)^{3p}}{\eta^{4p}} \quad \quad \text{(setting $c_j := \max\{ \vertiiismall{\vA_j}_{3p}, \vertiii{\smash{\tvA_j}}_{3p} \}$)}\\
    &\le \vertiii{\vR_{j-1}}_{p}^{p-1} \frac{(8pc_j)^{3}}{\eta^{4}} + \frac{3}{2}  \frac{(8pc_j)^{3p}}{\eta^{4p}} .\label{eq:resolvent_difference}
\end{align}
The first inequality bounds the geometric series of the error terms~\eqref{eq:resolvent_expansion}:
\begin{align}
    \sum_{k=3}^{3p} {x^k}
    \le \frac{1}{4}( (2x)^3+(2x)^{3p})\quad \text{for} \quad x:= \frac{4p}{\eta^{4/3}} \cdot \vertiii{\vR_{-}}_{p}^{-1/3} \vertiiismall{\tvA_j}_{3p}.
\end{align}
The second inequality combines the bounds for the two different summands $\vA_j$ and $\tvA_j$.

The third inequality requires some comment.
By another Taylor expansion, we may control the moments of $\vR_{-}$ using the moments of $\vR_{j-1}$:
\begin{align}
    \Expect \btr\labs{\vR_{-}}^p &\le
    \Expect \btr \labs{\vR_{j-1}}^p \left(\sum_{k=0}^{3p} y^{k}\right) 
    \quad\text{where}\quad
    y:= \frac{4p}{\eta^{4/3}} \cdot \vertiii{\vR_{j-1}}_{p}^{-1/3} \vertiiismall{\vA_j}_{3p}.
\end{align}
Bounding the geometric series as $\sum_{k=0}^{3p} {y^k} \le 2( 1+((2y)^{3p})^{p/(p-1)})$
and noting that $\vertiiismall{\vA_j}_{3p} \le c_j$, we find that
\begin{align}
    \Expect \btr\labs{\vR_{-}}^p
         \le 2\L( \Expect \btr \labs{\vR_{j-1}}^p + \frac{ (8pc_j)^{3p}}{\eta^{4p}} \R)
\end{align}
Last, raise both sides to the $(p-1)/p$ power, and use the numerical inequality
$(a+b)^{(p-1)/p} \le a^{(p-1)/p}+b^{(p-1)/p}$ for $a,b \ge 0$ to reach~\eqref{eq:resolvent_difference}.
A similar bound holds when $t \geq 2$ moments match.

\textbf{Step II: Solving the recursion.} 
The recursion is similar to the proof of Theorem \ref{thm:universality_moments}.
We will present this argument for a general choice of $t \geq 2$.
First, introduce the scalar variables
$$
x_j:=\Expect \btr \labs{\vR_j}^p
\quad\text{for $j = 0, \dots, m$.}
$$
Define the coefficients $a_0 := b_0 := 0$ and
\begin{align}
    a_j :=\frac{1}{2^{t-2}} \frac{(2(t+2)pc_j)^{t+1}}{\eta^{t+2}} \quad \text{and}\quad b_j := \frac{3}{2^{t-1}} \frac{(2(t+2)pc_j)^{(t+1)p}}{\eta^{(t+2)p}}.
\end{align}
The updates~\eqref{eq:resolvent_difference} can then be written as a scalar recursion
\begin{align}
     \labs{ x_{j}-x_{j-1}} \le a_j \cdot x_{j-1}^{(p-1)/p} + b_j \quad \text{for each} \quad j = 1 ,\dots,  m. 
\end{align}
Repeating the same arguments as before (via Lemma~\ref{lem:diff_bounds_discrete} and Lemma~\ref{lem:ansatz}), we obtain control on the endpoints of the sequence:
\begin{align}
\labs{x_m^{1/p} - x_0^{1/p}} &\le \frac{1}{p} \sum_{j=1}^m a_j+ \left(\sum_{j=1}^m b_j \right)^{1/p}\\
&\le 8m\frac{p^t ((t+2)L_{3p,\infty})^{t+1}}{\eta^{t+2}} + \left(\frac{3}{2}\right)^{1/p} m^{1/p} \frac{(2(t+2)pL_{3p,\infty})^{t+1}}{2^{(t-2)/p}\eta^{t+2}} \\
&\le (2^{t+2}+10m/p) \frac{((t+2) p L_{3p,\infty})^{t+1}}{\eta^{t+2}}.
\end{align}
The second inequality uses the uniform bound $c_j = \max\{\vertiiismall{\tvA_j}_{3p},\vertiii{\vA_j}_{3p}\} \le L_{3p,\infty}$.
To reach the last line, note that $(3/2)^{1/p}\le 2$, drop the denominator $2^{(t-2)/p} \ge 1$,
and apply Young's inequality to determine that $m^{1/p}\le 1+m/p$.
\end{proof}
\subsection{Concentration for resolvent trace}

In this section, we study the $q$th moments of the resolvent trace, which, by Markov's inequality, gives the concentration of local density of states needed for Theorem~\ref{thm:low-energy-density}. The concentration fundamentally differs from the calculation for the expectation and does not explicitly refer to an ideal random matrix ensemble (e.g., the GUE). It suffices to introduce an independent copy by the convexity of $q$-norm 
\begin{align}
    \labs{\btr\labs{\vR}^p-\BE \btr\labs{\vR}^p }_q \le \labs{\btr\labs{\vR}^p-\btr\labs{\vR'}^p }_q \quad \text{where}\quad \labs{x}_q: = (\Expect \labs{x}^q)^{1/q} 
\end{align}
which allows us to utilize powerful concentration inequality for martingales. The estimate depends on an expected moment $\labs{\btr\labs{\vR}^p}_q$, which we bound independently in Section~\ref{sec:resolvent_expected_moment}.

\begin{thm}[Concentration for resolvent trace]\label{thm:resolvent_con_good_lambda}
 For independent centered matrices $\vA_1,\dots, \vA_m$,
consider identical copies $\vA'_j$ of $\vA_j$. Then, the resolvent trace concentrates
    \begin{align}
    \labs{\btr\labs{\vR}^p-\btr\labs{\vR'}^p}_q
    \lesssim \labs{\btr\labs{\vR}^p}_q \L( \frac{\sqrt{q} p^2 }{\eta^2}\sqrt{\sum_{j=1}^m \labs{\norm{\vA_{j}}}_{\infty}^4}+ \frac{\sqrt{q}p}{\eta^2}\sqrt{\sum^m_{j=1} 
    \sigma_{*}(\vA_j)^2} + \frac{qp}{\eta}\left(\sum_{j=1}^m \labs{\norm{\vA_{j}}}^q_{\infty}\right)^{1/q}\R)
\end{align}
where
\begin{align}
    \sigma_{*}(\vA)^2 := \sup_{\norm{\ket{u}}=\norm{\ket{v}} = 1} \BE_{\vA} \labs{\bra{u}\vA\ket{v}}^2.
\end{align}
\end{thm}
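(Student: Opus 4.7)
The plan is to bound the left-hand side by reducing to the Doob martingale of $\btr\labs{\vR}^p$ and invoking a Hitczenko--Burkholder inequality~\cite{hit_good_lambda} in $L^q$. The three summands on the right-hand side should arise from the two pieces of the martingale square function (picking up $\sigma_*(\vA_k)^2$ and $\norm{\vA_k}^4$, respectively) and one piece of the jump, each computed via the resolvent sensitivity from Proposition~\ref{prop:expand_resolvent}. By the triangle inequality, $\lnormp{\btr\labs{\vR}^p - \btr\labs{\vR'}^p}{q} \le 2\lnormp{\btr\labs{\vR}^p - \Expect\btr\labs{\vR}^p}{q}$, so it suffices to control the centered version.

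\textbf{Doob martingale and decoupling.} Let $\CF_k := \sigma(\vA_1, \ldots, \vA_k)$ and define the martingale differences $d_k := \Expect[\btr\labs{\vR}^p \mid \CF_k] - \Expect[\btr\labs{\vR}^p \mid \CF_{k-1}]$, so that $\btr\labs{\vR}^p - \Expect\btr\labs{\vR}^p = \sum_k d_k$. By the standard decoupling identity, $d_k = \Expect[\btr\labs{\vR}^p - \btr\labs{\vR^{(k)}}^p \mid \CF_k]$, where $\vR^{(k)}$ is the resolvent obtained by replacing $\vA_k$ in $\vH$ with an independent copy $\vA'_k$. Jensen's inequality then turns any pointwise bound on the sensitivity $\btr\labs{\vR}^p - \btr\labs{\vR^{(k)}}^p$ into a bound on $\labs{d_k}$ and on $\Expect_{k-1} d_k^2$.

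\textbf{Resolvent sensitivity.} Set $\vH^-_k := \vH - \vA_k$ and $\vR^-_k := (\vH^-_k - \omega + \ri\eta)^{-1}$. Expanding both $\labs{\vR}^p$ and $\labs{\vR^{(k)}}^p$ about $\vR^-_k$ via Proposition~\ref{prop:expand_resolvent} and subtracting, the zeroth-order terms cancel, giving
$\btr\labs{\vR}^p - \btr\labs{\vR^{(k)}}^p = \sum_{\ell \ge 1}\bigl(\btr\vM_\ell(\vA_k) - \btr\vM_\ell(\vA'_k)\bigr),$
where $\vM_\ell(\vA)$ is a sum of noncommutative monomials of degree $\ell$ in $\vA$ and degree $p+\ell$ in $\vR^-_k$. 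The $\ell = 1$ trace has the form $\btr[\vB_k(\vA_k - \vA'_k)]$ where $\vB_k$ depends only on $\vH^-_k$; H\"older as in Proposition~\ref{prop:expand_resolvent} bounds its Schatten-$1$ norm by $(p/\eta^2)\btr\labs{\vR^-_k}^p$. A singular value decomposition $\vB_k = \sum_i\sigma_i\ket{u_i}\bra{v_i}$ followed by Cauchy--Schwarz and the defining supremum of $\sigma_*$ yields the weak-variance estimate $\Expect\labs{\btr[\vB_k\vA_k]}^2 \le \sigma_*(\vA_k)^2(\btr\labs{\vB_k})^2$, producing the $\sigma_*(\vA_k)^2$ contribution. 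The $\ell = 2$ piece is quadratic in $\vA_k$ and bounded pointwise by $(p^2/\eta^2)^2\norm{\vA_k}^4(\btr\labs{\vR}^p)^2$, producing the $\sum\norm{\vA_j}^4$ contribution. All $\ell \ge 3$ residuals, together with the crude pointwise estimate $\labs{d_k} \lesssim (p/\eta)\norm{\vA_k}\btr\labs{\vR}^p$, feed the jump.

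\textbf{Assembly and main obstacle.} Hitczenko's inequality~\cite{hit_good_lambda} gives
$\lnormp{\sum_k d_k}{q} \lesssim \sqrt{q}\lnormp{(\sum_k\Expect_{k-1}d_k^2)^{1/2}}{q} + q\lnormp{\max_k\labs{d_k}}{q}.$
Substituting the three sensitivity bounds and using H\"older in $L^q$ to pull the common factor $\lnormp{\btr\labs{\vR}^p}{q}$ outside both terms yields the stated inequality. The main obstacle is isolating $\sigma_*(\vA_k)^2$ in the conditional second moment instead of the blunt $\norm{\vA_k}^2$ that a direct H\"older step would produce; this requires reading the $\ell = 1$ trace as a Hilbert--Schmidt inner product and invoking the defining supremum of $\sigma_*$ along the singular vectors of $\vB_k$. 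A secondary self-bounding step is needed to replace $\btr\labs{\vR^-_k}^p$ by $\btr\labs{\vR}^p$ uniformly in $k$ (so that a single $L^q$-factor can be extracted across all summands), which follows from a reverse application of the same Taylor expansion, with the multiplicative error absorbed into $\lesssim$.
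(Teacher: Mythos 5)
Your overall architecture is essentially the paper's: a coordinate-wise martingale difference decomposition, the refined inequality of~\cite{hit_good_lambda} to get the $\sqrt{q}\,(\text{square function}) + q\,(\text{jump})$ structure, the weak-variance bound $\BE_{\vA}\labs{\btr[\vA\vB]}^2 \le \sigma_*(\vA)^2 (\btr\labs{\vB})^2$ for the linear-in-$\vA_k$ piece, and crude H{\"o}lder estimates for the quadratic and jump pieces. (The paper organizes this slightly differently: it telescopes over single swaps $\vA_j \mapsto \vA'_j$, splits each update exactly into three martingale difference sequences $a_j, b_j, c_j$, applies Theorem~\ref{thm:good-lambda-moments} to the two linear ones and uniform smoothness to $c_j$; your single application of Hitczenko to the Doob differences, with the split made inside the square function, is a harmless reorganization.) The genuine gap is your decision to Taylor-expand about the leave-one-out resolvent $\vR^-_k$ via Proposition~\ref{prop:expand_resolvent} and then ``replace $\btr\labs{\vR^-_k}^p$ by $\btr\labs{\vR}^p$ \ldots with the multiplicative error absorbed into $\lesssim$.'' That replacement is not multiplicative: the comparison available (exactly as in the paper's own self-bounding steps in the proofs of Theorems~\ref{thm:universality_resolvent} and~\ref{thm:expected_resolvent_pq}) has the form $\btr\labs{\vR^-_k}^p \le 2\,\btr\labs{\vR}^p + \big(\CO(p\norm{\vA_k})/\eta^{4/3}\big)^{3p}$, i.e.\ it carries an additive error that is not controlled by $\labs{\btr\labs{\vR}^p}_q$; a pathwise multiplicative bound via $\vR^-_k = \vR(\vI - \vA_k\vR)^{-1}$ would require $\norm{\vA_k} < \eta$, which is not among the hypotheses. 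So, as written, your argument does not yield the stated right-hand side, which is purely proportional to $\labs{\btr\labs{\vR}^p}_q$.

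The paper sidesteps this entirely, and this is the one idea your proposal misses: because the intermediate matrices in the swap telescope contain only i.i.d.\ copies, every intermediate resolvent $\vR_j$ is \emph{equidistributed} with $\vR$, and the update is manipulated with the exact identity $\vA^{-1}-\vB^{-1}=\vB^{-1}(\vB-\vA)\vA^{-1}$ plus telescoping of the $p$-th power---no Taylor series, no leave-one-out resolvent, no self-bounding. The linear terms then come naturally paired as $\btr[\vR_j^{(r)}\vA_j\vR_j^{(s)}]$ and $\btr[\vR_{j-1}^{(r)}\vA'_j\vR_{j-1}^{(s)}]$, where in each case the resolvent is independent of the summand it multiplies (so Fact~\ref{fact:sigmastar} applies conditionally) \emph{and} has the law of $\vR$ (so the factor $\labs{\btr\labs{\vR}^p}_q$ is extracted exactly). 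If you wish to keep the Doob-martingale framing, the repair is to expand exactly in terms of $\vR$ and the single-swap resolvent $\vR^{(k)}$ (both equidistributed with $\vR$) rather than about $\vR^-_k$; with that modification your outline goes through and reproduces the paper's bound.
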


Crucially, the estimate depends on a variance-like quantity $\sigma_*^2(\vA)$ that more fully reflects the randomness of the random matrix $\vA$. For the Pauli string ensemble, this quantity is significantly smaller than the ordinary matrix variance: $\sigma_*^2(\vA)= (mN)^{-1} \ll m^{-1} =\Expect \labs{\norm{\vA_j}}^2$. The quantity $\sigma_*^2(\vA)$ arises from the following bound.

\begin{fact}\label{fact:sigmastar}
Consider a random matrix $\vA$ and a fixed matrix $\vB$ with compatible dimensions, then
\begin{align}
    \BE_{\vA} \labs{\tr[\vA\vB]}^2 \le \sigma_*^2(\vA) \cdot \tr[\labs{\vB}]^2 
\end{align}
\end{fact}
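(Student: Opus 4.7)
The plan is to reduce the trace $\tr[\vA \vB]$ to a weighted average of matrix coefficients $\bra{u_i} \vA \ket{v_i}$ and then invoke the definition of $\sigma_*^2(\vA)$ after one application of Cauchy--Schwarz. The key observation is that the right-hand side involves $\tr\labs{\vB}$, which is the sum of singular values of $\vB$, suggesting a singular value decomposition approach.

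First I would write the SVD of the fixed matrix $\vB = \sum_i s_i \ket{u_i}\bra{v_i}$, where the $s_i \ge 0$ are the singular values (so $\sum_i s_i = \tr\labs{\vB} \eqqcolon S$) and $\{\ket{u_i}\}, \{\ket{v_i}\}$ are orthonormal systems. Expanding the trace gives
\begin{align}
\tr[\vA \vB] = \sum_i s_i \bra{v_i} \vA \ket{u_i}.
\end{align}
Next, by Cauchy--Schwarz applied to the weights $\sqrt{s_i}$ and the quantities $\sqrt{s_i}\bra{v_i}\vA\ket{u_i}$,
\begin{align}
\labs{\tr[\vA \vB]}^2 \;\le\; \L(\sum_i s_i\R)\L(\sum_i s_i \labs{\bra{v_i}\vA\ket{u_i}}^2\R) \;=\; S \cdot \sum_i s_i \labs{\bra{v_i}\vA\ket{u_i}}^2.
\end{align}

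Taking expectations and using $\BE_{\vA} \labs{\bra{v_i}\vA\ket{u_i}}^2 \le \sigma_*^2(\vA)$ (which follows directly from the definition of $\sigma_*^2(\vA)$ as a supremum over unit vectors),
\begin{align}
\BE_{\vA} \labs{\tr[\vA\vB]}^2 \;\le\; S \cdot \sigma_*^2(\vA) \sum_i s_i \;=\; S^2 \, \sigma_*^2(\vA) \;=\; \sigma_*^2(\vA)\, \tr[\labs{\vB}]^2,
\end{align}
as claimed. There is no real obstacle here; the entire argument is essentially a linearity-plus-Cauchy--Schwarz computation, and the only subtlety is choosing to organize the SVD so that the singular values play the role of weights in Cauchy--Schwarz. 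If $\vB$ is non-square or complex, the same argument applies verbatim because we only use the SVD and the supremum in $\sigma_*^2$ ranges over all compatible unit vectors $\ket{u}, \ket{v}$.
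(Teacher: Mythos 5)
Your proof is correct and follows essentially the same route as the paper: both arguments expand $\tr[\vA\vB]$ via the singular value decomposition of $\vB$ and combine Cauchy--Schwarz with the definition of $\sigma_*^2(\vA)$. The only (cosmetic) difference is that you apply Cauchy--Schwarz deterministically to the weighted sum $\sum_i s_i \bra{v_i}\vA\ket{u_i}$ and then bound each $\BE_{\vA}\labs{\bra{v_i}\vA\ket{u_i}}^2$ termwise, whereas the paper first bounds the double sum by a supremum over cross terms and then applies Cauchy--Schwarz inside the expectation; both yield the identical bound.
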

\begin{proof}[Proof of Fact~\ref{fact:sigmastar}]
    Consider the singular value decomposition 
     $\vB = \sum_j \ket{v_j} s_j\bra{u_j}$. 
    Then,
    \begin{align}
        \BE_{\vA} {\tr[\vA\vB]\tr[\vA^{\dagger}\vB^{\dagger}]} &=  \BE_{\vA}\sum_{j}\bra{u_j}\vA\ket{v_j} s_j  \sum_j\bra{v_j}\vA^{\dagger}\ket{u_j} s_j\\
        &\le \sup_{j,i} \BE_{\vA} 
        \labs{\bra{u_j}\vA\ket{v_j}} \labs{\bra{u_i}\vA\ket{v_i}} \cdot \tr[\labs{\vB}]\tr[\labs{\vB}]\\
        &\le \sup_{\norm{\ket{u}}=\norm{\ket{v}} = 1} \BE_{\vA} \labs{\bra{u}\vA\ket{v}}^2 \cdot \tr[\labs{\vB}]\tr[\labs{\vB}].
    \end{align}
    The first inequality pushes the expectation inside the sum, and it applies H{\"o}lder's inequality to the sum. The second inequality is Cauchy--Schwarz. This is the advertised result.
\end{proof}

Also, the proof of Theorem~\ref{thm:resolvent_con_good_lambda} employs a refined scalar martingale inequality as follows. (For an introduction to martingales, see~\cite{williams_1991}.)

\begin{thm}[{\cite{hit_good_lambda}}]
\label{thm:good-lambda-moments} For scalar martingales difference sequence $d_j$ (i.e., $\Expect_{j-1}d_j:=\Expect[d_j|d_{j-1},\cdots, d_1]= 0)$, we have that 
\begin{align}
\left| \max_{k\le n} \labs{\sum^k_{j=1} d_j} \right|_q &\lesssim \left(\sqrt{q}\left| \sum^n_{j=1} \BE_{j-1}d^*_jd_j \right|^{1/2}_{\frac{q}{2}} +q\left|\max_{1\le j\le n} |d|_j \right|_q\right).
\end{align}
\end{thm}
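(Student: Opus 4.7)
The plan is to obtain this moment inequality by combining three classical ingredients: Doob's maximal inequality to reduce from the running maximum to the terminal sum, the Davis decomposition to split each martingale difference into a ``tame'' piece predictably bounded by the running maximum and a ``wild'' jump piece controlled in total variation, and the Burkholder--Davis--Gundy square-function inequality with its sharp $\sqrt{q}$ constant. Throughout write $S_k := \sum_{j=1}^k d_j$ and $d_k^* := \max_{j\le k}|d_j|$, and recall that for $q\ge 2$ Doob's inequality yields $\| \max_{k\le n}|S_k| \|_q \le \tfrac{q}{q-1}\|S_n\|_q \le 2\|S_n\|_q$, so it suffices to control $\|S_n\|_q$.

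Next I would apply the Davis decomposition to write $d_j = e_j + f_j$, where both $(e_j)$ and $(f_j)$ are martingale differences, $|e_j| \le 2d^*_{j-1}$ is predictably bounded by the running maximum of past jumps, and $\sum_j|f_j| \le 3 d_n^*$. The wild part is then immediate: $\| \sum_j f_j \|_q \le \| \sum_j|f_j|\, \|_q \le 3\, \|d_n^*\|_q$, which already produces a term of the shape $\|\max_j |d_j|\|_q$; the factor $q$ out front comes from absorbing a logarithmic cost in later steps.

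For the tame part, invoke the Burkholder $L^q$ square-function inequality with its sharp constant $\|\sum e_j\|_q \lesssim \sqrt{q}\, \|(\sum e_j^2)^{1/2}\|_q$, then replace the quadratic variation by the conditional (predictable) bracket. Writing $\sum e_j^2 = \sum \mathbb{E}_{j-1}[e_j^2] + N_n$ where $N$ is a mean-zero martingale with $|\Delta N_j| \le 2(d_j^*)^2$, I would iterate the inequality at exponent $q/2$ on $N$, using $|e_j|\le 2 d^*_{j-1}$ to bound the iterated error terms; each iteration halves the exponent and contributes a term proportional to $q\|d_n^*\|_q$. Finally use $\mathbb{E}_{j-1}[e_j^2] \lesssim \mathbb{E}_{j-1}[d_j^2]$ (since $e_j$ is a centered truncation of $d_j$ and any deficit is already booked inside the wild part) to land on the conditional quadratic variation $\sum_j \mathbb{E}_{j-1}[d_j^*d_j]$ appearing in the statement.

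The main obstacle is securing the sharp $\sqrt{q}$ constant in the square-function step; a naive application of BDG yields a constant growing like $q$ rather than $\sqrt{q}$, which would be fatal to the stated bound. The cleanest route is a good-$\lambda$ argument: for a martingale with $d^*\le \delta\lambda$, establish the subgaussian-type tail $\Pr(S_n^* > \beta\lambda,\ d_n^* + s_n \le \delta\lambda) \le c_1 \exp(-c_2\beta^2/\delta^2)\Pr(S_n^* > \lambda)$ by exponential supermartingale arguments, then integrate against $q\lambda^{q-1}\,\mathrm{d}\lambda$ with the optimized choice $\delta \sim 1/\sqrt{q}$ to harvest the $\sqrt{q}$-rate. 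A secondary technicality is the self-referential structure, where bounding $\|\sum e_j\|_q$ calls on $\|(\sum e_j^2)^{1/2}\|_q$ at exponent $q/2$; this is dispatched by downward induction on $q\in 2\mathbb{N}$ with base case $q=2$ handled by orthogonality of martingale differences.
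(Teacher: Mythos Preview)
The paper does not prove this theorem; it is quoted as a known result from Hitczenko~\cite{hit_good_lambda} and used as a black box in the proof of Theorem~\ref{thm:resolvent_con_good_lambda}. There is therefore no in-paper proof to compare against.

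That said, your outline is a reasonable sketch of how such sharp-constant martingale inequalities are obtained, and it is in the same spirit as the cited source: the citation key itself (\texttt{hit\_good\_lambda}) advertises that the original argument proceeds via good-$\lambda$ inequalities, which is exactly what you propose for extracting the $\sqrt{q}$ rate. The Davis decomposition plus iterated passage from quadratic variation to predictable bracket is one standard route to the conditional square function term. One point to tighten if you were to write this out in full: the ``downward induction on $q\in 2\mathbb{N}$'' you invoke to close the self-referential loop needs a clean termination and a uniform control of the accumulated constants across the $\log q$ many halvings, so that the final bound really is $\sqrt{q}$ on the bracket term and $q$ on the maximal term rather than something worse. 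But since the paper treats the result as imported, none of this is required here.
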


Significantly, the conditional expectation $\Expect_{j-1}$ appears inside the norm, which then allows us to exploit the second-moment properties of $\vA_i$ via Fact~\ref{fact:sigmastar}. Otherwise, applying a crude martingale inequality, such as uniform smoothness, gives a looser bound in terms of $\Expect \labs{\norm{\vA_j}}^2$ instead of $\sigma_*^2(\vA)$.  The weaker bound does not properly reflect the randomness in $\vA_j$.

\begin{proof}[Proof of Theorem~\ref{thm:resolvent_con_good_lambda}]
As usual, we write the telescoping sum
\begin{align}
    \btr\abs{\vR'}^p - \btr\labs{\vR}^p&= \sum_{j = 1}^m (\btr \labs{\vR_j}^p - \btr\labs{\vR_{j-1}}^p ) \quad
\text{with}\quad
    \vR_j := \frac{1}{\vH_j - \omega +\ri \eta}\quad \text{and}\quad  \vH_j := \sum^j_{i=1} \vA'_i +  \sum^m_{j+1} \vA_i.
\end{align}
By construction, the updates compose a martingale difference sequence:
\begin{align}
    \Expect\L[ \btr\labs{\vR_{j}}^p-\btr\labs{\vR_{j-1}}^p \mid \vA'_{j-1},\vA_{j-1},\dots, \vA'_1,\vA_1\R] = 0 \label{eq:R'jRj}.
\end{align}
This point is evident because we can swap the random variables $\vA_j$ and $\vA_j'$ without changing the distribution.  To analyze the martingale, we expand the difference using the algebraic identity for a difference of powers:
\begin{align}
\btr\labs{\vR_{j}}^p-\btr\labs{\vR_{j-1}}^p & = \btr\L[ \vR_{j}^{\dagger}\vR_{j}\labs{\vR_{j}}^{p-2}  - \vR_{j-1}^{\dagger}\vR_{j-1}\labs{\vR_{j-1}}^{p-2}\R]\\
&= \btr\L[ (\vR^{\dagger}_{j} - \vR^{\dagger}_{j-1}) \cdot \vR_{j-1}\labs{\vR_{j-1}}^{p-2}\R] + \cdots\\
&= \btr\L[ \vR^{\dagger}_{j}(\vH_{j-1}-\vH_{j}) \vR_{j-1}^{\dagger}\cdot \vR_{j-1}\labs{\vR_{j-1}}^{p-2}\R] + \cdots\\
&= \btr\L[ \vR^{\dagger}_{j}(\vA_{j}-\vA'_{j}) \labs{\vR_{j-1}}^{p}\R] + \cdots\\
& = a_j - b_j + c_j. 
\end{align}
The second equality is a telescoping sum.  For the moment, we have suppressed other telescoping terms, such as $\btr\L[ \vR_j^{\dagger}(\vR_{j} - \vR_{j-1}) \cdot \labs{\vR_{j-1}}^{p-2}\R]$. The third equality uses the matrix identity $\vA^{-1} - \vB^{-1} = \vB^{-1} (\vB-\vA)\vA^{-1}$. The last line regroups into three types of terms
\begin{align}
    a_j&:= \btr\L[ \vR^{\dagger}_{j} \vA_{j} \labs{\vR_{j}}^{p}\R] +\btr\L[ \labs{\vR}_{j}^2 \vA_{j} \vR_{j}\labs{\vR_{j}}^{p-2}\R]+\cdots =:  \sum_{\substack{r+s = p+1,\\ r,s\ge 1} }\btr\L[ \vR^{(r)}_{j} \vA_{j} \vR_{j}^{(s)}\R]\\
    b_j&:= \btr\L[ \vR^{\dagger}_{j-1} \vA'_{j} \labs{\vR_{j-1}}^{p}\R]+ \btr\L[ \labs{\vR}_{j-1}^2 \vA_{j}' \vR_{j-1}\labs{\vR_{j-1}}^{p-2}\R]+\cdots =: \sum_{\substack{r+s = p+1,\\ r,s\ge 1} }\btr\L[ \vR^{(r)}_{j-1} \vA'_{j} \vR_{j-1}^{(s)}\R]\\ 
    c_j&:= \btr\L[ \vR^{\dagger}_{j} \vA_{j} (\labs{\vR_{j-1}}^{p} - \labs{\vR_{j}}^{p})\R]- \btr\L[ (\vR^{\dagger}_{j} - \vR^{\dagger}_{j-1}) \vA'_{j} \labs{\vR_{j-1}}^{p}\R]+\cdots\\
    &=:\sum_{\substack{r+s = p+1,\\ r,s\ge 1} }\btr\L[ \vR^{(r)}_{j} \vA_{j} (\vR_{j-1}^{(s)}-\vR_{j}^{(s)})\R] -  \btr\L[ (\vR^{(r)}_{j}-\vR^{(r)}_{j-1}) \vA'_{j} \vR_{j-1}^{(s)}\R]
\end{align}
using the identities for each $r$ and $s$
\begin{align}
    \btr\L[ \vR^{(r)}_{j} \vA_{j} \vR_{j-1}^{(s)}\R] &= \btr\L[ \vR^{(r)}_{j} \vA_{j} \vR_{j}^{(s)}\R] + \btr\L[ \vR^{(r)}_{j} \vA_{j} (\vR_{j-1}^{(s)}-\vR_{j}^{(s)})\R]\\
    \btr\L[ \vR^{(r)}_{j} \vA'_{j} \vR_{j-1}^{(s)}\R] &= \btr\L[ \vR^{(r)}_{j-1} \vA'_{j} \vR_{j-1}^{(s)}\R] + \btr\L[ (\vR^{(r)}_{j}-\vR^{(r)}_{j-1}) \vA'_{j} \vR_{j-1}^{(s)}\R].
\end{align}
To simplify the expression above, we defined
\begin{align}
    \vR_j^{(r)} := \begin{cases}
    \vR_j^{\dagger} \labs{\vR_j}^{r-1} \quad \text{or}\quad \vR_j\labs{\vR_j}^{r-1} &\text{if} \quad r \quad \text{is odd}\\
    \labs{\vR}_j^{r}  & \text{if} \quad r \quad \text{is even}.
    \end{cases}
\end{align}
When $r$ is odd, we overload the same symbol for the above two possible types of expressions, both of which fit into the same argument since we eventually take norms everywhere. Note that $\labs{\vR}_j^{r-1} \vR^{\dagger}_j = \vR^{\dagger}_j \labs{\vR}^{r-1}_j$ since they can be simultaneously diagonalized in the eigenbasis of $\vH_j$. 

It remains to study concentration for each sequence $a_j,b_j,c_j$. The main observation is that each sequence forms a martingale difference sequence
\begin{align}
    \Expect\L[ a_j,b_j, c_j \mid \vA'_{j-1},\vA_{j-1},\dots, \vA'_1,\vA_1\R] = 0.
\end{align}
Indeed, the sequence $a_j$ gives a martingale difference sequence since $\vA_j$ is independent of $\vR_{j}$ (similarly for the sequence $b_j$). Even though complicated, the sequence $c_j$ also gives a martingale difference sequence since $c_j = (\btr\labs{\vR_{j}}^p-\btr\labs{\vR_{j-1}}^p) - a_j-b_j$ and~\eqref{eq:R'jRj}.
Intuitively, the sequences $a_j,b_j$ enjoy concentration controlled by the second-moments, which fully exploits randomness in the random matrix $\vA_j$ via Fact~\ref{fact:sigmastar}. The sequence $c_j$ is higher-order in $\vA_j$ and $\vA_j'$, and we simply use crude inequalities.

By uniform smoothness, the martingale difference sequence $c_j$ satisfies
\begin{align}
\labs{\sum_{j=1}^m c_j}_q^2 &\le \sum_{j=1}^m  (q-1)\labs{c_j}_q^2 \label{eq:cjBound}.
\end{align}
We evaluate the individual $q$-norms
\begin{align}
\labs{ c_j}_q   &\le  \sum_{\substack{r+s = p+1,\\ r,s\ge 1} } \Bigg( \labs{ \btr\L[ \vR^{(r)}_{j} \vA_{j} (\vR_{j-1}^{(s)}-\vR_{j}^{(s)})\R]}_q + \labs{\btr\L[ (\vR^{(r)}_{j}-\vR^{(r)}_{j-1}) \vA'_{j} \vR_{j-1}^{(s)}\R]}_q \Bigg) \\
& \le  \sum_{\substack{r+s+t = p+2,\\ r,s,t\ge 1} } \Bigg( \labs{ \btr\L[ \vR^{(r)}_{j} \vA_{j} \vR^{(s)}_{j}(\vA'_j-\vA_j) \vR^{(t)}_{j-1}\R] }_q + \labs{\btr\L[ \vR^{(r)}_{j}(\vA_j-\vA_j') \vR^{(s)}_{j-1} \vA'_{j} \vR_{j-1}^{(t)}\R] }_q \Bigg) \\
& \le  \sum_{\substack{r+s+t = p+2,\\ r,s,t\ge 1} } \Bigg( \labs{ \norm{\vA_{j}}\norm{\vA'_j-\vA_{j}} \norm{\vR_j}_q^{r/q}  \norm{\vR_j}_q^{s/q} \norm{\vR_{j-1}}_q^{t/q}   }_q \\
&\qquad\qquad\qquad\qquad + \labs{\norm{\vA'_{j}}\norm{\vA'_j-\vA_{j}} \norm{\vR_j}_q^{r/q}  \norm{\vR_{j-1}}_q^{s/q} \norm{\vR_{j-1}}_q^{t/q} }_q \Bigg)\\
&\le 4 \cdot \frac{p(p+1)}{2} \frac{ \labs{\norm{\vA_{j}}}_{\infty}^2}{\eta^2}\labs{\btr\labs{\vR}^p}_q.
\end{align}	
The second inequality further expands the resolvent difference. The third inequality uses Holder's inequality for the trace. The last inequality uses Holder's inequality, counts the combination of $r,s,t$ by $\binom{p+1}{2}$, uses that $\vA$ and $\vA'$ have the same distributions, and uses that $\vR_j$, $\vR_{j-1}$ have the same distribution as $\vR$.

For the sequence $b_j$, we apply the scalar martingale inequality (Theorem~\ref{thm:good-lambda-moments}). We calculate the predictable quadratic variation
\begin{align}
    \left| \sum^m_{j=1} \BE_{j-1}b^*_jb_j \right|_{\frac{q}{2}} &\le \sum_{\substack{r+s = p+1,\\ r,s\ge 1} } \labs{\sum^m_{j=1}\btr\L[ \vR^{(r)}_{j-1} \vA'_{j} \vR_{j-1}^{(s)}\R]^2}_q\\
    &\le  p^2\left| \sum^m_{j=1} 
    \sup_{\norm{\ket{u}}=\norm{\ket{v}} = 1} \BE_{\vA_j} \labs{\bra{u}\vA_j\ket{v}}^2 \cdot \btr\L[\labs{\vR_{j-1}}^{p+1}\R]^2 \right|_{\frac{q}{2}}\\
    &\le \frac{p^2}{\eta^2} \left|\btr\L[\labs{\vR}^{p}\R]\right|_{q}^2 \cdot  \sum^m_{j=1} \sigma_{*}(\vA_j)^2
\end{align}
using Fact~\ref{fact:sigmastar} and the uniform bound on resolvent $\norm{\vR} \le \eta^{-1}$. 
We calculate the maximum by 
\begin{align}
    \left|\max_{1\le j\le m} |b_j| \right|_q &\le \L(\sum_{j=1}^m\left|b_j\right|_q^q\R)^{1/q} \le p \frac{\L(\sum_{j=1}^m \labs{\norm{\vA_{j}}}^q_{\infty}\R)^{1/q}}{\eta}\labs{\btr\labs{\vR}^p}_q.
\end{align}
The bound for $a_j$ is completely analogous.
Combine the above estimates to obtain the advertised result.
\end{proof}

\subsubsection{Expected moments}\label{sec:resolvent_expected_moment}
To make use of Theorem~\ref{thm:resolvent_con_good_lambda}, we also need to estimate the expected moments via the comparison argument.
\begin{thm}[Expected resolvent moments]\label{thm:expected_resolvent_pq}
For independent centered matrices $\vA_i$, suppose 
the moments match that of idealized matrices $\tvA_i$
\begin{align}
 \Expect  \vA_i= 0 \quad \text{and}\quad\Expect  \vA_i^{\otimes k}  = \Expect  \tvA_i^{\otimes k}  \quad \text{for each}\quad k = 1,\dots,  t \quad \text{and}\quad i = 1,\dots,  m.
\end{align} 
Then,
\begin{align}
    \labs{\labs{\btr \abs{\vR}^p }_q^{1/p}- \labs{\btr \abs{\tvR}^p }_q^{1/p} } 
    \lesssim \frac{1+m/pq}{\eta} \L(\frac{pqL}{\eta}\R)^{t+1} \quad \text{where}\quad L:= \max_i \L( \vertiii{\vA_i}_{3pq}, \vertiiismall{\tvA_i}_{3pq}\R).
\end{align}
The symbol $\lesssim$ suppresses constants depending only on $t$.
\end{thm}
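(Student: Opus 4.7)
The plan is to mirror the Lindeberg-exchange argument of Theorem~\ref{thm:universality_resolvent}, now applied to the scalar quantity $y_j := \BE(\btr|\vR_j|^p)^q$, which satisfies $|\btr|\vR_j|^p|_q^{1/p} = y_j^{1/(pq)}$. After defining the hybrid $\vR_j$ with the first $j$ summands swapped and telescoping $y_m - y_0 = \sum_j (y_j - y_{j-1})$, I would bound each per-step difference as follows. Condition on the unchanged part $\vR_- := (\sum_{i<j}\tvA_i + \sum_{i>j}\vA_i - \omega + \iunit\eta)^{-1}$ (independent of both $\tvA_j$ and $\vA_j$) and invoke Proposition~\ref{prop:expand_resolvent} to write $|\vR_j|^p = \sum_{k=0}^{3p}\vM_k$, where $\vM_k$ is a polynomial of degree $k$ in $\tvA_j$. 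Raising to the $q$-th power yields
\begin{align}
(\btr|\vR_j|^p)^q = \sum_{K=0}^{3pq} S_K, \qquad S_K = \sum_{k_1+\cdots+k_q=K}\prod_{i=1}^q\btr\vM_{k_i},
\end{align}
a scalar polynomial of total degree $K$ in $\tvA_j$. Matching moments up to order $t$ forces $\BE[S_K(\tvA_j)\mid\vR_-] = \BE[S_K(\vA_j)\mid\vR_-]$ for all $K \le t$, so only contributions with $K \ge t+1$ survive in $y_j - y_{j-1}$.

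For the surviving terms, I would apply H\"older twice: a pointwise (in $\vR_-$) matrix H\"older (Fact~\ref{fact:products_holder}) bound $\btr|\vM_{k_i}| \le (4p)^{k_i}\eta^{-k_i}(\btr|\vR_-|^p)\|\tvA_j\|^{k_i}$, using $\|\vR_-\| \le 1/\eta$, followed by the scalar $q$-fold H\"older $\BE\prod_i X_i \le \prod_i(\BE X_i^q)^{1/q}$ over the $q$ factors. Choosing the matrix H\"older indices so that the $\tvA_j$ contributions get replaced by $\vertiii{\tvA_j}_{3pq} \le L$ (rather than the pointwise $\|\tvA_j\|$, which would incur spurious dimensional $N^{1/(3pq)}$ factors), and summing the resulting geometric-type tail via the generating identity $\sum_K\binom{K+q-1}{q-1}u^K = (1-u)^{-q}$ with $u = \CO(pL/\eta)$, yields
\begin{align}
|y_j - y_{j-1}| \lesssim_t \BE(\btr|\vR_-|^p)^q \cdot \L(\frac{pqL}{\eta}\R)^{t+1}.
\end{align}

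To close the recursion, relate $\BE(\btr|\vR_-|^p)^q$ back to $y_{j-1}$ by the same Taylor expansion applied to $\vR_-$ versus $\vR_{j-1}$, which gives $\BE(\btr|\vR_-|^p)^q \le 2 y_{j-1}$ up to a negligible additive correction, and then use the a priori uniform bound $y_{j-1}^{1/(pq)} \le 1/\eta$ (since $\|\vR_{j-1}\| \le 1/\eta$) to rewrite $y_{j-1} = y_{j-1}^{(pq-1)/pq}\cdot y_{j-1}^{1/(pq)} \le y_{j-1}^{(pq-1)/pq}/\eta$. This produces a coupled scalar difference inequality
\begin{align}
|y_j - y_{j-1}| \le a_j\cdot y_{j-1}^{(pq-1)/pq} + b_j, \qquad a_j,\ b_j \lesssim_t (pqL)^{t+1}/\eta^{t+2},
\end{align}
to which the differential-inequality machinery of Lemmas~\ref{lem:diff_bounds_discrete}--\ref{lem:ansatz}, instantiated with exponent $k=1$, applies verbatim, yielding $|y_m^{1/(pq)} - y_0^{1/(pq)}| \lesssim_t (1 + m/(pq))\eta^{-1}(pqL/\eta)^{t+1}$, which is the claim.

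\textbf{The main obstacle} is the combinatorial bookkeeping. The count $\binom{K+q-1}{q-1}$ of compositions of $K$ into $q$ nonnegative parts, combined with the $(4p)^K$ factor from Proposition~\ref{prop:expand_resolvent}, would produce a spurious $2^{\CO(q)}$ prefactor if handled naively via $\binom{K+q-1}{q-1} \le 2^{K+q}$. The generating-function identity $\sum_K\binom{K+q-1}{q-1}u^K = (1-u)^{-q}$ is essential: it shows that the tail from $K = t+1$ scales like $(qu)^{t+1} = (pqL/\eta)^{t+1}$ with no $q$-dependent prefactor, matching the claimed bound. A secondary delicate point is the choice of matrix H\"older index $3pq$ for the $\tvA_j$ factors, which is what makes the Jensen step $(\BE\vertiii{\tvA_j}_{3pq}^{kq})^{1/q} \le \vertiii{\tvA_j}_{3pq}^k$ valid (since $3pq \ge kq$ for $k \le 3p$), and this is why the statement's tail parameter $L$ is indexed by $3pq$ rather than $3p$.
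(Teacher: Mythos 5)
Your overall route coincides with the paper's proof: a Lindeberg exchange on $x_j:=\BE(\btr\labs{\vR_j}^p)^q$, the Taylor expansion of Proposition~\ref{prop:expand_resolvent}, cancellation of all terms of total degree at most $t$ in the swapped summand by conditioning on the independent part $\vR_-$, the Jensen/H\"older step that produces the $3pq$-indexed parameter $L$, a self-bounding comparison of $\BE(\btr\labs{\vR_-}^p)^q$ with $x_{j-1}$, and finally Lemmas~\ref{lem:diff_bounds_discrete}--\ref{lem:ansatz}. The genuine gap is the size you assign to the additive term $b_j$. With $b_j\lesssim_t (pqL)^{t+1}/\eta^{t+2}$, Lemma~\ref{lem:ansatz} (with $pq$ in the role of $p$ and exponent $k=1$) contributes $(\sum_j b_j)^{1/(pq)}\approx\big(m\,(pqL/\eta)^{t+1}/\eta\big)^{1/(pq)}$, and precisely in the regime where the theorem has content (the right-hand side tiny) this $(pq)$-th root is of order one, far larger than $(1+m/pq)(pqL/\eta)^{t+1}/\eta$; so the machinery does not ``apply verbatim'' with your coefficients. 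The additive terms that genuinely arise---the top of the truncated Taylor series when $pqL/\eta$ is not small, and the correction in your self-bounding step $\BE(\btr\labs{\vR_-}^p)^q\le 2x_{j-1}+(\cdots)$, which is not ``negligible'' absent a lower bound on $x_{j-1}$---must be carried in the high-power form $b_j\sim (C_t\,pqL)^{(t+1)pq}/\eta^{(t+2)pq}$, i.e.\ with exponent proportional to $pq$, exactly as in the paper; only then does $(\sum_j b_j)^{1/(pq)}$ reproduce $(pqL)^{t+1}/\eta^{t+2}$ up to the benign factor $m^{1/(pq)}\le 1+m/(pq)$.

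A related point: your generating-function identity $\sum_K\binom{K+q-1}{q-1}u^K=(1-u)^{-q}$ and the claim that the tail from $K=t+1$ is $\lesssim_t(qu)^{t+1}$ require $qu$ bounded below $1$, i.e.\ $pqL/\eta\lesssim 1$; outside that regime your per-step estimate is not valid as stated. The plan is repairable either by carrying the paper's high-power $b_j$, or by the one-line observation that when $pqL/\eta\gtrsim 1$ the theorem is trivial because both $\labs{\btr\labs{\vR}^p}_q^{1/p}$ and $\labs{\btr\labs{\tvR}^p}_q^{1/p}$ are at most $1/\eta$ while the right-hand side is $\gtrsim 1/\eta$---but the write-up does neither. (A minor quibble: your pointwise bound $\btr\labs{\vM_k}\le(4p)^k\eta^{-k}(\btr\labs{\vR_-}^p)\norm{\tvA_j}^k$ is not the inequality of Proposition~\ref{prop:expand_resolvent}; it is fine if derived directly via $\btr\labs{\vR_-}^{p+k}\le\eta^{-k}\btr\labs{\vR_-}^p$, and your subsequent switch to $\vertiii{\tvA_j}_{3pq}$ by re-choosing H\"older indices is exactly the paper's estimate $\labs{\normp{\vA_j}{3p}}_{3pq}\le\vertiii{\vA_j}_{3pq}$, so that part is cosmetic.)
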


\begin{proof}
We begin with a telescoping sum
\begin{align}
    \btr\abs{\tvR}^p - \btr\labs{\vR}^p&= \sum_{j = 1}^m (\btr \labs{\vR_j}^p - \btr\labs{\vR_{j-1}}^p )=:d_j \quad
\text{with}\quad
    \vR_j := \frac{1}{\vH_j - \omega +\ri \eta}\quad \text{and}\quad  \vH_j := \sum^j_{i=1} \tvA_i +  \sum^m_{j+1} \vA_i.
\end{align}

We move on to control the moments of trace $\labs{\btr\labs{\vR}^p}_q^q $, which uses a similar argument as Theorem~\ref{thm:universality_moments} and Theorem~\ref{thm:universality_resolvent}. We present the calculation for $t=2$, but the general case is analogous. 

\textbf{Step I: Single update error.} We again start with the telescoping sum
\begin{align}
    \Expect \L(\btr\labs{\vR}^p\R)^q - \Expect \L( \btr\abs{\tvR}^p\R)^q &= \sum_{j = 0 }^m \Expect \L(\btr\labs{\vR}^p_j\R)^q - \Expect \L( \btr\abs{\tvR}^p_{j-1}\R)^q.
\end{align}
The Taylor expansions satisfy the bound from Fact~\ref{prop:expand_resolvent}:
\begin{align}
    \btr\labs{\vR_j}^p &= \btr\labs{\vR_-}^p+\sum_{k=1}^{3p} \btr\vM_k \quad \text{where}\quad \labs{\btr\vM_k} \le (4p)^k \normp{\vR_-}{p}^{p-k/3} \frac{\normp{\vA_j}{3p}^k}{\eta^{4k/3}}\\ 
    \L( \btr\labs{\vR_j}^p\R)^q &= (\btr\labs{\vR_-}^p)^q + \sum_{k=1}^{3qp} f_k\quad \text{where}\quad \labs{f_k}\le (4qp)^k (\btr\labs{\vR_-}^p)^{q} \L(\normp{\vR_-}{p}^{-1/3}\frac{\normp{\vA_j}{3p}}{\eta^{4/3}}\R)^k.
\end{align}
The first inequality is analogous with the calculation~\eqref{eq:resolvent_expansion}.
Recall that the $p$-norms are normalized $\norm{\vO}_p = (\btr\labs{\vO}^p )^{1/p}$. The second inequality proceeds with an additional $q^k$ factor.
We then bound the expected increments by canceling the first and second-order terms $f_1$ and $f_2$
\begin{align}
    \labs{\Expect \L(\btr \labs{\vR}^p_j\R)^q - \Expect \L( \btr\abs{\tvR}^p_{j-1}\R)^q} &\le \Expect \sum_{k=3}^{3qp} (4qp)^k \labs{\btr\labs{\vR_-}^p}^{q} \L(\normp{\vR_-}{p}^{-1/3}\frac{\normp{\vA_j}{3p}}{\eta^{4/3}}\R)^k+\L(\vA_j \mapsto \tvA_j\R)\\ 
    &\le \frac{1}{4}\labs{\btr \labs{\vR_-}^p }_q^{q-1/p} \L(\frac{8qp\vertiii{\vA_j}_{3pq}}{\eta^{4/3}}\R)^3
       +\frac{1}{4}\L(\frac{8qp\vertiii{\vA_j}_{3pq}}{\eta^{4/3}}\R)^{3qp}
    +\L(\vA_j \mapsto \tvA_j\R)\\
    &\le \frac{1}{2}\labs{\btr \labs{\vR_-}^p }_q^{q-1/p} \L(\frac{8qp\ell_j}{\eta^{4/3}}\R)^3
    +\frac{1}{2}\L(\frac{8qp\ell_j}{\eta^{4/3}}\R)^{3qp} \\
    &\le \labs{\btr \labs{\vR_j}^p }_q^{q-1/p} \L(\frac{8qp\ell_j}{\eta^{4/3}}\R)^3
    +  \frac{3}{2}\L(\frac{8qp\ell_j}{\eta^{4/3}}\R)^{3qp}.
\end{align}
The second inequality uses Holder's w.r.t the expectation, sums the geometric series, and uses the convenient estimate $\labs{\normp{\vA_j}{3p}}_{3pq} \le \vertiii{\vA_j}_{3pq}$. The third inequality sets $\ell_j:=\max(\vertiiismall{\tvA_j}_{3pq},\vertiii{\vA_j}_{3pq})$. The last inequality establishes a self-bounding argument by 
\begin{align}
    \Expect  \L(\btr \labs{\vR_-}^p \R)^q \le 2\L(\Expect \L( \btr \labs{\vR_j}^p \R)^q +
    \L(\frac{8qp\vertiii{\vA_j}_{3pq}}{\eta^{4/3}}\R)^{3qp} \R)
\end{align}
in a similar vein as~\eqref{eq:resolvent_difference}.

\textbf{Step II: Solving the recursion.} Again, we simplify the recursion by defining scalar variables (and also consider matching moments up to order $t$).
\begin{align}
    x_j:=\Expect (\btr \labs{\vR_j}^p)^q, \quad a_j :=\frac{1}{2^{t-2}} \frac{(2(t+2)qp\ell_j)^{t+1}}{\eta^{t+2}} \quad \text{and}\quad b_j := \frac{3}{2^{t-1}} \frac{(2(t+2)qp\ell_j)^{(t+1)qp}}{\eta^{(t+2)qp}}
\end{align}
and that $a_0 := b_0 :=0$. The updates~\eqref{eq:resolvent_difference} can then be written as a scalar recursion
\begin{align}
     \labs{ x_{j}-x_{j-1}} \le a_j \cdot x_{j-1}^{\frac{qp-1}{qp}} + b_j \quad \text{for each} \quad j = 1 ,\dots,  m 
\end{align}
which, in fact, takes the exact same form as Theorem~\ref{thm:universality_resolvent} up to $p \rightarrow qp$. Regardless, we write down the remaining calculation for completeness. The arguments as before (Lemma~\ref{lem:diff_bounds_discrete}, Lemma~\ref{lem:ansatz}) give the bound for the endpoints
\begin{align}
\labs{x_m^{1/pq} - x_0^{1/pq}} &\le \frac{1}{qp} \sum_{j=1}^m a_j+ (\sum_{j=1}^m b_j)^{\frac{1}{qp}}\\
&\le 8m\frac{(qp)^t ((t+2)L)^{t+1}}{\eta^{t+2}} + (\frac{3}{2})^{1/qp}(1+m/qp) \frac{(2(t+2)qpL)^{t+1}}{2^{(t-1)/qp}\eta^{t+2}} \\
&\le (2^{t+2}+10m/qp) \frac{((t+2) qp L)^{t+1}}{\eta^{t+2}}.
\end{align}
The second inequality uses the uniform bound $\ell_j = \max(\vertiiismall{\tvA_j}_{3p},\vertiii{\vA_j}_{3p}) \le L$. The last inequality drops the denominator $2^{(t-1)/qp} \ge 1$ and uses Young's inequality for $m^{1/qp}\le 1+m/qp$. This is the second advertised result.
\end{proof}

\section{Properties of the GUE}\label{sec:GUE}
In this section, we instantiate the properties of GUE matrices in the nonasymptotic regime. For our purposes, most of the quantities highly concentrate (up $\CO(1/\poly(N))$ deviations) and can be practically regarded as constants.



\begin{defn}[GUE ensemble]
The N-by-N Gaussian Unitary Ensemble is a family of complex Hermitian random matrices specified by
\begin{align}
\vH_{ij} &= \frac{g_{ij}+\ri g'_{ij}}{\sqrt{2N}} \quad \text{if $j>i$}\\
\vH_{ii} &= \frac{g_{ii}}{\sqrt{N}}.    
\end{align}
where $g_{ii}, g_{ij}, g'_{ij}$ are independent standard Gaussians.
\end{defn}

\subsection{$p$-th moments} 

First, we given explicit bounds for the moments of the GUE ensemble.
This kind of result is a consequence of classical explicit formulas for the moments of the GUE;
for example, see~\cite[Lem.~3.3.1]{AGZ10:Introduction-Random}.  It also follows from more recent work on nonasymptotic random matrix theory, such as the paper~\cite{Tro18:Second-Order-Matrix}.
For our purposes, it is convenient for us to derive the statement as a consequence of the main results~\cite[Theorem 2.7]{Bandeira2021MatrixCI}, applied to a GUE matrix.
%


\begin{thm}[Moment bounds]\label{fact:GUE_pth_moment}
 For even $p$ and an random GUE matrix with dimension $N$,
\begin{align}
   \vertiii{\vH_{GUE}}_{p} \le 2 \left(1+ \frac{(p/2)^{3/4}}{\sqrt{N}} \right)
\end{align}
\end{thm}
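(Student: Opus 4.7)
The plan is to realize the GUE as a Gaussian matrix series and invoke the nonasymptotic free-probability comparison of Bandeira--Boedihardjo--van Handel~\cite{Bandeira2021MatrixCI}, which compares Schatten moments of a Gaussian sum against moments of its free probabilistic surrogate.

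First, I would expand the GUE into the basis of Hermitian ``matrix units,'' writing
\[
\vH_{GUE} = \sum_{1 \le i \le j \le N} g_{ij}^{(1)} \vE_{ij}^{(1)} + g_{ij}^{(2)} \vE_{ij}^{(2)},
\]
where the $\vE_{ij}^{(k)}$ are the (appropriately normalized) symmetric and antisymmetric matrix units and $g_{ij}^{(k)}$ are independent standard Gaussians. A direct computation gives the covariance parameters needed to apply~\cite[Thm.~2.7]{Bandeira2021MatrixCI}: the operator-norm variance parameter satisfies $\sigma(\vH_{GUE})^2 = \norm{\Expect \vH_{GUE}^2} = 1$, and the ``free'' parameter $\sigma_*(\vH_{GUE})$ (the norm of the free semicircular with matching covariance) likewise equals $1$. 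The corresponding free surrogate is a single free semicircular element $s$ with radius $2$, whence $\vertiii{s}_p \le 2$ for every $p \ge 1$ (because $|s| \le 2$ in the von Neumann algebra).

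Next I would apply the theorem of~\cite{Bandeira2021MatrixCI} in its $p$-norm form, which yields a bound of the shape
\[
\vertiii{\vH_{GUE}}_p \;\le\; \vertiii{s}_p \;+\; C \cdot \sigma_*(\vH_{GUE}) \cdot \frac{p^{3/4}}{\sqrt{N}},
\]
for some explicit constant $C$. Plugging in $\vertiii{s}_p \le 2$ and the variance parameters above, and tracking the constant carefully (it should come out to match the normalization $(p/2)^{3/4}/\sqrt{N}$ because the natural parameter in the theorem is half the support radius), yields the claimed inequality $\vertiii{\vH_{GUE}}_p \le 2(1 + (p/2)^{3/4}/\sqrt{N})$.

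The main (and essentially only) obstacle is bookkeeping, namely making sure that the normalizations in~\cite[Thm.~2.7]{Bandeira2021MatrixCI}---which states the result in terms of the parameters $\sigma$, $\sigma_*$, and $R = \max_i \norm{A_i}$---align with the standardization used here so that the constant in front of $p^{3/4}/\sqrt{N}$ collapses to $2 \cdot (1/2)^{3/4}$. An alternative, self-contained fallback if any prefactor is off would be to use the exact Harer--Zagier formula
\[
\Expect \btr \vH_{GUE}^{p} \;=\; \sum_{g \ge 0} \varepsilon_g(p)\, N^{-2g},
\]
with $\varepsilon_0(p) = C_{p/2} \le 4^{p/2}$ the leading Catalan contribution and the higher-genus corrections bounded by classical estimates on $\varepsilon_g(p)$; taking the $p$-th root and expanding in $1/N$ recovers the same bound. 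I would keep this as a backup since the BBvH invocation is substantially shorter and more transparent.
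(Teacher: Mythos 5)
Your proposal follows essentially the same route as the paper: write the GUE as a Gaussian series in (Hermitian) matrix units, invoke~\cite[Thm.~2.7]{Bandeira2021MatrixCI} to compare the Schatten $p$-norm with the free semicircular surrogate, bound the norm of the free model by $2$, and absorb an error of size $2(p/2)^{3/4}\sigma_*$. The only caveat is the bookkeeping you already flag: the weak variance parameter for the GUE is $\sigma_* = N^{-1/2}$ (not $1$), so the factor $\sqrt{N}$ enters once through $\sigma_*$ rather than as an additional separate factor, after which the stated bound follows exactly as in the paper.
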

The number $2$ is exactly the maximal eigenvalue of the semicircular distribution with unit variance.

\begin{proof}
We can compare the following random matrices
    \begin{align}
        \vH_{GUE} = \vX & = \sum_{j\ge i} \L(g_{ij} \frac{\ket{i}\bra{j}+\ket{j}\bra{i}}{\sqrt{2N}} + g'_{ij} \frac{\ri(\ket{i}\bra{j}-\ket{j}\bra{i})}{\sqrt{2N}} \R) \sim \sum_k g_k \vA_k \\
        \vX_{\mathrm{free}} &:= \sum_{k} \vA_k \otimes s_k \label{eq:XXfree}
    \end{align}
    where $(g_{ij}, g_{ij}', g_k)$ are independent standard normal variables and $(s_k)$ composes a free semicircular family. The result~\cite[Theorem 2.7]{Bandeira2021MatrixCI} states that
    \begin{align}
        \labs{ \vertiii{\vX}_{p} - \norm{\vX_{\mathrm{free}}}_{p}} &\le 2(p/2)^{3/4} \sqrt{\sup_{\tr \labs{\vM}^2 =1} \sum_{k}\labs{\tr[\vA_{k}\vM]}^2}\\
        &= 2(p/2)^{3/4} \sqrt{\frac{1}{N}\sup_{\tr \labs{\vM}^2 =1} \sum_{i,j} \labs{M_{ij}}^2} = \frac{2^{1/4}p^{3/4}}{\sqrt{N}}.
    \end{align}
    Recall the unconditional bound $\norm{\vX_{\mathrm{free}}} \le 2 \norm{\BE \vX^2} =2$ to conclude the proof~\cite[p.208]{pisier_2003}.
\end{proof}

\subsection{Resolvent moments}
We calculate the resolvent moments for GUE matrices. Recall 
\begin{align}
    \vR: = \frac{1}{\vH - \omega +\ri \eta} 
\end{align}

\begin{fact}[Consequence of {\cite[Corollary 11.4]{Erds2017ADA}}]\label{lem:GUE_cdf}
For a random instance $\vH_{GUE}$ of the GUE with dimension $N$, define the empirical spectral density
\begin{align}
    \rho(E):= \frac{1}{N} \sum_{i=1}^N \delta(E- \lambda_i(\vH_{GUE}) ).
\end{align}
    Then, there is an absolute constant $c$ such that we have 
\begin{align}
    \sup_{ E \in \mathbb{R} } \labs{\int_{-\infty}^E(\rho(E')-\rho_{sc}(E')) \idiff{E'}} \le \frac{c}{\sqrt{N}}\label{eq:cdf_error}
\end{align}
with probability at least $1-\frac{1}{N}$.
\end{fact}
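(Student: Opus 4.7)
The plan is to reformulate the left-hand side as the Kolmogorov distance between the empirical spectral CDF and the semicircular CDF, and then obtain the bound as an immediate consequence of the GUE eigenvalue rigidity estimate cited as Corollary~11.4 of~\cite{Erds2017ADA}. Writing $F_N(E) := \int_{-\infty}^E \rho(E')\idiff{E'} = N^{-1}\labs{\{i : \lambda_i(\vH_{GUE}) \le E\}}$ and $F_{sc}(E) := \int_{-\infty}^E \rho_{sc}(E')\idiff{E'}$, the quantity to control is exactly $\sup_E \labs{F_N(E) - F_{sc}(E)}$, the Kolmogorov distance between the two distribution functions.

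Next I would invoke rigidity in the form stated in~\cite{Erds2017ADA}: for any fixed $D \ge 1$, with probability at least $1 - N^{-D}$, every eigenvalue satisfies $\labs{\lambda_i - \gamma_i} \lesssim (\log N)^C N^{-2/3}[\min(i, N+1-i)]^{-1/3}$, where the classical location $\gamma_i$ is defined by $F_{sc}(\gamma_i) = i/N$. Choosing $D = 1$ delivers the claimed probability $1 - 1/N$ in the statement.

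The last step is the routine conversion from rigidity to Kolmogorov distance. Fixing $E$, I would bound $\labs{F_N(E) - F_{sc}(E)}$ by $N^{-1}$ times the number of indices whose rigidity window straddles $E$, plus the obvious $O(N^{-1})$ error from approximating $F_{sc}(E)$ by $k/N$ for the largest $k$ with $\gamma_k \le E$. In the bulk, where $\rho_{sc} = O(1)$, this count is at most $(\log N)^{C'} N^{1/3}$, yielding an overall rate $O((\log N)^{C'} N^{-2/3})$, which is considerably stronger than the advertised $c/\sqrt{N}$; absorbing the logarithm is harmless since the statement only claims $1/\sqrt{N}$.

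The main subtlety is the edge region $\labs{E} \to 2$, where the classical-location map $i \mapsto \gamma_i$ becomes singular. I would split into cases: for $E$ outside $[-2 - N^{-2/3+\varepsilon}, 2 + N^{-2/3+\varepsilon}]$, rigidity confines all eigenvalues inside this window, so $F_N(E) - F_{sc}(E)$ is controlled by the semicircle tail $F_{sc}(2 + N^{-2/3+\varepsilon}) - F_{sc}(2) = O(N^{-1})$; for $E$ inside the window, the edge factor $[\min(i, N+1-i)]^{-1/3}$ in the rigidity bound precisely compensates for the enhanced density of classical locations near the edge, still yielding the $N^{-2/3}$ rate. This edge bookkeeping is the only nontrivial part of the argument once rigidity is in hand.
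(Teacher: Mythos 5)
Your proposal is correct, but it takes a different route from the paper. The paper's own proof is essentially a one-liner: the cited Corollary~11.4 of~\cite{Erds2017ADA} is already a uniform counting-function estimate, $\sup_{|E|\le 10}\labs{\mathfrak{n}(E)-n_{sc}(E)}\le N^{-1+\epsilon}$ with probability at least $1-N^{-D}$, so the paper just sets $\epsilon=1/2$, $D=1$ and extends the supremum from $\labs{E}\le 10$ to all of $\mathbb{R}$ using that $\rho_{sc}$ is supported on $[-2,2]$ and that the CDF discrepancy can only decrease beyond that range. You instead start from the eigenvalue rigidity estimate $\labs{\lambda_i-\gamma_i}\lesssim (\log N)^C N^{-2/3}\min(i,N+1-i)^{-1/3}$ and convert it by hand into a Kolmogorov-distance bound, with separate bulk and edge bookkeeping. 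This is a valid argument and even yields the stronger rate $O(N^{-2/3+o(1)})$, comfortably inside $c/\sqrt{N}$; but it is more work than the paper needs, and in the cited reference rigidity is itself deduced from the counting-function estimate, so your route derives the upstream statement from a downstream one rather than quoting it directly. Two small points to tighten: on the rigidity event, for $\labs{E}>2+N^{-2/3+\varepsilon}$ both distribution functions are identically $0$ or $1$, so the discrepancy vanishes exactly (the ``semicircle tail'' beyond $2$ you invoke is zero, not merely $O(N^{-1})$); and since rigidity is stated only for $N\ge N_0(D,\varepsilon)$, you should note that the claim is trivial for bounded $N$ because the Kolmogorov distance is at most $1$, so the absolute constant $c$ absorbs small $N$ (the same remark applies to the paper's proof).
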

\begin{proof}
In the setting of~\cite[Corollary 11.4]{Erds2017ADA}, set $D=1$ and $\epsilon = 1/2$. The range $\labs{E}\le 10$ extends to infinity since the semicircle density $\rho_{sc}$ is supported on $[-2,2]$ and the error must be decreasing for $\labs{E} \ge 2$.
\end{proof}

\begin{cor}[Resolvent moments for the GUE]\label{cor:resolvent_moment_GUE}
There is an absolute constant $c$ such that for each $\omega, \eta$, we have
\begin{align}
    \labs{\BE \btr \labs{ \vR_{\omega,\eta}(\vH_{GUE})}^p - S_{\omega,\eta,p}}\le \frac{c}{\eta^p\sqrt{N}} \quad \text{where}\quad S_{\omega,\eta,p} := \int_{-2}^2 \frac{\sqrt{4-x^2}}{2\pi} \frac{1}{\labs{x-\omega + \ri\eta}^p} \idiff{x}. 
\end{align}
\end{cor}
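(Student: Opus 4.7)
\textbf{Proof plan for Corollary~\ref{cor:resolvent_moment_GUE}.}

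The plan is to write both quantities as integrals of the same smooth function $g(E)\coloneqq \labs{E-\omega+\ri\eta}^{-p}$ against two different probability measures (the empirical spectral measure and the semicircle), then convert the difference into an integration-by-parts formula involving the CDF gap controlled by Fact~\ref{lem:GUE_cdf}.

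First, denote by $\rho$ the empirical spectral density of $\vH_{GUE}$ and let $F(E)\coloneqq\int_{-\infty}^{E}\rho$ and $F_{sc}(E)\coloneqq\int_{-\infty}^{E}\rho_{sc}$ be the associated CDFs. Both $F(\pm\infty)$ and $F_{sc}(\pm\infty)$ agree at $-\infty$ (both are $0$) and at $+\infty$ (both are $1$). The spectral theorem gives
\[
\btr\labs{\vR_{\omega,\eta}(\vH_{GUE})}^p = \int_{\BR} g(E)\,\rho(E)\idiff{E},
\qquad
S_{\omega,\eta,p} = \int_{\BR} g(E)\,\rho_{sc}(E)\idiff{E}.
\]
Since $g$ is smooth and decays at infinity, integration by parts (justified because $\labs{F-F_{sc}}$ is bounded and $g,g'\to 0$ at $\pm\infty$) yields
\[
\btr\labs{\vR}^p - S_{\omega,\eta,p} = -\int_{\BR}\bigl(F(E)-F_{sc}(E)\bigr)\, g'(E)\idiff{E}.
\]

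Second, I would apply the sup-norm bound on $F-F_{sc}$ supplied by Fact~\ref{lem:GUE_cdf}. On the good event (probability at least $1-1/N$), we have $\sup_E \labs{F(E)-F_{sc}(E)}\le c/\sqrt{N}$, so
\[
\Bigl|\btr\labs{\vR}^p - S_{\omega,\eta,p}\Bigr| \;\le\; \frac{c}{\sqrt{N}} \int_{\BR}\labs{g'(E)}\idiff{E} \;=\; \frac{c}{\sqrt{N}} \cdot \operatorname{TV}(g).
\]
The total variation is easy: $g(E)=((E-\omega)^2+\eta^2)^{-p/2}$ is unimodal, attaining its maximum $\eta^{-p}$ at $E=\omega$ and decreasing monotonically to $0$ on either side, so $\operatorname{TV}(g)=2\eta^{-p}$. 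Thus on the good event the deviation is at most $2c/(\eta^p\sqrt{N})$.

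Third, I would handle the low-probability bad event using the deterministic bound $\norm{\vR}\le 1/\eta$, which forces $\btr\labs{\vR}^p\le \eta^{-p}$; similarly $S_{\omega,\eta,p}\le \eta^{-p}$. Hence the bad event contributes at most $(2/\eta^p)\cdot(1/N)$ to $\labs{\BE\btr\labs{\vR}^p - S_{\omega,\eta,p}}$, and combining with the good event bound and taking expectation yields the advertised estimate $\labs{\BE\btr\labs{\vR}^p - S_{\omega,\eta,p}}\lesssim 1/(\eta^p\sqrt{N})$, with a new absolute constant. There is no real obstacle here; the only mild care needed is in justifying the boundary terms in the integration by parts (which vanish) and in packaging the two regimes cleanly.
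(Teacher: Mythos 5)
Your proposal is correct and follows essentially the same route as the paper: integrate by parts to convert the difference into the CDF gap, apply the sup-norm bound of Fact~\ref{lem:GUE_cdf} on the good event together with $\int\labs{g'} = 2\eta^{-p}$ (unimodality of the filter), and absorb the probability-$1/N$ bad event via the uniform bound $g \le \eta^{-p}$. No gaps to report.
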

\begin{proof}

Let $f(E) = \labs{E-\omega +\ri \eta}^{-p}$, then
\begin{align}
        \labs{\BE \btr \labs{ \vR_{GUE}}^p - S^p_p}&\le \BE \labs{ \int_{-\infty}^{\infty}f(E)(\rho_{GUE}(E)- \rho_{sc}(E)) \idiff{E} }\\
        &= \BE \labs{ \int_{-\infty}^{\infty} f'(E) \int_{-\infty}^E(\rho_{GUE}(E')- \rho_{sc}(E')) \idiff{E'} \idiff{E} }\\
    &\le \left(1-\frac{1}{N} \right)\cdot\frac{c}{\sqrt{N}}\left( \int_{-\infty}^{\infty} \labs{f'(E)} \diff{E} \right) + \frac{1}{N}\cdot 2\max_{E \in \mathbb{R}} \labs{f(E)}\\
    &\le \frac{2c}{\eta^p\sqrt{N}}+\frac{2}{\eta^pN}.
\end{align}
The first inequality uses integration by parts and the boundary value $\int_{-\infty}^{\infty}(\rho_{GUE}(E')- \rho_{sc}(E')) \diff{E'} =0$. The third line uses Fact~\ref{lem:GUE_cdf} to handle the high probability event~\eqref{eq:cdf_error}.
To reach the last line, we compute the integral using the fact that the resolvent power $f(E) = \labs{E-\omega +\ri \eta}^{-p}$ increasing for $E<\omega$ and decreasing for $E>\omega$, so the integral equals $2f(\omega) = 2 \eta^{-p}$.  To bound the maximum, note that $ 0\le f(E) \le \eta^{-p}$.  Finally, increase the constant $c$ as needed to combine the terms.
\end{proof}

\begin{prop}[GUE: Concentration for resolvent moments]\label{prop:Gaussian_resolvent_concentration}
For a matrix with jointly Gaussian entries
\begin{align}
    \vH:= \sum_i g_i \vA_i \quad \text{where} \quad g_i\sim \CN(0,1)
\end{align}
and even $p$, the spectral density (probed by resolvent powers) concentrates
\begin{align}
\labs{\btr\labs{\vR}^p - \Expect \btr\labs{\vR}^p}_{q} \lesssim \sqrt{q} \frac{p }{\eta^{p+1}} \sigma_* \quad \text{where}\quad \sigma_*:= \sqrt{\sup_{\norm{\bm{w}} = \norm{\bm{v}} = 1} \sum_i \labs{\bra{\bm{v}}\vA_i\ket{\bm{w}}}^2}.    
\end{align}
For a GUE matrix, we have $\sigma_* = N^{-1/2}$.
\end{prop}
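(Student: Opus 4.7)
The plan is to recognize this statement as a \emph{Gaussian Lipschitz concentration} inequality. Viewing $F(\vec g) := \btr\labs{\vR}^p$ as a smooth function of the standard Gaussian vector $\vec g = (g_i)$, the classical concentration-of-measure result on Gauss space yields $\labs{F - \BE F}_q \lesssim L\sqrt{q}$ with an absolute constant whenever $\sup_{\vec g}\|\nabla F(\vec g)\|_2 \leq L$ uniformly. The entire task therefore reduces to establishing the uniform gradient bound $L \leq p\,\sigma_*/\eta^{p+1}$.

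First I would compute $\partial_{g_i}F$ directly. Since each $\vA_i$ is Hermitian, both $\vR$ and $\vR^\dagger$ are functions of $\vH$ and so commute with one another; for even $p$ this lets me write $F = \btr(\vR\vR^\dagger)^{p/2}$. Combining $\partial_{g_i}\vR = -\vR\vA_i\vR$ and $\partial_{g_i}\vR^\dagger = -\vR^\dagger\vA_i\vR^\dagger$ with the cyclic trace and the commutativity just noted, a short calculation should yield
\begin{align}
\partial_{g_i}F = -\tfrac{p}{2}\,\btr[\vA_i\,\vX] \quad\text{with}\quad \vX := \labs{\vR}^p(\vR + \vR^\dagger).
\end{align}

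Next I would control $\sum_i\labs{\partial_{g_i}F}^2$ pointwise in $\vec g$. The key step is a pointwise variant of Fact~\ref{fact:sigmastar}: writing the SVD $\vX = \sum_j s_j \ket{v_j}\bra{u_j}$ and applying Cauchy--Schwarz with weights $s_j$ pulls out a factor of $\sum_j s_j = \tr\labs{\vX}$, so
\begin{align}
\sum_i \labs{\tr[\vA_i \vX]}^2
    \le (\tr\labs{\vX})\sum_j s_j \sum_i \labs{\bra{u_j}\vA_i\ket{v_j}}^2
    \le \sigma_*^2\,(\tr\labs{\vX})^2,
\end{align}
where I have used that the supremum defining $\sigma_*$ here is deterministic. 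Since $\labs{\vR}^p$ is positive and commutes with $\vR+\vR^\dagger$, it follows that $\btr\labs{\vX}\le\|\vR+\vR^\dagger\|\cdot\btr\labs{\vR}^p\le 2/\eta^{p+1}$. Assembling, $\|\nabla F\|_2^2 \le p^2\sigma_*^2/\eta^{2p+2}$ uniformly in $\vec g$, which is exactly the Lipschitz bound needed, and Gaussian Lipschitz concentration then closes out the proof. For the GUE value $\sigma_* = N^{-1/2}$, a direct sum over the diagonal and real/imaginary off-diagonal basis matrices gives $\sum_k\labs{\bra{\bm u}\vA_k\ket{\bm v}}^2 = N^{-1}\sum_{i,j}\labs{u_i^*v_j}^2 = N^{-1}$ for any unit vectors $\bm u,\bm v$.

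The main subtlety I anticipate is that Gaussian Lipschitz concentration demands an almost-everywhere gradient bound, whereas Fact~\ref{fact:sigmastar} is stated in expectation. This is resolved essentially for free here because the $\sigma_*$ in this proposition depends only on the deterministic matrices $\{\vA_i\}$, so the SVD-plus-Cauchy--Schwarz step produces the desired pointwise inequality with the same constant.
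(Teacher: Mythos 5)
Your proposal is correct and follows essentially the same route as the paper: both reduce the claim to Gaussian Lipschitz concentration with Lipschitz constant $p\,\sigma_*/\eta^{p+1}$, the paper obtaining it from a telescoping/resolvent-identity bound on $\labs{f(\vx)-f(\vy)}$ combined with $\lVert\sum_i(x_i-y_i)\vA_i\rVert\le\sigma_*\lVert\vx-\vy\rVert_{\ell_2}$, while you compute the gradient explicitly and apply the pointwise SVD--Cauchy--Schwarz version of Fact~\ref{fact:sigmastar}. Your gradient formula, the bound $\btr\labs{\vX}\le 2/\eta^{p+1}$, and the GUE evaluation $\sigma_*=N^{-1/2}$ all check out, so no gaps.
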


\begin{proof}[Proof of Proposition~\ref{prop:Gaussian_resolvent_concentration}]
This is a standard application of Gaussian concentration inequalities. We cannot find the particular function of interest elsewhere, so we include a derivation adapted from~\cite{Tatiana_22_universality}. Some of the estimates could be loose in general but suffice for our purposes.
Consider the function
\begin{align}
    f(\vx) : =\btr\labs{\sum_{i} x_i \vA_i - \omega + \ri \eta}^{-p}.
\end{align}
We bound the Lipschitz constant
\begin{align}
\labs{f(\vx)-f(\vy)} &= \labs{\btr\labs{\vR_{\vx}}^p-\btr\labs{\vR_{\vy}}^p}\\
& = \labs{\btr\L[ \vR_{\vx}^{\dagger}\vR_{\vx}\labs{\vR_{\vx}}^{p-2}  - \vR_{\vy}^{\dagger}\vR_{\vx}\labs{\vR_{\vx}}^{p-2}\R]+ \cdots}\\
&\le \labs{\btr\L[ \vR^{\dagger}_{\vx}(\vH_{\vy}-\vH_{\vx}) \vR_{\vy}^{\dagger}\vR^x\labs{\vR_{\vx}}^{p-2}\R] + \cdots} \le \frac{p }{\eta^{p+1}} \norm{\vH_{\vx} - \vH_{\vy}}. 
\end{align}
The second equality is a telescoping sum. The first inequality uses the identity $\vA^{-1} - \vB^{-1} = \vB^{-1} (\vB-\vA)\vA^{-1}$. The last inequality uses the triangle inequality, the uniform bound that $\norm{\vR} \le \eta^{-1}$, and the coarse bound $\normp{\vA}{1}\le \norm{\vA}$, which holds because we are using normalized Schatten norms.  Last, we relate the operator norm to the Euclidean norm of the coefficients
\begin{align}
    \norm{\vH_{\vx} - \vH_{\vy}} = \lnorm{\sum_i (x_i-y_i)\vA_i}
    & = \sup_{\bm{w},\bm{v}} \sum_i \bra{\bm{v}} (x_i-y_i)\vA_i\ket{\bm{w}}\\
    &\le \left( \sup_{\bm{w},\bm{v}} \sum_i \labs{\bra{\bm{v}}\vA_i\ket{\bm{w}}}^2 \right)^{1/2} \cdot \normp{\vx-\vy}{\ell_2} =\sigma_*\cdot \normp{\vx-\vy}{\ell_2}.
\end{align}
The inequality is Cauchy--Schwarz. Recall that an $L$-Lipschitz function of a
standard Gaussian vector is $L^2$-subgaussian~\cite[Theorem 5.6]{concentration_inequalities_13} to conclude the proof.
\end{proof}

The preceding concentration argument also allows us to bound  
\begin{align}
    \labs{\btr\abs{\tvR}^p}_q &\le \Expect \btr\abs{\tvR}^p+\labs{\btr\abs{\tvR}^p-\Expect \btr\abs{\tvR}^p}_q.
\end{align} 
GUE matrices have strong concentration properties, so the right-hand side of the previous display is always dominated by the expectation term in our applications.

\section{Properties of the Pauli string ensemble }\label{sec:apply_to_pauli}
In the section, we compare properties of random Pauli string sums with the Gaussian Unitary Ensemble (GUE), which we knew a lot about.
Recall 
\begin{align}
\vH_{PS} &= \sum_{j=1}^{m} \vA_j \quad \text{where}\quad \vA_j\stackrel{\text{i.i.d.}}{\sim} \frac{1}{\sqrt{m}} \cdot \pm \{\vI, \vsigma^x,\vsigma^y,\vsigma^z \}^{\otimes n}.
\end{align}
%
We will compare the Pauli string ensemble with the GUE Hamiltonian 
\begin{align}
    \vH_{GUE} = \sum_{j=1}^{m} \tvA_j \quad \text{where}\quad \tvA_j\stackrel{\text{i.i.d.}}{\sim}\frac{1}{\sqrt{m}} \vH_{GUE}.
\end{align}
Comparing the $p$th moments controls the spectral norm, while the resolvent moments control the spectral density.

\subsection{Moments and the spectral norm}\label{sec:pauli_moments}

We use the $p$th moments to bound the spectral norm.  

\pnormsPauli*

To obtain a smaller multiplicative error $\epsilon$, note that the parameters $p$ and $m$ only need to increase at a polynomial rate (as a function of the number $n$ of sites and the parameter $\epsilon$).

\begin{proof}[Proof of Theorem~\ref{thm:pnorms_pauli}]
The $p$-norm estimates use Theorem~\ref{thm:universality_moments}. To obtain the advertised tail bounds, recall the $p$-norm for GUE matrices
\begin{align}
    \vertiii{\vH_{GUE}}_{p} \le 2 \cdot \left( 1+ \frac{(p/2)^{3/4}}{\sqrt{N}} \right).
\end{align}
By Markov's inequality,
\begin{align}
    \Pr\L( \norm{\vH_{PS}} \ge t \R) \le \frac{\Expect \norm{\vH_{PS}}^{p}}{t^p} &\le N\frac{\vertiii{\vH_{PS}}^p_{p}}{t^p}\\
    &\le \L(\frac{\e^{\log(N)/p}}{t} \L(2  + \CO( \frac{p^{3/4}}{\sqrt{N}}+\frac{p^{3/4}}{m^{1/4}} + \frac{p}{ \sqrt{m}} )\R)\R)^{p}\\
    &\le (\frac{1+\epsilon/2 }{1+\epsilon} )^p \le \e^{-c_1 \log(N)/4 } \quad \text{, setting} \quad t= 2(1+\epsilon).
\end{align}
The second inequality converts the operator norm to the $p$-th moments by $\norm{\vH}^p \le \tr\vH^p$ and the third inequality keeps the leading order terms via the notation $\CO(\cdot)$. The third line uses $m \le N^2$ and chooses appropriate parameters $p  = c \log(N)/\epsilon$ and $m = c_1 \frac{p^4}{\log(N)}$ so that the numerator is bounded by $2 (1+\epsilon/2)$. The last inequality uses the elementary estimate $\frac{1+\epsilon/2 }{1+\epsilon} \le \e^{-\epsilon/4}$ for $\epsilon \le 1/2$. Note $N = 2^n$ to obtain the advertised result.

\end{proof}

\subsection{Abundance of low-energy states and success of phase estimation}\label{sec:abundance}

\begin{figure}[t]
    \centering
    \includegraphics[width=0.6\textwidth]{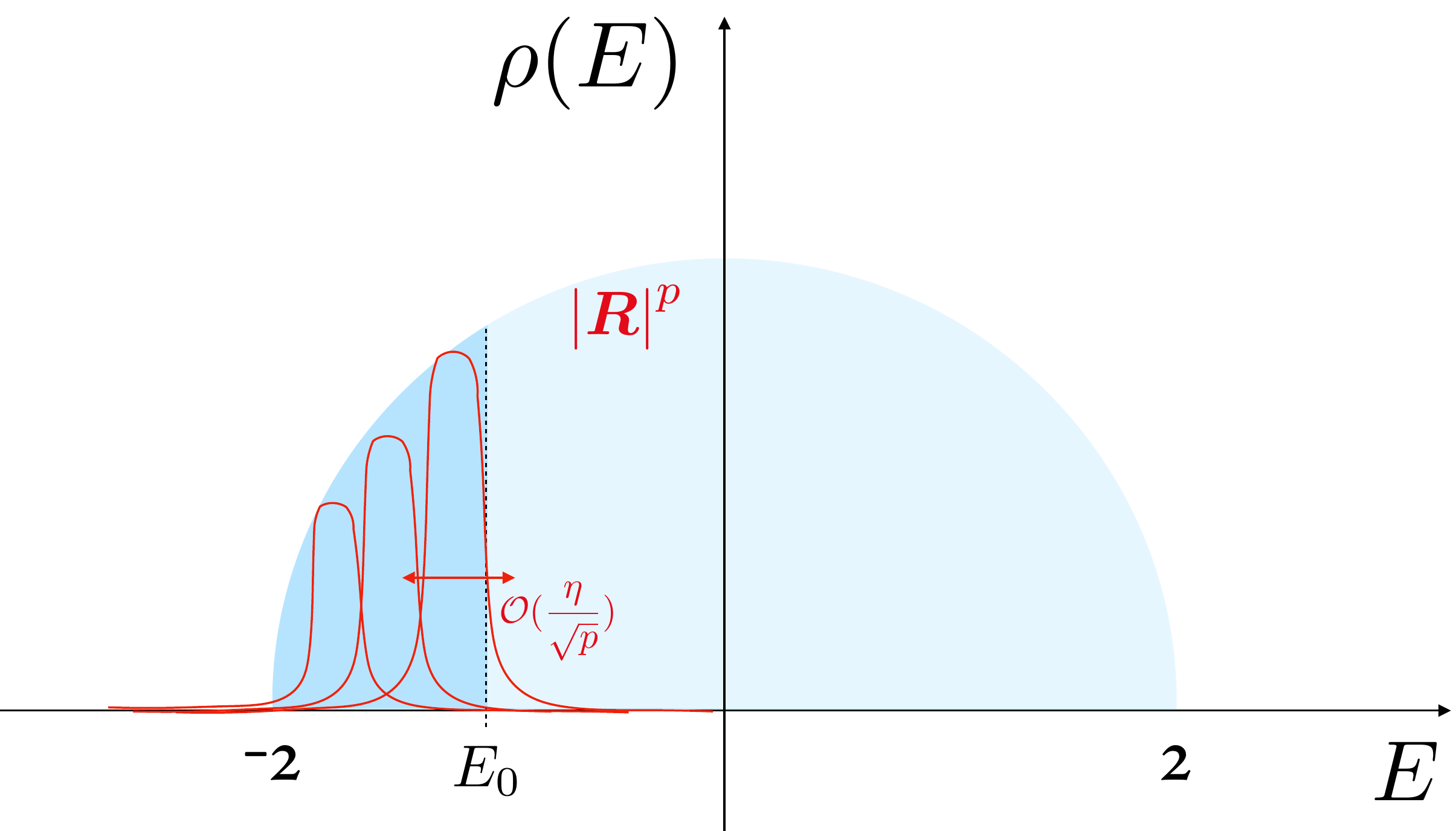}
    \caption{ Probing the low-energy states $E \le E_0$ via consecutive resolvent powers. 
    }
    \label{fig:low_energy_resolvent}
\end{figure}

In this section, we combine the bounds on the minimal eigenvalue (Theorem~\ref{thm:pnorms_pauli}) and the density of states (Theorem~\ref{thm:resolvent_con_good_lambda}) to obtain the low-energy density of states. This immediately implies applying phase estimation on the maximally mixed state returns a low-energy witness with a nonnegligible success probability.
\main*

\begin{proof}[Proof of Theorem~\ref{thm:low-energy-density}]
The resolvent probes the local density of states, and we are interested in controlling the \textit{integrated} density of states in an energy window. The idea is to construct a proxy for the low-energy projector by consecutive local resolvents (Figure~\ref{fig:low_energy_resolvent}). Consider 
\begin{align}
    \eta^p\sum_{2 \le \ell\bomega \le E_0} \labs{\vR_{\ell\bomega,\eta}}^p &=: \sum_{E} \ket{E}\bra{E} q_{E_0}(E). \\
    & =: \vQ(E_0)
    \quad \text{as a proxy for the projector} \quad \sum_{E} \ket{E}\bra{E} \indicator(E \le E_0)
\end{align}
at low-energy $E_0 := -(1-\epsilon/3) \cdot 2$. The resolvents are spaced appropriately
\begin{align}
\labs{\vR_{\ell\bomega,\eta}}^p:=\frac{1}{\labs{\vH - \ell \bomega + i\eta }^p} \quad \text{for} \quad \ell \in \mathbb{Z}, \quad \bomega:= \frac{\epsilon}{\sqrt{p}} \cdot 2, \quad \eta = \frac{\epsilon}{3} \cdot 2, \quad \text{and}\quad p = \lfloor c_1 \log(N)\rfloor.
\end{align}

We will calculate $\Expect  \btr \tvQ(E_0) $ for GUE, show that $\btr{\vQ(E_0)}$ for the Pauli string ensemble takes comparable values, and then extract the low-energy density of state.

\textbf{GUE values.} Recall the GUE resolvent values (Corollary~\ref{cor:resolvent_moment_GUE}) 
\begin{align}
    \Expect \btr\abs{\tvR_{\ell\bomega,\eta}}^p \ge S_p^p/2 \gtrsim \frac{1}{\eta^p}\sqrt{2-(\ell-1) \bomega} \cdot \frac{\epsilon}{\sqrt{p}}
\end{align}
where $S_{\ell\bomega,\eta,p}$ is an integral over the semicircle defined in Corollary~\ref{cor:resolvent_moment_GUE}. The second inequality uses the GUE estimate (Corollary~\ref{cor:resolvent_moment_GUE}) and imposes the simplifying constraint $N \ge const/ \epsilon^4$ such that $S_{\ell\bomega,\eta,p}/2 \ge const. N^{-1/2}$ even near the spectral edge. The third inequality evaluates $S_{\ell\bomega,\eta,p}$.

Apply a crude Riemann sum over the semi-circular density near the edge and drop constants to obtain bounds on the expected value
\begin{align}
    \Expect  \btr \tvQ(E_0) := \sum_{2 \le \ell\bomega \le E_0} \eta^p\Expect \btr\abs{\tvR_{\ell\bomega,\eta}}^p 
    = \Omega\L(\epsilon \sqrt{\epsilon} \R) \quad \text{using} \quad \int \sqrt{x} dx =\frac{2}{3} x\sqrt{x}\label{eq:int_semi-cricle}.
\end{align}

\textbf{Paulis string ensemble values.} Take a crude union bound over the local resolvents, we ensure all of them are at least half of the GUE expectation with high probability
\begin{align}
    \Pr\L(\btr{\vQ(E_0)}\le \frac{1}{2} \Expect  \btr{\tvQ(E_0)} \R) &\le \sum_{\ell \bomega\le E_0} \Pr\L(\btr\labs{\vR_{\ell \bomega,\eta}}^p\le \frac{1}{2} \Expect  \btr\abs{\tvR_{\ell \bomega,\eta}}^p \R) \\
     &\lesssim \sqrt{p} \e^{-\Omega(\log(N)^{1/3}) } \ll 1 \quad \text{(Claim)}\label{eq:Q_concentrion}.
\end{align}

\textbf{Extracting the density of states.} Assuming the claim holds, it remains to extract the spectral density from the low-energy subspace proxy $\vQ(E_0)$. We crudely estimate the function $q_{E_0}(E)$ by spliting the spectrum by a step function 
\begin{align}
    \btr \vQ(E_0) &= \frac{1}{N}\sum_E  q_{E_0}(E)\indicator(E \le E_0+\eta )+ q_{E_0}(E)\indicator(E \ge E_0+\eta )\\
    &\lesssim \frac{1}{N} \sum_E  \indicator(E \le E_0+\eta )+ \e^{-\Omega(p)} \indicator(E \ge E_0+\eta ).
\end{align}
The second inequality uses that $q_{E_0}(E)\lesssim 1$ and that $q_{E_0}(E)=\e^{-\Omega(p)}$ for $E \ge E_0+\eta$. 
Rearrange to bound the number of low-energy states
\begin{align}
    \frac{\# \L\{\ket{E}: E \le -(1-2\epsilon/3)\cdot 2\R\}}{N} = \sum_E\frac{\indicator(E \le E_0+\eta )}{N}    &\gtrsim \eta^p \btr\vQ(E_0) - \e^{-\Omega(p)}\\
    &\gtrsim  \epsilon \sqrt{\epsilon} - \e^{-\Omega(\log(N))} \quad \text{(with high probability~\eqref{eq:Q_concentrion})}\\
    &\gtrsim \epsilon \sqrt{\epsilon} \quad(\text{setting } \log(N) \ge const. \log(1/\epsilon)).
\end{align}
The second inequality uses concentration~\eqref{eq:Q_concentrion} and plugs in the GUE value $\Expect  \btr \tvQ(E_0)$~\eqref{eq:int_semi-cricle}, which is approximately the semicircle integral. The third inequality imposes additional constraints $\log(N) \ge const. \log(1/\epsilon)$, which can be combined with $N \ge const/ \epsilon^4$ by $N \ge \epsilon^{-c_1}$ for some constant $c_1$.  
Combine with the tail bound (Theorem~\ref{thm:pnorms_pauli}) for the operator norm $\norm{\vH_{PS}}$ for $\epsilon'=\epsilon/6$ (using that $\lambda_{\min}(\vH_{PS}) \ge -\norm{\vH_{PS}}$ and that $\frac{1-2/3\epsilon}{1+\epsilon/6} \ge 1-5\epsilon/6$) to obtain
that there are many low-energy states
\begin{align}
    \frac{\# \L\{\ket{E}: E \le (1-5\epsilon/6) \lambda_{\min}(\vH_{PS})\R\}}{N}  \ge \Omega(\epsilon^{3/2}) \quad \text{with high probability}\quad 1-\e^{-\Omega(\log(N)^{1/3})}
\end{align}
drawing from the Hamiltonian ensemble.  Consequently, performing phase estimation with energy resolution $\CO(\epsilon)$ on the maximally mixed state prepares a low-energy witness $\vrho$ such that 
\begin{align}
    \tr[\vrho \vH_{PS}] \le (1-5\epsilon/6)\lambda_{\min}(\vH_{PS}) \quad \text{with success probability}\quad  \Omega(\epsilon^{3/2}).
\end{align}
%
which costs $\poly(m,1/\epsilon)$ using any off-the-shelf quantum simulation algorithm such as Trotter~\cite{lloyd1996universal}, Qubitization~\cite{Low_2019_qubitize}, or qDrift~\cite{campbell2019random} for Hamiltonian simulation within phase estimation. We may amplify the success probability to $1-\epsilon/(6\sqrt{m})$ using $\CO(\epsilon^{-3/2} \log(\sqrt{m}/\epsilon))$ repeats. When all repetitions fail, output the maximally mixed state, which, even in the worst case, has energy upper bounded by $\sqrt{m}$. The resulting output state uses gate complexity
\begin{align}
    G  = (\text{number of repeats})\cdot (\text{QPE cost}) = \Omega(\poly(m,\frac{1}{\epsilon}))
\end{align}
and the energy of the output state\footnote{Strictly speaking, the process we illustrate is a quantum channel involving both quantum gates and classical randomness (i.e., repeating until success is observed). A fixed, deterministic circuit could be constructed by performing phase estimation on half of an input maximally entangled state (for which the reduced density matrix is maximally mixed), and performing fixed-point amplitude amplification \cite{yoder2014fixedpoint} to coherently boost the probability of success.} is at most $(1-\epsilon)\lambda_{\min}(\vH_{PS})$
which is the advertised result.

\textbf{Proof of Claim.} It remains to prove the claim~\eqref{eq:Q_concentrion}; this is where we invoke concentration for the resolvent (Theorem~\ref{thm:resolvent_con_good_lambda}). To reiterate, for each $\vR = \vR_{\ell \bomega, \eta}$, we want to show
\begin{align}
    \text{(WTS)}\quad \Pr\L(\btr\labs{\vR_{\ell \omega,\eta}}^p
    \le \frac{1}{2} \Expect  \btr\abs{\tvR_{\ell \omega,\eta}}^p \R) &\le \e^{-\Omega(\log(N)^{1/3})} \quad \text{for each} \quad \ell, \bomega \label{eq:WTS}
\end{align}
for parameters $\eta = \frac{\epsilon}{3} \cdot 2 $, $p = \lfloor c_1 \log(N)\rfloor$, and
\begin{align}
q &= \theta (\log(N))\\
m &= \Omega(\frac{\log(N)^5}{\epsilon^4}).
\end{align}

For each $\ell,\bomega$, shorthand $\vR = \vR_{\ell \bomega, \eta}$ and rearrange
\begin{align}
    \Pr\L(\btr\labs{\vR}^p
    \le \frac{1}{2} \Expect  \btr\abs{\tvR}^p \R) 
    & \le  \Pr\L(\labs{\btr\labs{\vR}^p - \Expect  \btr\abs{\vR}^p} \ge \frac{1}{2}\Expect  \btr\abs{\tvR}^p - \labs{\Expect  \btr\abs{\tvR}^p -\Expect  \btr\abs{\vR}^p} \R)\\
    &\le \L(\frac{\labs{\btr[\labs{\vR}^p-\labs{\vR'}^{p} ]}_q}{\frac{1}{2}\Expect  \btr\abs{\tvR}^p - \labs{\Expect  \btr\abs{\tvR}^p -\Expect  \btr\abs{\vR}^p}} \R)^{q}.
\end{align}
The last inequality is Markov's. We proceed in bounding the denominator
\begin{align}
    \frac{1}{2}\Expect  \btr\abs{\tvR}^p - \labs{\Expect  \btr\abs{\tvR}^p -\Expect  \btr\abs{\vR}^p} &= \frac{1}{2}\vertiiismall{\tvR}_{p}^p (1 -2 \labs{1 - \frac{\vertiiismall{\vR}_{p}^p}{\vertiiismall{\tvR}_{p}^p}})\\
    &\ge \frac{1}{2}\vertiiismall{\tvR}_{p}^p \L( 1- \CO(\e^{\CO(1/\log(N))}-1)\R).\label{eq:x_RoverR}
\end{align}
The inequality uses $\labs{1-(1+x)^p} \le \labs{1- \e^{\labs{x}p}}$ for
\begin{align}
     x := \labs{\frac{\vertiii{\vR}_{p}}{\vertiiismall{\tvR}_{p}}-1} = \frac{\labs{\vertiii{\vR}_{p}-\vertiiismall{\tvR}_{p}}}{\vertiiismall{\tvR}_{p}} &\lesssim \frac{p^3+p^4/m}{m\eta^4} (1+\frac{p^{3/4}}{N^{1/2}}) =\CO(\frac{1}{p\log(N)}),
\end{align}
which uses comparison of resolvent moments (Theorem~\ref{thm:universality_moments}), that $\vertiii{\vH_{GUE}}_{p} \le 2 \cdot ( 1+ \frac{(p/2)^{3/4}}{\sqrt{N}})$, that $\vertiiismall{\tvR}_{p} \ge 1/2\eta$, and the values of parameters $q,p, m, \eta$.

For the numerator, we evaluate Theorem~\ref{thm:resolvent_con_good_lambda} 
\begin{align}
    \abs{\btr\abs{\vR}^p-\btr\abs{\vR'}^{p}}_q
    &\le \labs{\btr\labs{\vR}^p}_q \L(\frac{\sqrt{q}p^2}{m\eta^2}+ \frac{\sqrt{q}p}{\eta\sqrt{N}}+\frac{qp m^{1/q-1}}{\eta}\R)\\
    &\le \L(\labs{\btr\abs{\tvR}^p}_q^{1/p} + \CO\L( \frac{p^3q^3 + p^4q^4/m }{m\eta^{5}} (1+\frac{(2qp)^{3/4}}{\sqrt{N}})^4\R) \R)^p \cdot \L(\frac{\sqrt{q}p^2}{m\eta^2}+ \frac{\sqrt{q}p}{\eta\sqrt{N}}+\frac{qp m^{1/q-1}}{\eta}\R)\\
    &\lesssim \L(\Expect \btr\abs{\tvR}^p +  \CO( \frac{\log(N)^{1+1/6}}{\eta^{p+1}\sqrt{N}}) \R) \frac{\epsilon^2}{\log(N)^{3-1/6}}.
\end{align}
The first inequality evaluates the second-moment quantity 
\begin{align}
    \sigma_*(\vA_i)^2 = \frac{1}{m}\sup_{\norm{\ket{u}}= \norm{\ket{v}} = 1} \BE_{\vA_i} \labs{\bra{u}\vA_i\ket{v}}^2 = \frac{1}{m} \sup_{\norm{\ket{u}}=\norm{\ket{v}} = 1}\bra{u}\btr[ \ket{v}\bra{v}] \ket{u} = \frac{1}{mN}.
\end{align}

The second inequality compares $\labs{\btr\labs{\vR}^p}_q^{1/p}$ with $\labs{\btr\labs{\tvR}^p}_q^{1/p}$ (Theorem~\ref{thm:expected_resolvent_pq}). 
The last inequality plugs in the values of $q, p, \eta$ in terms of $N,\epsilon$ and uses concentration for Gaussian resolvent $\labs{\btr\abs{\tvR}^p}_q \le \Expect \btr\abs{\tvR}^p +\labs{\btr\abs{\tvR}^p-\Expect \btr\abs{\tvR}^p}_q \le \Expect \btr\abs{\tvR}^p +  \CO( \frac{\sqrt{q} p}{\eta^{p+1}\sqrt{N}})$ (Proposition~\ref{prop:Gaussian_resolvent_concentration}).
We obtain the advertised claim~\eqref{eq:WTS} 
\begin{align}
    \Pr\L(\btr\labs{\vR}^p
    \le \frac{1}{2} \Expect  \btr\abs{\tvR}^p \R) \le \L(\frac{\L(\Expect \btr\abs{\tvR}^p +  \CO( \frac{\log(N)^{1+1/6}}{\eta^{p+1}\sqrt{N}}) \R) \frac{\epsilon^2}{\log(N)^{3-1/6}}}{\frac{1}{2}\vertiiismall{\tvR}_{p}^p (1- \CO(\e^{\CO(1/\log(N))}-1)).}\R)^q &\le \e^{-\Omega(\log(N)^{1/3})}\\
    &\le \e^{-c_3 n^{1/3}}.
\end{align}
The second inequality uses that $\Expect \btr\abs{\tvR}^p = \Expect \btr\abs{\tvR_{\ell\bomega,\eta}}^p \gtrsim \sqrt{\epsilon/\sqrt{p}}^{3}\gtrsim \sqrt{\epsilon/\sqrt{\log(N)}}^{3}$ so that the base is smaller than one for large enough $N$. The last inequality introduces an explicit constant. This concludes the proof of Claim~\eqref{eq:Q_concentrion}. \end{proof}

\section{Circuit size lower bounds for low-energy witness}\label{sec:circuit_lower_bounds}
The GUE is unitarily invariant and thus any subspace of low-energy eigenvectors will be Haar-random. Consequently, preparing a low-energy state of the GUE necessarily requires large circuit complexity. Does enough of this randomness carry over to sparse Hamiltonian ensembles such that a similar statement can be made?
In this section, we prove that with high probability over the ensemble, any circuit that prepares a low-energy witness necessarily has a large circuit size. To show Theorem~\ref{thm:circui_lower_main}, we split into the following two statements. We first calculate the expected norm.
\begin{prop} For the Pauli string ensemble given in Eq.~\eqref{eqn:Pauli-string-ensemble}, 
$
    \Expect  \lambda_{\min}(\vH_{PS}) \le - 1/2.
$
\end{prop}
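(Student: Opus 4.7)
The plan is to combine the sign symmetry of $\vH_{PS}$ with a direct variance computation and the high-probability operator-norm bound from Theorem~\ref{thm:pnorms_pauli}. First, since the joint distribution is invariant under the coordinate-wise negation $(r_j)\mapsto(-r_j)$, I have $\vH_{PS}\stackrel{d}{=}-\vH_{PS}$, which gives $\Expect\lambda_{\min}(\vH_{PS})=-\Expect\lambda_{\max}(\vH_{PS})$; the claim therefore reduces to showing $\Expect\lambda_{\max}(\vH_{PS})\ge 1/2$.

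Next, using $r_j^2=1$ and $\vsigma_j^2=\vI$ for every Pauli string, I compute $\Expect\vH_{PS}^2 = m^{-1}\sum_j \Expect[r_j^2\vsigma_j^2] = \vI$. Scalar Jensen for the convex spectral function $\vX\mapsto\lambda_{\max}(\vX)$ then gives $\Expect\|\vH_{PS}\|^2 = \Expect\lambda_{\max}(\vH_{PS}^2)\ge\lambda_{\max}(\Expect\vH_{PS}^2)=1$. Combining with the pointwise inequality $\|\vH\|^2\le\lambda_{\max}^2+\lambda_{\min}^2$ and the symmetry-induced equality $\Expect\lambda_{\max}^2=\Expect\lambda_{\min}^2$ upgrades this to $\Expect\lambda_{\max}^2\ge 1/2$.

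To convert this second-moment bound to a first-moment bound, I would invoke the envelope $\|\vH_{PS}\|\le 2(1+o(1))$ from Theorem~\ref{thm:pnorms_pauli} (with the deterministic fallback $\|\vH_{PS}\|\le\sqrt{m}$ on the exponentially rare complement). The pointwise bound $\lambda_{\max}^2\le\|\vH_{PS}\|\cdot\lambda_{\max}$, integrated with this envelope, yields $\Expect\lambda_{\max}^2\le(2+o(1))\Expect\lambda_{\max}+o(1)$, which already gives $\Expect\lambda_{\max}\gtrsim 1/4$ after rearrangement. To reach the sharper constant $1/2$, I would use Theorem~\ref{thm:pnorms_pauli} at a moderately large $p$ (logarithmic in $N$): the GUE comparison pins $\vertiii{\vH_{PS}}_p$ close to $2$, and combining with the $p$-th power envelope inequality $\|\vH\|^p\le M^{p-1}\|\vH\|$ for $M=2(1+o(1))$ forces $\Expect\|\vH_{PS}\|\ge 2-o(1)$. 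The elementary $\|\vH\|\le\lambda_{\max}+|\lambda_{\min}|$ together with the sign symmetry then delivers $\Expect\lambda_{\max}\ge\Expect\|\vH_{PS}\|/2\ge 1-o(1)$.

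The main obstacle is precisely this last sharpening: the naive variance bound $\Expect\|\vH_{PS}\|^2\ge 1$ underestimates the true scale (which is approximately $4$, because the Pauli string ensemble mimics a semicircular spectrum with edges at $\pm 2$). Recovering the correct scale requires the higher-moment comparison with the GUE from Theorem~\ref{thm:pnorms_pauli}, capturing that the spectral mass concentrates near the semicircular edges rather than near zero; once this is in hand, the symmetry reduction and the envelope inequality close the proof.
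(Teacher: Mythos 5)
Your symmetry reduction and the variance computation $\Expect \vH_{PS}^2 = \vI$ are fine, but the argument does not actually deliver the constant $1/2$. By your own accounting, the elementary part (Jensen for $\lambda_{\max}$, the pointwise bound $\lambda_{\max}^2 \le \norm{\vH}\lambda_{\max}$, and the envelope $\norm{\vH_{PS}} \le 2(1+o(1))$) only yields $\Expect \lambda_{\max} \gtrsim 1/4$. The proposed sharpening rests on the claim that Theorem~\ref{thm:pnorms_pauli} ``pins $\vertiii{\vH_{PS}}_p$ close to $2$'' for $p \sim \log N$, but that theorem (together with Theorem~\ref{fact:GUE_pth_moment}) only gives a comparison to the GUE plus an \emph{upper} bound on the GUE moments; to force $\Expect\norm{\vH_{PS}} \ge 2-o(1)$ you also need a matching \emph{lower} bound $\vertiii{\vH_{GUE}}_p \ge 2-o(1)$, which requires edge-density information (e.g., Fact~\ref{lem:GUE_cdf} or Corollary~\ref{cor:resolvent_moment_GUE}) and a careful bookkeeping of errors at the level $o(1/p)$ — none of which is supplied as written. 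Moreover, even if patched, this route only works in the regime where Theorem~\ref{thm:pnorms_pauli} applies ($m \ge c_1 n^3/\epsilon^4$, $m \le 2^{2n}$, $n$ large), whereas the proposition is a clean nonasymptotic statement valid for \emph{every} $m$ and $n$, and it is invoked in the circuit lower bound (Theorem~\ref{thm:circui_lower_main}) with no lower bound on $m$. There are also small sign caveats: $\lambda_{\max}^2 \le \norm{\vH}\lambda_{\max}$ and $\norm{\vH}\le \lambda_{\max}+\labs{\lambda_{\min}}$ need $\lambda_{\max} \ge 0$, which fails on (exponentially rare) events and must be handled explicitly.

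For contrast, the paper's proof is two lines and uses none of the universality machinery: by the sign symmetry $\vH_{PS} \sim -\vH_{PS}$ one has $\Expect\labs{\lambda_{\min}} = \Expect\labs{\lambda_{\max}}$, and pointwise $\labs{\lambda_{\min}}+\labs{\lambda_{\max}} \ge \norm{\vH_{PS}} \ge \norm{\vH_{PS}}_2$, where the normalized Schatten $2$-norm has second moment $\Expect \btr \vH_{PS}^2 = 1$; this gives $2\Expect\labs{\lambda_{\min}} \ge 1$ with no constraint on $m$ or $n$. The moral is that only a lower bound at the scale of the \emph{normalized} second moment is needed here, not the semicircle edge value $2$; importing Theorem~\ref{thm:pnorms_pauli} both overshoots the difficulty and narrows the range of validity. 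If you want to salvage your outline, keep the symmetry step and replace everything after it by the comparison $\norm{\vH} \ge \norm{\vH}_2$ together with the second-moment identity.
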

\begin{proof}
By symmetry of the ensemble,
\begin{align}
   2\Expect \labs{ \lambda_{\min}(\vH_{PS})}= \Expect  \labs{ \lambda_{\min}(\vH_{PS})} +  \Expect  \labs{ \lambda_{\max}(\vH_{PS})} \ge \Expect  \norm{\vH_{PS}} \ge \Expect  \normp{\vH_{PS}}{2} = 1.
\end{align}
The last inequality holds for the normalized Schatten $p$-norms, defined in Eq.~\eqref{eq:define_pnorms}.
\end{proof}

\begin{lem}[small circuit fails to give low-energy states]\label{lem:counting_lower_bounds}
Fix a circuit architecture of two-qubit gates with size $G$ with the initial state $\ket{0}$ and consider the family of all reachable states $\mathrm{Circ}(G)$.
Suppose $m \le \epsilon^2 N^2$, then, there are constants $c_1,c_2$ such that 
\begin{align}
    G \le c_1 \epsilon \sqrt{m} \log^{-1}(m) \quad \text{implies}\quad 
    \Pr\L[ \inf_{\ket{\psi} \in \mathrm{Circ}(G)} \bra{\psi}\vH_{PS}\ket{\psi} \le -\epsilon \R] \le \exp \L( -c_2 \epsilon \sqrt{m} \R). 
\end{align}
\end{lem}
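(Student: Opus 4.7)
The plan is a textbook net-plus-concentration argument, with the one nonroutine choice being a variance-sensitive (Bernstein) rather than a worst-case (Hoeffding) bound on the single-state energy. First, I would build an $\delta$-net $\CN \subset \mathrm{Circ}(G)$ in the Euclidean distance on normalized state vectors. Approximating each two-qubit gate to precision $\delta/G$ via the standard volumetric cover of $U(4)$ of cardinality $(G/\delta)^{\CO(1)}$ and propagating the error gate by gate yields $\log |\CN| = \CO(G \log(G/\delta))$. Since $|\bra{\psi}\vH_{PS}\ket{\psi} - \bra{\phi}\vH_{PS}\ket{\phi}| \le 2\delta \norm{\vH_{PS}}\le 2\delta\sqrt{m}$ whenever $\norm{\ket{\psi}-\ket{\phi}}\le \delta$ (using the trivial operator-norm bound on $\vH_{PS}$), I would set $\delta = \epsilon/(8\sqrt m)$, so $\log|\CN| = \CO(G\log(Gm/\epsilon))$.

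Next, for each fixed $\ket{\phi}$ in the net, I write
\[
\bra{\phi}\vH_{PS}\ket{\phi} \;=\; \sum_{j=1}^m \frac{r_j}{\sqrt m}\bra{\phi}\vsigma_j\ket{\phi}
\;=:\; \sum_{j=1}^m X_j,
\]
an independent centered sum with $|X_j|\le 1/\sqrt m$. The Pauli twirl identity $\frac{1}{4^n}\sum_{\vsigma\in P}\vsigma\vA\vsigma = \tr(\vA)/N \cdot \vI$ applied to $\vA = \ket{\phi}\bra{\phi}$ gives $\Expect|\bra{\phi}\vsigma_j\ket{\phi}|^2 = 1/N$, so the total variance is $\sum_j \Expect X_j^2 = 1/N$. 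Bernstein's inequality (over the joint randomness in $r_j$ and $\vsigma_j$) then yields
\[
\Pr\!\left[\bra{\phi}\vH_{PS}\ket{\phi}\le -t\right]
\;\le\;\exp\!\left(-\frac{t^2/2}{1/N+t/(3\sqrt m)}\right).
\]
Under the hypothesis $m\le \epsilon^2 N^2$, we have $1/N\le\epsilon/\sqrt m$, so for $t=\epsilon/2$ the denominator is dominated by the $t/(3\sqrt m)$ bound term and the tail collapses to $\exp(-c\epsilon\sqrt m)$ for an absolute constant $c>0$.

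Third, a union bound over $\CN$ combined with the $\epsilon/2$ covering-error budget yields
\[
\Pr\!\left[\inf_{\ket{\psi}\in\mathrm{Circ}(G)}\bra{\psi}\vH_{PS}\ket{\psi}\le -\epsilon\right]
\;\le\;|\CN|\cdot\exp(-c\epsilon\sqrt m)
\;\le\;\exp\!\bigl(C_1 G\log(Gm/\epsilon)-c\epsilon\sqrt m\bigr),
\]
and taking $G\le c_1\epsilon\sqrt m/\log m$ with $c_1$ small enough absorbs the net entropy to leave the advertised $\exp(-c_2\epsilon\sqrt m)$ tail.

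The main obstacle is resisting the temptation to apply Hoeffding at the single-state step: the uniform bound $|\bra{\phi}\vsigma_j\ket{\phi}|\le 1$ would only yield a $\exp(-\Omega(\epsilon^2 m))$ tail, which after union bounding gives a much weaker circuit lower bound of the form $G = \tilde o(\epsilon^2 m)$ rather than $\tilde o(\epsilon\sqrt m)$. The correct exponent only appears once one exploits the \emph{average} $1/N$ size of $|\bra{\phi}\vsigma_j\ket{\phi}|^2$ via Bernstein, and the hypothesis $m\le \epsilon^2 N^2$ is precisely what places the variance parameter $1/N$ below the Bernstein crossover $\epsilon/\sqrt m$ so that the boundedness term, not the variance term, controls the tail. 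A secondary bookkeeping point is that we intentionally avoid invoking the sharp bound $\norm{\vH_{PS}}\lesssim 1$ from Theorem~\ref{thm:pnorms_pauli} inside the net radius $\delta$; the crude $\norm{\vH_{PS}}\le\sqrt m$ already suffices because the slack is absorbed by the $\log m$ factor, keeping this argument independent of the moment-comparison machinery.
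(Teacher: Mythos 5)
Your proposal is correct and follows essentially the same route as the paper's proof: compute the per-string variance $\Expect|\bra{\phi}\vsigma_j\ket{\phi}|^2 = 1/N$ via the Pauli second moment, apply Bernstein's inequality to each fixed state, and union-bound over an $\CO(\epsilon/\sqrt{m})$-net of $\mathrm{Circ}(G)$ of entropy $\CO(G\log(G\sqrt{m}/\epsilon))$, with the hypothesis $m \le \epsilon^2 N^2$ ensuring the $\epsilon\sqrt{m}$ term (rather than $\epsilon^2 N$) governs the exponent. The only cosmetic difference is that you cover each two-qubit gate by a volumetric net of $U(4)$ whereas the paper discretizes single-qubit rotation angles in a CNOT decomposition; both give the same net cardinality up to constants.
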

\begin{proof}
The proof uses concentration inequality for any nonrandom state and bootstraps for an epsilon net by the union bound. Consider the random variable associated with a fixed pure state 
\begin{align}
    \braket{\vH_{PS}}_{\psi} = \sum_{i=1}^m a_i \quad \text{where}\quad \braket{\vH_{PS}}_{\psi}:= \bra{\psi} \vH_{PS} \ket{\psi} \quad \text{and} \quad a_i:= \frac{1}{\sqrt{m}}\braket{\vA_i}_{\psi}.
\end{align}
We use the symmetry of the Pauli string ensemble $\vH_{PS} \sim - \vH_{PS}$ to consider the more intuitive maximization problem. We also drop the subscript $\vH = \vH_{PS}$ for simplicity.

\textbf{Variance of a Pauli string.} First, we calculate the variance of a random Pauli string with arbitrary fixed input
\begin{align}
    \BE_{\vA_i} \labs{\braket{\vA_i}_{\psi}}^2 = \BE_{\vA_i} \bra{\psi} \vA_i\ket{\psi}\bra{\psi} \vA_i\ket{\psi} = \frac{1}{N}\bra{\psi}\tr[\ket{\psi}\bra{\psi}]\ket{\psi} = \frac{1}{N} \ll 1.
\end{align}
The second equality evaluates the second moment of the Pauli string ensemble. Notice that the variance is exponentially smaller than the maximal value of the random variable. In other words, a fixed input state is very unlikely to ``align'' with the random Pauli string $\vA_i$. Intuitively, the random Pauli strings are very \textit{noncommutative} and thus cannot be simultaneously diagonalized in a preferred basis. 

\textbf{Variance of the total energy.} 
From the variance of the individual terms, we may obtain a tail bound for the sum via Bernstein's inequality 
\begin{align}
    \Pr[ \braket{\vH}_{\psi}  \ge t] \le \exp\L( \frac{-t^2/2}{v+Lt/3}\R)\quad \text{where}\quad v := \sum_{i}^m \Expect [a_i^2] = \frac{1}{N} \quad \text{and}\quad a_i \le L := 1/\sqrt{m}. \label{eq:one_state_concentration}
\end{align}
In other words, any deterministic input state (that does not correlate with the Hamiltonian) is very likely to have small energies.

\textbf{Union bound over an epsilon net}. By a union-bound, good concentration implies that the energies must be \textit{simultaneously} small for a large family of deterministic input states, specifically, the input states drawn from an epsilon net over a small circuit. For a circuit consisting of $G$ gates, there exists an 
\begin{align}
   \left(\frac{\epsilon}{2\sqrt{m}}\right)\text{-net}\quad \{\ket{\psi_i}\} \quad \text{for} \quad \mathrm{Circ}(G) \quad \text{with cardinality}\quad \#\{\ket{\psi_i}\}  \le \exp\L(\CO( G \log(G\sqrt{m}/\epsilon)) \R)\,.
\end{align}
This is justified as follows. Any circuit with $G$ two-qubit gates is equivalently given by a product of fixed CNOT gates interspersed with $KG$ single-parameter single-qubit rotation gates by certain angles, with $K=\CO(1)$. If we cast a $(\epsilon/(2\sqrt{m}KG))$-net over the interval $[0,2\pi]$ for each of these $KG$ rotation angles, the set of circuits we generate will form an $\epsilon/(2\sqrt{m})$-net over states in $\mathrm{Circ}(G)$, and the cardinality of the set is at $(4\pi\sqrt{m}KG/\epsilon)^{KG}$. One of the elements of this net is guaranteed to approximate the state $\ket{\psi} \in \mathrm{Circ}(G)$ that achieves the supremum of $\braket{\vH}_{\psi}$ up to error $\epsilon/(2\sqrt{m})$, and since $\norm{\vH} \leq \sqrt{m}$ holds, we have that $\sup_{\ket{\psi}} \braket{\vH}_{\psi} \leq \max_{i} \braket{\vH}_{\psi_i} + \epsilon/2$.
We have therefore reduced the supremum over the state on a size-$G$ circuit to the maximum over the $(\epsilon/(2\sqrt{m}))$-net, where the union bound applies~\eqref{eq:one_state_concentration}
\begin{align}
    \Pr\L[ \sup_{\ket{\psi} 
    \in \mathrm{Circ}(G)} \braket{\vH}_{\psi} \ge \epsilon \R] &\le \Pr\L[ \max_{i} \braket{\vH}_{\psi_i} \ge \frac{\epsilon}{2} \R] \\
    &\le \#\{\ket{\psi_i}\} \cdot \exp\L( \frac{-\epsilon^2 /8}{1/N+\epsilon/6\sqrt{m}}\R) \le \#\{\ket{\psi_i}\} \cdot \exp\L(-\min(8\epsilon\sqrt{m}/8,\epsilon^2N/8)\R).
\end{align}
Therefore, there exist constants $c_1,c_2$ such that 
\begin{align}
    G\le c_1 \min( \epsilon \sqrt{m}, \epsilon^2 N )\cdot \log^{-1}(m) \quad \text{implies}\quad 
    \Pr\L[ \sup_{\ket{\psi} \in \mathrm{Circ}(G)} \braket{\vH}_{\psi} \ge \epsilon \R] \le \exp \L( -c_2 \min( \epsilon \sqrt{m}, \epsilon^2 N) \R). 
\end{align}
Plug in the assumption that $m \le \epsilon^2 N^2$ to obtain the advertised result. 
\end{proof}

We suspect the true circuit complexity to be $\Omega(m)$, but the current union bound argument can only give $\tilde{\Omega}(\sqrt{m})$. The concentration inequality needs to handle the event when the same Pauli string occurs $\Omega(\sqrt{m})$-times.

Still, we obtain a growing circuit size lower bound $\Omega(\sqrt{m})$ by an elementary argument. In retrospect, it crucially depends on the \textit{noncommutativity} of Pauli strings: the variance is suppressed by dimension. 
In contrast, the argument only gives $\Omega(n)$ circuit size lower bounds (which is useless) for random complete $k$-local Hamiltonians for fixed $k$. Concretely, let $P_k$ be the set of Pauli strings of weight $k$ and consider the ensemble $\vH = \sum_{ \vsigma\in P_k} r_{\vsigma} \vsigma$ where $r_{\vsigma}$ are uniform random signs. Then,  as in the proof of Lemma \ref{lem:counting_lower_bounds}, define $a_{\vsigma} = r_{\vsigma}\braket{\vH_{\vsigma}}_{\psi}$, and compute (viewing $k = \CO(1)$)
\begin{align}
    v = \sum_{\vsigma \in P_k} \Expect [a_{\vsigma}^2] = \Theta (|P_k|) = \Theta(n^k),\quad \text{and}\quad a_{\vsigma} \le L = 1.
\end{align}
There, the variance is much larger, and the optimum is roughly $\norm{\vH} = \Theta ( \sqrt{v n} )$. Plugging into the union-bound yields
\begin{align}
        \Pr\L[ \sup_{i} \braket{\vH}_{\psi_i} \ge \frac{\epsilon}{2} \sqrt{vn}\R] \le \#\{\ket{\psi_i}\} \cdot \exp\L( \Theta (\frac{-\epsilon^2 vn /8}{v+\epsilon\sqrt{vn}/6} )\R) \sim \#\{\ket{\psi_i}\} \e^{-\Theta ( \epsilon^2 n )}. 
\end{align}
The union bound only supports size-$\CO(n)$ circuits, roughly the circuit size of product states.

\section{Missing proofs}
\label{sec:missing_proofs}
In this section, we collect missing proofs. 
\subsection{Proof of Fact~\ref{fact:complex_sign_match_GUE}}\label{sec:proof_complex_match}
\begin{proof}
The first and third moments vanish for both sets of matrices $\vA_i$ and $\tvA_i$. We calculate the second moment
\begin{align}
    \Expect [\vA_i\otimes \vA_i] &= \frac{1}{2m} \Expect  (\vD\vP + \vP^{\dagger}\vD^{\dagger})\otimes (\vD\vP + \vP^{\dagger}\vD^{\dagger}).\\
    & = \frac{1}{2m} \Expect  \vD\vP\otimes \vP^{\dagger}\vD^{\dagger} +  \vP^{\dagger}\vD^{\dagger}\otimes \vD\vP = \Expect [ \vH_{GUE}\otimes \vH_{GUE} ].
\end{align}
The first equality uses that $\Expect [\vD \otimes \vD^{\dagger}]=0$. The second inequality is that
\begin{align}
    \Expect [ \vD\vP\otimes \vP^{\dagger}\vD^{\dagger} ]&= \sum_i \Expect \ket{i} \bra{i}\vP \otimes \vP^{\dagger}\ket{i} \bra{i} = \frac{1}{N}\sum_i\sum_j \ket{i} \bra{j} \otimes \ket{j} \bra{i}, 
\end{align}
which is proportional to the GUE value. This is the advertised result.
\end{proof}

\section{Difficulty for canceling higher moments via interpolation}
\label{sec:difficult_higher}
In this section, we give a heuristic reason why interpolation-based methods seem difficult to utilize higher-moment-matching. Again, considers a set of independent matrices $\vA_i$ whose low moments match that of some idealized matrices $\tvA_i$
\begin{align}
    \Expect  \vA_i= 0 \quad \text{and}\quad\Expect  \vA_i^{\otimes k} = \Expect  \tvA_i^{\otimes k} \quad \text{for each}\quad k = 1,\dots,  t \quad \text{and}\quad i = 1,\dots,  m. 
\end{align}
Suppose we construct an interpolation path
\begin{align}
    \vS(t) : = f(t)\sum_i \vA_i + g(t)\sum_i \tvA_i\label{eq:f_g_interpolate}
\end{align}
Then, consider the expected $p$-th moment and expand in powers of $\vA_1$ 
\begin{align}
    \Expect \vS(t)^p &= \Expect ( f(t)\vA_1 + g(t)\tvA_1 )\cdots ( f(t)\vA_1 + g(t)\tvA_1 )\cdots \\
    & + \Expect( f(t)\vA_1 + g(t)\tvA_1 )\cdots ( f(t)\vA_1 + g(t)\tvA_1 )\cdots ( f(t)\vA_1 + g(t)\tvA_1 )+\cdots .
\end{align}
Suppose the second and third moments do not vanish. If we wish the time-derivative to vanish, then we generally need
\begin{align}
    f(t)^2\Expect \vA_1\otimes \vA_1 + g(t)^2 \Expect \tvA_1\otimes \tvA_1 = Const. \implies 
    f(t)^2 + g(t)^2 =1 \\
    f(t)^3\Expect \vA_1\otimes \vA_1\otimes \vA_1 + g(t)^3 \Expect \tvA_1\otimes \tvA_1\otimes \tvA_1 = Const. \implies 
    f(t)^3 + g(t)^3 =1.
\end{align}
There are only discrete solutions to both algebraic equations and no continuous path can be established between $f = 0$ and $f =1$. Indeed, the standard interpolant $f(t) = \sqrt{1-t}, g(t)=\sqrt{t}$ only cancels out the second moments. Therefore, if we hope interpolation methods capture higher moments matching conditions, we need to go beyond the form of~\eqref{eq:f_g_interpolate}. On the contrary, the Lindeberg principle appears more natural for this task. 

\end{document}